\documentclass{article}

\usepackage{amsmath,amssymb,amsthm,fullpage,mathrsfs,pgf,tikz,caption,subcaption,mathtools,mathabx}
\usepackage[ruled,vlined,noresetcount]{algorithm2e}
\usepackage{amsmath,amssymb,amsthm,mathtools}
\usepackage{thmtools}
\usepackage[utf8]{inputenc} 
\usepackage[T1]{fontenc}    
\usepackage{hyperref}       
\usepackage{url}            
\usepackage{booktabs}       
\usepackage{amsfonts}       
\usepackage{nicefrac}       
\usepackage{microtype}      
\usepackage{times}
\usepackage{bbm}
\usepackage{enumitem}
\usepackage{xcolor}
\usepackage{cleveref}
\usepackage{bm}
\usepackage{float}
\usepackage[margin=1in]{geometry}
\usepackage{tikz-cd}

\newtheorem{theorem}{Theorem}[section]

\newtheorem{lemma}[theorem]{Lemma}
\newtheorem{definition}[theorem]{Definition}

\newtheorem{example}[theorem]{Example}
\newtheorem{remark}[theorem]{Remark}

\newtheorem{claim}[theorem]{Claim}



\newcommand{\FF}{\mathbb{F}}


\newcommand{\ZZ}{\mathbb{Z}}

\newcommand{\cC}{\mathcal{C}}
\newcommand{\cD}{\mathcal{D}}
\newcommand{\cF}{\mathcal{F}}
\newcommand{\cG}{\mathcal{G}}
\newcommand{\cN}{\mathcal{N}}



\newcommand{\ol}{\overline}

\newcommand{\interior}[1]{%
  {\kern0pt#1}^{\mathrm{o}}%
}

\renewcommand{\>}{\rangle}

\let\ker\relax
\let\im\relax
\DeclareMathOperator{\ker}{ker}
\DeclareMathOperator{\im}{im}

\DeclareMathOperator{\Span}{span}
\DeclareMathOperator{\CCZ}{CCZ}

\DeclareMathOperator{\Aut}{Aut}

\DeclareMathOperator{\poly}{poly}
\DeclareMathOperator{\tr}{tr}

\DeclareMathOperator{\type}{type}
\DeclareMathOperator{\res}{res}
\DeclareMathOperator{\cores}{co-res}
\DeclareMathOperator{\Enc}{Enc}

\def\F2{\mathbb{F}_2}

\renewcommand{\a}{\alpha}
\renewcommand{\b}{\beta}
\renewcommand{\c}{\gamma}
\newcommand{\z}{\zeta}
\newcommand{\id}{\textnormal{id}}

\title{Transversal non-Clifford gates for quantum LDPC codes on sheaves}
\author{Ting-Chun Lin\thanks{Department of Physics, University of California San Diego, CA, and Hon Hai Research Institute, Taipei, Taiwan. Email: \texttt{til022@ucsd.edu}.}}

\begin{document}

\sloppy

\maketitle

\begin{abstract}
  A major goal in quantum computing
    is to build a fault-tolerant quantum computer.
  One approach involves quantum low-density parity-check (qLDPC) codes
    that support transversal non-Clifford gates.
  In this work,
    we provide a large family of such codes.
  The key insight is to interpret the logical operators of qLDPC codes
    as geometric surfaces
    and use the intersection number of these surfaces to define the non-Clifford operation.
  At a more abstract level,
    this construction is based on defining the cup product on the chain complex induced from a sheaf.
\end{abstract}

\section{Introduction}

A major goal in quantum computing
  is to build a fault-tolerant quantum computer.
Without fault tolerance,
  running quantum algorithms (quantum circuits) directly
  leads to error propagation,
  and the resulting output becomes unreliable.
In contrast, a fault-tolerant version
  can achieve arbitrarily low error,
  provided that every gate has error below a certain threshold.
It has been proven that any quantum algorithm
  can be implemented in a fault-tolerant manner.
This result is known as the threshold theorem
  and was first demonstrated using concatenation of quantum codes
  \cite{aharonov1997fault,kitaev1997quantum,knill1998resilient}.

Although fault-tolerant circuits have reliable output,
  they often require more qubits and longer runtime.
These costs, known as the space and time overheads,
  are not only of theoretical importance
  but also poses practical challenges.
As a result, researchers have been exploring various fault-tolerant schemes
  each with different trade-offs and advantages.
Some notable techniques include
  code concatenation \cite{aharonov1997fault,kitaev1997quantum,knill1998resilient},
  magic state distillation \cite{bravyi2005universal},
  gate teleportation \cite{gottesman1999quantum},
  gauge fixing or code switching \cite{bombin2015gauge,anderson2014fault},
  transversal gates \cite{shor1996fault},
  and quantum low-density parity-check (qLDPC) codes \cite{gottesman2013fault}.
See \cite{gottesman2022opportunities} for a review.

One of those fault-tolerant schemes relies on qLDPC codes with transversal gates.
In this approach, the quantum state in the original circuit $|\psi\>$
  is encoded into a new state $\Enc(|\psi\>)$ using a qLDPC code.
Additionally, gate operations from the original circuit $U$
  are implemented as transversal operations $\Enc(U) = \bigotimes_i U_i$
    where $U_i$ acts on the $i$-th qubit,
  such that $\Enc(U |\psi\>) = \Enc(U) \Enc(|\psi\>)$.
The advantage of transversal operations is that they do not
  spread errors.
The advantage of qLDPC codes is that
  the syndromes can be extracted rather reliably in constant depth.
Therefore, every layer of gates in the original circuit
  is transformed into two layers
  where the first layer applies $\Enc(U)$ on $\Enc(|\psi\>)$,
  and the second layer runs the decoder to correct the error.
In many cases, this process keeps the errors under control
  which leads to a fault tolerant quantum circuit.
(In actuality, this scheme often requires code switching
  to achieve a universal gate set, but we will not discuss that aspect here.)

The challenge in implementing this idea
  lies in constructing qLDPC codes that support transversal gates.
While it is known that any quantum CSS code supports transversal CNOT gates,
  additional gates are required to obtain a universal gate set.
Some examples include:
  2D color codes which support transversal H gates,
  3D color codes which support transversal T gates \cite{bombin2007topological}
  and 3D toric codes which support transversal CCZ gates \cite{vasmer2019three}.
More recently, there have been efforts to construct codes from manifolds \cite{zhu2023non}
  and codes formed by gluing color codes \cite{vuillot2022quantum,scruby2024quantum}.
However, for these constructions, a key code parameter $k_{\CCZ}$,
  which quantifies the number of logical CCZ gates that can be applied,
  is currently unknown.
This highlights the need for further construction of qLDPC codes that support non-Clifford gates.

In this work, we provide a large family of qLDPC codes that supports transversal CCZ gates.
These are known as sheaf codes \cite{first2022good,panteleev2024maximally},
  and they encompass all recent constructions of qLDPC codes \cite{panteleev2021asymptotically,leverrier2022quantum,dinur2022good,dinur2023new,dinur2024expansion}.
In a companion paper \cite{golowich2024quantum},
  we present an instance of this family that achieves
  $[[n, k \ge n^{1-\epsilon}, k_{\CCZ} \ge n^{1-\epsilon}, d \ge n^{1/3} / \poly\log n]]$ over qubits
  with $\poly\log n$ locality
  for arbitrarily small $\epsilon > 0$.
In the context of magic state distillation,
  this means that,
  given $n$ noisy CCZ resource states,
  the code can output $k_{\CCZ}$ more reliable CCZ states.
This leads to a magic state distillation protocol
  with $\gamma = \log(n/k_{\CCZ}) / \log(d) \to 0$.
In comparison,
  recent magic state distillation protocols with $\gamma \to 0$
  \cite{wills2024constant,golowich2024asymptotically,nguyen2024good}
  rely on codes with stabilizers that have linear weight.

\begin{remark}
  We note an upcoming paper with overlapping results \cite{breuckmann2024cups}.
  We will discuss the similarities and differences in the next version.
\end{remark}

\subsection{Main result}

To explain the result, we first briefly review the notion of sheaves over cellular complexes.
Given a cellular complex $X$ of dimension $t$,
  the traditional cochain complex $\cC^{\bullet}(X, \FF_q)$,
  is defined by functions (also known as cochains)
  which assign a value in $\FF_q$ to each cell of the complex.
In contrast, a cochain complex based on a sheaf,
  assigns a value in a vector space $\cF_\sigma$ to each cell $\sigma$,
  where the vector space $\cF_\sigma$ depends on the cell.
The vector spaces $\cF_\sigma$ are specified by $\cF_\tau$ for the $(t-1)$-cells $\tau \in X(t-1)$.
This additional structure $\{\cF_\tau\}_{\tau \in X(t-1)}$ is known as the local codes in Tanner code
  and plays a crucial role in constructing good qLDPC codes.

Given the sheaf $\cF$, we can define a cochain complex $\cC^{\bullet}(X, \cF)$,
  analogous to the standard cochain complex.
The corresponding quantum code is then obtained by
  choosing a particular dimension $\ell$
  where the X checks are placed at the $(\ell-1)$-cochains,
    the qubits are placed at the $\ell$-cochains,
    and the Z checks are placed at the $(\ell+1)$-cochains.

In algebraic topology, it is well-known that the cup product can be defined on the cochains $\cC^{\bullet}(X, \FF_q)$.
One of our main results is to extend the cup product to cochain complexes based on sheaves $\cC(X, \cF)$.
The cup product is defined on the cochains
\begin{equation}
  \cC^i(X, \cF_1) \times \cC^j(X, \cF_2)
  \xrightarrow{\cup} \cC^{i+j}(X, \cF_1 \odot \cF_2)
\end{equation}
and induces a corresponding cup product on cohomology classes
\begin{equation}
  H^i(X, \cF_1) \times H^j(X, \cF_2)
  \xrightarrow{\cup} H^{i+j}(X, \cF_1 \odot \cF_2).
\end{equation}
For each $(t-1)$-cell $\tau \in X(t-1)$,
  $(\cF_1 \odot \cF_2)_\tau \coloneqq
  \{c_1 \odot c_2: c_1 \in (\cF_1)_\tau, c_2 \in (\cF_2)_\tau\}$
  where $c_1 \odot c_2$ denotes the entrywise product of $c_1$ and $c_2$.

Using the cup product,
  we derive a sufficient condition for constructing qLDPC codes that support transversal CCZ gates.
In particular,
  if three sheaves $\cF_1, \cF_2, \cF_3$
  satisfy the condition that
  for every set of vectors $c_1 \in (\cF_1)_\tau, c_2 \in (\cF_2)_\tau, c_3 \in (\cF_3)_\tau$,
    we have $\sum_{i=1}^\Delta (c_1)_i (c_2)_i (c_3)_i = 0$,
  then this induces a transversal CCZ operation on the three quantum codes
  $\cC(X, \cF_1), \cC(X, \cF_2), \cC(X, \cF_3)$.

This general framework enables us to explore a wide family of codes,
  including the new qLDPC code with $\gamma \to 0$ studied in \cite{golowich2024quantum}.
Additionally, it clarifies the sufficient conditions to realizing transversal non-Clifford gates,
  which paves the way for future construction of qLDPC codes with improved parameters.

In addition to the result on the cup product
  we also study various aspects of sheaves over cellular complexes.
Specifically,
  we propose an alternative definition of sheaves that differs from \cite{curry2014sheaves}
  and prove an analogous statement of the Poincaré duality in the context of cellular complexes.

\subsection{Proof overview}

Our construction is inspired by various observations and previous works.
The first observation is that the triorthogonal code introduced in \cite{bravyi2012magic}
  heavily relies on the trilinear function
  $f(\alpha_1, \alpha_2, \alpha_3) = \sum_{i=1}^n (\alpha_1)_i (\alpha_2)_i (\alpha_3)_i$,
  where $\alpha_1, \alpha_2, \alpha_3$ are vectors in $\FF_2^n$.
This suggests that a degree $3$ function is necessary to access the non-Clifford gates at the third level of the Clifford hierarchy.
This also raises the possibility that other degree $3$ functions,
  such as more general trilinear functions,
  may be sufficient to capture the non-Clifford features.
This idea is formalized into the CCZ formalism discussed in \Cref{sec:CCZ-formalism}.
In particular,
  all we need is a trilinear function $f$ defined on the cohomology classes,
  which we mean by the condition
\begin{equation}\label{eq:intro}
  f(\z_1, \z_2, \z_3) = f(\z_1 + \b_1, \z_2 + \b_2, \z_3 + \b_3)
\end{equation}
for any $\z_i$ in the cocycle (corresponding to the logical operators)
  and any $\b_i$ in the coboundary (corresponding to the stabilizers).
This modification provides more flexibility to the construction of codes with transversal non-Clifford gates.

The second observation is that many known codes with non-Clifford operations
  can be linked to the triple intersection number of
  their corresponding topological spaces.
For example,
  the $[[8, 3, 2]]$ code \cite{campbell2016smallest}
  and 3D toric codes support transversal CCZ gates \cite{vasmer2019three}.
This is related to the fact that the 3D torus
  has three homology clases $H_2(\mathbb{T}^3, \ZZ/2\ZZ) = (\ZZ/2\ZZ)^3$
  with a nontrivial triple intersection.
Furthermore,
  the topological nature of the triple intersection number
  implies that it is defined on the homology classes.
This observation suggests an approach to study quantum code with a topological nature
  and define $f$ as the triple intersection.
This strategy has also been observed recently
  and applied to 3D manifolds in \cite{zhu2023non}.

This brings us to the final observation that the good qLDPC codes
  in the recent breakthrough \cite{panteleev2021asymptotically,leverrier2022quantum,dinur2022good}
  do have a topological interpretation,
  including their higher dimensional analogue \cite{dinur2024expansion}.
This topological perspective is inspired by a series of work on geometrically local codes
  \cite{portnoy2023local,lin2023geometrically,williamson2023layer,li2024transform}
  which implicitly uses the topological nature of the qLDPC codes.
For qLDPC codes constructed from 3D cubical complexes,
  the X logical operators can be interpreted as surface configurations
  on the cubical complex.
This geometric interpretation allows us to define the triple intersection number,
  which leads to the transversal CCZ gates on these codes.
Together, these observations provide the foundation for our main result regarding the cup product on chain complexes induced from sheaves.

\subsection{Further directions}

\paragraph{Quantum LDPC codes with transversal non-Clifford gates and nearly linear distance.}
An immediate question is to improve the code parameters
  obtained in the companion work \cite{golowich2024quantum}.
It would be desireable to obtain qLDPC codes with linear distance (or up to polylogs).
Achieving such a result should also lead to quantum locally testable codes (qLTCs) with transversal non-Clifford gates
  with inverse polylog soundness,
  by extending the arguments from \cite{dinur2024expansion}.

\paragraph{Fully-fledged fault-tolerant scheme.}
With the improved qLDPC codes and qLTCs,
  the next goal is to construct a complete fault-tolerant quantum computing scheme based on such codes.
A key question here is to develop an efficient
  code-switching protocols between these codes.
This probably relies on the relation between the cellular complexes these codes are based on.
In particular, the structural property
  that a cubical complex contains lower-dimensional cubical complexes,
  may facilitate code-switching.

\paragraph{Quantum PCP conjecture.}
The fault-tolerant scheme based on quantum LDPC codes with transversal gates
  may also contribute to progress on the quantum PCP conjecture \cite{aharonov2013guest}.
It has been pointed in \cite{anshu2024circuit}
  that an adversarially robust fault-tolerant scheme can be used
  to construct local Hamiltonians for which
  it is QMA-hard to decide whether the energy density is $< a$ or $> a + 1/\poly\log n$.
If the scheme using qLDPC codes with transversal gates
  proves to be adversarially robust,
  it would be a major step toward addressing the qPCP conjecture.

\section{Preliminary}

\subsection{Chain complex}
\label{sec:chain-complex}

We review some definitions for chain complexes,
  which are helpful for explaining the technical parts of the paper.
In our context, all vector spaces are equipped with a chosen basis.
Additionally, we place significant emphasis on the LDPC condition,
  which is essentially about sparsity of matrices.
A matrix is sparse if each row and column
  contains at most a constant number of nonzero entries.
Note that the definition requires a chosen basis on the vector spaces.

\begin{definition}[Chain complex]
  A chain complex $\cC$ is collection of vector spaces over $\FF_q$, $\cC^i$,
    and linear maps $\delta^i: \cC^i \to \cC^{i+1}$,
    such that $\delta^{i+1} \delta^i = 0$.

  A chain complex with basis $\cC$,
    additionally have a chosen basis for every vector space $\cC^i$.

  A chain complex with basis $\cC$ is sparse
    if every coboundary $\delta^i$ is sparse under the chosen basis.
\end{definition}

\begin{definition}[Chain map]
  A chain map $f$ from $\cC$ to $\cD$
    is a collection of maps $f^i: \cC^i \to \cD^i$,
    such that $\delta_\cD^i f^i = f^{i+1} \delta_\cC^i$.

  A chain map $f$ is sparse if every $f^i$ is sparse under the chosen basis.
\end{definition}

\begin{definition}[Chain homotopy]
  Given two chain maps $f, g$ from $\cC$ to $\cD$,
    a chain homotopy from $f$ to $g$ is a collection of maps $h^i: \cC^i \to \cD^{i-1}$
    such that $f^i - g^i = \delta_{\cD}^{i-1} h^i + h^{i+1} \delta_{\cC}^i$.

  A chain homotopy $h$ is sparse if every $h^i$ is sparse under the chosen basis.

  Two chain complexes $\cC, \cD$ are homotopy equivalent
    if there exists chain maps $f$ from $\cC$ to $\cD$
      and $g$ from $\cD$ to $\cC$
    such that $g \circ f$ is homotopic to $\id_{\cC}$
      and $f \circ g$ is homotopic to $\id_{\cD}$.

  Such homotopy equivalence is sparse
    if the chain homotopies are sparse.
\end{definition}

From now on, we only consider the sparse version of every definition above.
Therefore, we often drop the word sparse.

Given a chain complex, we can change the direction of the linear maps
  and form the dual chain complex consists of the boundary operators.
Let $\cC_i$ be the dual vector space $\cC^i \to \FF_q$.
In our context,
  since each vector space has a chosen basis,
  we can identify $\cC_i \cong \cC^i$.
Furthermore, we can define the associate boundary operators
  $\partial_i: \cC_i \rightarrow \cC_{i-1}$ as the matrix transpose of $\delta^{i-1}$,
  $\partial_i \coloneqq (\delta^{i-1})^T$.
The boundary operators automatically satisfy
\begin{equation*}
  \partial_{i-1} \partial_i  = 0
\end{equation*}
which again forms a chain complex.
For most of the discussion, we will focus on cochains rather than chains.

We introduce some standard terminologies.
The elements in $\cC^i$ ($\cC_i$) are called the $i$-cochains ($i$-chains).
The elements of the kernel of the coboundary (boundary) operators are called cocycles (cycles)
\begin{equation}
  Z^i \coloneqq \ker \delta^i = \{\a \in \cC^i : \delta^i \a = 0\},\qquad
  Z_i \coloneqq \ker \partial_i = \{\a \in \cC_i : \partial_i \a = 0\}.
\end{equation}
The elements of the image of the coboundary (boundary) operators are called coboundaries (boundaries)
\begin{equation}
  B^i \coloneqq \im \delta^{i-1} = \{\delta^{i-1} \a : \a \in \cC^{i-1}\},\qquad
  B_i \coloneqq \im \partial_{i+1} = \{\partial_{i+1} \a : \a \in \cC_{i+1}\}.
\end{equation}
Because $\delta^i \delta^{i-1} = 0$ ($\partial_i \partial_{i+1} = 0$), it follows that $B^i \subseteq Z^i$ ($B_i \subseteq Z_i$).
The cohomology (homology) group is the quotient of the coboundaries by the cocycles (boundaries by the cycles)
\begin{equation}
  H^i \coloneqq Z^i / B^i,\qquad
  H_i \coloneqq Z_i / B_i.
\end{equation}
The elements are known as the cohomology classes
  which will be denoted as the equivalence class of a coboundary
  $[\zeta] = \{\zeta + \beta: \beta \in B^i\}$
  for $\zeta \in Z^i$.

\subsection{Quantum operators over finite fields}
\label{sec:qudit}

We briefly review qudits over finite fields \cite{ashikhmin2001nonbinary}.
Let $q = p^r$, be a power of primes.
The basis of qudits over $\FF_q$ is generated by $\{|x\>: x \in \FF_q\}$.
The Pauli operators over $\FF_q$ are
\begin{equation}
  X^a |x\> = |x+a\>, \qquad Z^a |x\> = e^{\frac{2\pi i}{p} \tr_{\FF_q / \FF_p}(ax)} |x\>
\end{equation}
for $a \in \FF_q$.
Notice that $Z^b X^a$ and $X^a Z^b$ differ by a phase factor
\begin{equation}
  Z^b X^a = e^{\frac{2\pi i}{p} \tr_{\FF_q / \FF_p}(ab)} X^a Z^b.
\end{equation}

With the Pauli operators defined,
  the Clifford hierarchy can be generalized to the setting of qudits.
Let $C_1$ be the Pauli group generated by the Pauli operators.
We define the Clifford hierarchy iteratively,
  where the $i$-th level $C_i$ is defined as
\begin{equation}
  C_i = \{U: U C_1 U^\dagger \subseteq C_{i-1}\}.
\end{equation}
The second level $C_2$ are also known as the Clifford group,
  and its elements are known as the Clifford gates.
The hierarchy characterize some form of computing power.
In particular, the gates in $C_2$ do not form a universal gate set
  and can be simulated efficiently using classical computers.
In contrast, the gates in $C_3$ form a universal gate set.
These gates outside of $C_2$ are known as the non-Clifford gates
  and are essential for building universal quantum computers.

We illustrate some of the gates in the hierarchy.
At the second level $C_2$, for $a \in \FF_q$, we have the following operators
\begin{align}
  M^a |x\> &= |ax\>, \\
  CZ^a |x,y\> &= e^{\frac{2\pi i}{p} \tr_{\FF_q / \FF_p}(axy)} |x, y\>, \\
  CNOT^a |x,y\> &= |x, ax+y\>.
\end{align}
At the third level $C_3$, for $a \in \FF_q$, we have the following operators
\begin{align}
  T^a |x\> &= e^{\frac{2\pi i}{p} \tr_{\FF_q / \FF_p}(a x^3)} |x\>, \\
  CCZ^a |x,y,z\> &= e^{\frac{2\pi i}{p} \tr_{\FF_q / \FF_p}(a xyz)} |x, y, z\>.
\end{align}
In particular, since $M^a$ is a Clifford gate,
  in the context of magic state distillation,
  $CCZ^a$ and $CCZ^1$ are of the same resources.
This follows from the relation
\begin{equation}
  CCZ^a = M^{a^{-1}} CCZ^1 M^a.
\end{equation}

It is common to perform alphabet reduction
  which reduces qudits over $\FF_q$ into $r$ qudits over $\FF_p$.
This allows us to obtain codes over qubits ($p=2$) while working with large finite fields ($q=2^r$).
By viewing $\FF_q$ as a vector space over $\FF_p$, $\FF_q \cong (\FF_p)^r$,
  we can express $|x\>$ as $|x_1\> \otimes |x_2\> \otimes ... \otimes |x_r\>$,
  where $x_1, ..., x_r \in \FF_p$.
One can verify that the Clifford hierarchies over $\FF_q$ and $\FF_p$ agree
  after alphabet reduction.
Conceptually, this means that the computation model only depends on the characteristic of the field.
More concretely,
  there are several methods to perform alphabet reduction
  while preserving the properties of the stabilizer codes \cite{golowich2024asymptotically,nguyen2024good}.
Therefore, there is no need to be intimidated by the large field size $\FF_q$.

Notice that several notations in this section
  overlap with those used in the chain complex.
From now on, we will only use $CCZ$,
  while $X$ and $Z$ will mainly refer to concepts in the context of the chain complex.

\subsection{Quantum CSS codes}

A quantum CSS code of length $n$ over $\FF_q$ is specified by a three term chain complex
\begin{equation}
  Q: \FF_q^{m_x} \xrightarrow{\delta^0} \FF_q^n \xrightarrow{\delta^1} \FF_q^{m_z}
\end{equation}
where $m_x$ is the number of X checks,
  $n$ is the number of qudits,
  and $m_z$ is the number of Z checks.
The coboundary maps $\delta^0$ and $\delta^1$ specify how the checks act on the qudits.

It is understood that the X and Z logical operators correspond to cocycles $Z^1$ and cycles $Z_1$, respectively.
Additionally, the X and Z stabilizers correspond to coboundaries $B^1$ and boundaries $B_1$, respectively.

The dimension $k = \dim H^1$ is the number of logical qubits.
The distance is $d = \min(d^1, d_1)$ where
\begin{equation}
  d^1 = \min_{\alpha \in Z^1 \backslash B^1} |\alpha|,\qquad
  d_1 = \min_{\alpha \in Z_1 \backslash B_1} |\alpha|.
\end{equation}
These are the minimal weights of the nontrivial $X$ and $Z$ logical operators.

We say the quantum code is a low-density parity-check (LDPC) code if the chain complex is sparse,
  i.e. each check interacts with a bounded number of qudits
  and each qudit interacts with a bounded number of checks.

To simplify notation in later sections,
  where many indices are needed to label variables,
  we will sometimes drop the indices and use $Z$ for $Z^1$ and $B$ for $B^1$
  as the discussion mostly focuses on the properties of cochains.

\subsection{Cubical complexes and cubical codes}
\label{sec:cubical-code}

The cubical complexes are formed by vertices, edges, squares, cubes, and their high dimensional counterparts.
Their first application in error correcting codes
  was introduced in \cite{dinur2024expansion}
  to construct quantum locally testable codes.
Prior to this, their 1D and 2D counterparts had been applied
  in \cite{sipser1996expander,tanner1981recursive}
  and \cite{panteleev2022asymptotically,dinur2021locally}.
Our discussion will focus on labeled cubical complexes
  because it is more straightforward to apply local codes.

\subsubsection{Labeled cubical complexes}

A $t$-dimensional labeled cubical complex $X(V; A_1, A_2, ..., A_t)$
  can be constructed from a set $V$ of size $N = |V|$
  and subsets of permutations $A_1, ..., A_t \subset \Aut(V)$
  of size $\Delta = |A_i|$.
These permutations should pairwise commute,
  i.e. $a_i a_j = a_j a_i$ for all $a_i \in A_i$ and $a_j \in A_j$.
The complex is also denoted as $X(V; \{A_i\})$.

Each $k$-dimensional cell ($0 \le k \le t$)
  is associated with a type,
  which is a subset $S \subseteq \{1, ..., t\}$ of size $|S| = k$.
Denote $\ol{S}$ as the complement of $S$ in $\{1, ..., t\}$.
A cell $\sigma$ of type $S$ is specified by
\begin{equation}
  \sigma = (v; (a_j)_{j \in S}, (b_j)_{j \in \ol{S}}),
\end{equation}
where $v \in V$, $a_j \in A_j$ for each $j \in S$, and $b_j \in \{0, 1\}$ for each $j \in \ol{S}$.
Let $X(S)$ be the faces of type $S$.
Then, $|X(S)| = |V|\Delta^k 2^{t-k}$.
The type of a cell $\sigma$ is denoted as $\type(\sigma)$.

Geometrically, a $k$-cell of type $S$, $\sigma = (v; a, b)$
  contains the following $2^k$ vertices
  \begin{equation}
    \Big\{\Big(\prod_{j: b'_j = 1} a_j \cdot v; b' || b\Big) : b' \in \{0, 1\}^S \Big\},
  \end{equation}
  where $b' || b$ denotes string concatenation with re-ordering of the indices in the natural way.
We use the notation $a_j \cdot v$ to denote the element $a_j(v) \in V$
  for $a_j \in A_j$ is a permutation of $V$.
Note that the order of the product in $\prod_{j} a_j \cdot v$ does not matter
  since the permutations $a_j$ from different sets $A_j$ pairwise commute.

A family of labeled cubical complexes to keep in mind
  is to take $V$ to be an abelian group $G$,
  and $A_i$ to be a collection of left actions
    $A_i = \{(g \mapsto \tilde a_i g): \tilde a_i \in \tilde A_i\}$
    for a generating subset $\tilde A_i \subset G$.
Since the group is abelian,
  the elements from different sets pairwise commute.
Because of this example, from now on,
  we will use $G$ instead of $V$ to denote the vertex set.

When $t = 2$,
  we can take $G$ to be a non-abelian group,
  $A_1$ to be a collection of left actions
  $A_1 = \{(g \mapsto \tilde a_1 g): \tilde a_1 \in \tilde A_1\}$,
  and $A_2$ to be a collection of right actions
  $A_2 = \{(g \mapsto g \tilde a_2): \tilde a_2 \in \tilde A_2\}$.
The permutations $a_1 \in A_1$ and $a_2 \in A_2$ commutes,
  because left and right actions commute.
This is known as the left-right Cayley complex \cite{dinur2021locally}
  which is an essential ingredient to the recent breakthroughs in
  classical locally testable codes and quantum low-density parity check codes \cite{dinur2021locally,panteleev2021asymptotically}.

Another family of labeled cubical complexes comes from a Cartesian product of three bipartite graphs.
These are used for constructing qLDPC codes with $\gamma \to 0$
  in the companion paper \cite{golowich2024quantum}.

We introduce some additional notations.
Let $X(i)$ be the set of $i$-cells.
Given two cells $\sigma, \tau$,
  we write $\sigma \preceq \tau$ if $\sigma$ is a subcell of $\tau$,
  and write $\sigma \precdot \tau$ if $\sigma$ is an `immediate' subcell of $\tau$
    where their dimensions differ by $1$.

$X_{\ge \sigma}$ denotes the sub-complex consists of all cells above $\sigma$
\begin{equation}
  X_{\ge \sigma} = \{\tau \in X: \tau \succeq \sigma\}.
\end{equation}
$X_{\le \sigma}$ denotes the sub-complex consists of all cells below $\sigma$
\begin{equation}
  X_{\le \sigma} = \{\tau \in X: \tau \preceq \sigma\}.
\end{equation}
We define $X_{\ge \sigma}(k) = X_{\ge \sigma} \cap X(k)$
  and $X_{\le \sigma}(k) = X_{\le \sigma} \cap X(k)$.

\subsubsection{Chain complexes from cubical complexes}

To generate a chain complex,
  we attach each cell with local coefficients similar to a Tanner code construction.
Let $C_1, ..., C_t \subseteq \FF_q^{\Delta}$
  be classical codes of length $\Delta$ with dimension $m_i$
  and generating matrix $h_i^T: \FF_q^{m_i} \to \FF_q^\Delta$.
The local coefficient space associated with a $k$-cell $\sigma$ of type $S$
  is $\cF_\sigma = \bigotimes_{i \in \ol{S}} \FF_q^{m_i}$.

The local coefficient can be identified with a codeword in the tensor code
  $\bigotimes_{i \in \ol{S}} C_i$.
Given $c \in \cF_\sigma$,
  it is clear that $(\bigotimes_{i \in \ol{S}} h_i^T) c$ is a codeword in $\bigotimes_{i \in \ol{S}} C_i \subseteq (\FF_q^{\Delta})^{\otimes \ol{S}}$.
The codeword can be further interpreted as a function
  that assigns every $t$-cells above $\sigma$ with a value in $\FF_q$,
  $(\bigotimes_{i \in \ol{S}} h_i^T) c: X_{\ge \sigma}(t) \to \FF_q$.
Since each $t$-cell above $\sigma$ is obtained by extending the directions in $\ol{S}$
  and each direction has $|A_i| = \Delta$ labels to choose from,
  we can identify $\Delta^{\ol{S}}$ with $X_{\ge \sigma}(t)$
  and obtain $(\FF_q^{\Delta})^{\otimes \ol{S}} \cong \FF_q^{X_{\ge \sigma}(t)} \cong (X_{\ge \sigma}(t) \to \FF_q)$.
This induces a map from $\cF_\sigma$ to $\FF_q^{X_{\ge \sigma}}$
  and we define $\cG_\sigma \subseteq \FF_q^{X_{\ge \sigma}(t)}$
  as the image.
It is clear that $\cG_\sigma \cong \cF_\sigma$.
These $\cG_\sigma$ will be helpful when discussing the generalization to sheaf code in \Cref{sec:sheaf-code}
  and defining the co-restriction map below.

We first define the vector spaces for the desired chain complex.
The vector space $\cC^i$ is the space of functions $\alpha$,
  that assigns a value $\alpha(\sigma) \in \cF_\sigma$
  for each $i$-cell $\sigma$.
In particular,
\begin{equation}
  \cC^i = \bigoplus_{\sigma \in X(i)} \cF_\sigma.
\end{equation}

To define the coboundary maps,
  we first describe the co-restriction maps.
For every pair of cells $\sigma \preceq \tau$,
  we have an embedding $X_{\ge \tau}(t) \subseteq X_{\ge \sigma}(t)$.
This induces the map $\cG_\sigma \to \cG_\tau$,
  which maps $c: X_{\ge \sigma}(t) \to \FF_q$
  to $c|_{X_{\ge \tau}(t)}: X_{\ge \tau}(t) \to \FF_q$.
Because $\cG_\sigma \cong \cF_\sigma$,
  this induces a map from $\cF_\sigma \to \cF_\tau$,
  which we denote as $\cores_{\sigma,\tau}$.

More explicitly, $\cores$ can be defined by mapping through $h_j^T$
  among certain directions and restrict to certain slices.
Given a pair of cells $\sigma \precdot \tau$
  with $\type(\tau) \backslash \type(\sigma) = \{j\}$,
\begin{equation}
  \cores_{\sigma,\tau}(c) = \Big((I_{-j} \otimes h_j^T) c\Big)[..., a_i, ...],
\end{equation}
  where $a_j \in A_j$ is the label for the $j$-th direction of $\tau$.
For a general pair of cells $\sigma \preceq \tau$,
  we again apply $h_j^T$ and restrict to $a_j \in A_j$
  for all $j \in \type(\tau) \backslash \type(\sigma)$.
One can verify that this agrees with the previous definition of $\cores$.

The coboundary maps $\delta^i: \cC^i \to \cC^{i+1}$
  is defined by
\begin{equation}
  \delta^i \a (\tau) = \sum_{\sigma \precdot \tau} \cores_{\sigma,\tau}(\a(\sigma)).
\end{equation}
(In general, there will be signs, $\sum_{\sigma \precdot \tau} (-)^{(...)} \cores_{\sigma,\tau}(\a(\sigma))$.
  We evade this technical discussion by studying finite fields with characteristic $2$
  where $-1 = 1$.)

We can verify that $\delta \circ \delta = 0$, where we have
\begin{equation}
  \delta^{i+1} \delta^i \a(\pi) = \sum_{\sigma \precdot \tau \precdot \pi} \cores_{\sigma,\pi}(\a(\sigma)) = 0.
\end{equation}
The last equality holds
  because for each pair $\sigma \in X(i), \pi \in X(i+2)$,
  there exist two $\tau \in X(i+1)$ such that $\sigma \precdot \tau \precdot \pi$.

We denote the chain complex as $\cC(X, \{C_j\}_{j=1}^t)$
  or $\cC(G, \{A_j\}_{j=1}^t, \{C_j\}_{j=1}^t)$.

\subsubsection{Cubical quantum codes}

Recall that a quantum CSS code is equivalent to a three-term chain complex.
The previous section provides a chain complex
  $\cC(G, \{A_i\}, \{C_i\})$.
So the only additional data is an integer $\ell$
  which specifies which of the three terms to select.
This leads to the quantum code $Q(G, \{A_i\}, \{C_i\}, \ell)$
  defined by
\begin{equation}
  \cC^{\ell-1}(G, \{A_i\}, \{C_i\}) \to \cC^{\ell}(G, \{A_i\}, \{C_i\}) \to \cC^{\ell+1}(G, \{A_i\}, \{C_i\}).
\end{equation}
In particular,
  the X checks are placed on the $(\ell-1)$-cells,
  the qubits are placed on the $\ell$-cells,
  and the Z checks are placed on the $(\ell+1)$-cells,

Later, we will mainly discuss quantum codes based on
  2D square complexes and 3D cubical complexes
  with $\ell = 1$.

\section{CCZ formalism}
\label{sec:CCZ-formalism}

This section introduces the CCZ formalism,
  where the goal is to establish a condition on quantum codes
  that implies a transversal operation of CCZ gate.
This framework should be compared to the triorthogonal condition introduced in \cite{bravyi2012magic},
  which specifies a condition on quantum codes
  that implies a transversal operation for the T gate.
Our formalism can be generalized to C$^{r-1}$Z gates or subsystem codes,
  although we will not discuss them explicitly in this work.

\subsection{Definition of CCZ codes}

In the CCZ formalism,
  the data consists of
  three quantum codes $Q_1, Q_2, Q_3$ of length $n_1, n_2, n_3$, respectively,
  and a trilinear map $f: \FF_q^{n_1} \times \FF_q^{n_2} \times \FF_q^{n_3} \to \FF_q$.
We say a function is trilinear if
  the function is linear when any of the two inputs are fixed.
For example, $f(\a_1,\a_2,\a_3) + f(\a_1',\a_2,\a_3) = f(\a_1+\a_1',\a_2,\a_3)$
  and $f(a \a_1,\a_2,\a_3) = a f(\a_1,\a_2,\a_3)$ for $a \in \FF_q$.
The trilinear map $f$ has to satisfy an additional condition which we will explain shortly.

We associate the trilinear map $f$ with an unitary operator $\CCZ^f$ defined by
\begin{equation}\label{eq:def-CCZ-f}
  \CCZ^f |\a_1,\a_2,\a_3\> = e^{\frac{2\pi i}{p}\tr_{\FF_q/\FF_p}(f(\a_1,\a_2,\a_3))} |\a_1,\a_2,\a_3\>
\end{equation}
where $\a_i \in \FF_q^{n_i}$
  and $|\a_1,\a_2,\a_3\>$ is the state on $n_1 + n_2 + n_3$ qudits
  labeled by strings $\a_1, \a_2, \a_3$ in the Z basis.
(Note that strings in the Z basis correspond to Pauli X operators.)
This is analogous to the definition of a CCZ gate.
In fact, we will see that $\CCZ^f$ can be implemented by applying multiple CCZ gates.

To ensure that $\CCZ^f$ maps codewords to codewords,
  we need to impose a condition on $f$.
The codewords in $Q_1 \otimes Q_2 \otimes Q_3$ are spanned by
\begin{equation}
  |[\z_1],[\z_2],[\z_3]\>
  \coloneqq \sum_{\b_1 \in B_1} \sum_{\b_2 \in B_2}\sum_{\b_3 \in B_3}
  |\z_1 + \b_1, \z_2 + \b_2, \z_3 + \b_3\>
\end{equation}
for $\z_i \in Z_i$.
Because $\CCZ^f$ only changes the phase under the Z basis,
  $\CCZ^f |[\z_1],[\z_2],[\z_3]\>$ is a codeword iff
\begin{equation}
  \CCZ^f |[\z_1],[\z_2],[\z_3]\> \,\propto\, |[\z_1],[\z_2],[\z_3]\>
\end{equation}
which is equivalent to
\begin{equation}\label{eq:condition-with-trace}
  \tr_{\FF_q/\FF_p}(f(\z_1,\z_2,\z_3)) = \tr_{\FF_q/\FF_p}(f(\z_1 + \b_1, \z_2 + \b_2, \z_3 + \b_3))
\end{equation}
for all $\z_i \in Z_i$ and $\b_i \in B_i$.
Therefore, we require $f$ to satisfy
\begin{equation}\label{eq:condition-without-trace}
  f(\z_1,\z_2,\z_3) = f(\z_1 + \b_1, \z_2 + \b_2, \z_3 + \b_3)
\end{equation}
which guarantees that $\CCZ^f$ maps codewords to codewords.
This condition is appealing as it is saying that
  $f$ is uniquely defined on the cohomology classes.
This is how we will refer to this condition.

\begin{remark}
  The condition we imposed \eqref{eq:condition-without-trace}
  may seem stronger than what is necessary \eqref{eq:condition-with-trace}.
  However, they are in fact equivalent.
  For all $a \in \FF_q$,
    because $a \z_1 \in Z_1$ and $a \b_1 \in B_1$,
    from \eqref{eq:condition-with-trace} we have
    \begin{equation}
      \tr_{\FF_q/\FF_p}(f(a \z_1,\z_2,\z_3)) = \tr_{\FF_q/\FF_p}(f(a \z_1 + a \b_1, \z_2 + \b_2, \z_3 + \b_3)).
    \end{equation}
  Because $f$ is trilinear, we have
  \begin{equation}
    \tr_{\FF_q/\FF_p}(a f(\z_1,\z_2,\z_3)) = \tr_{\FF_q/\FF_p}(a f(\z_1 + \b_1, \z_2 + \b_2, \z_3 + \b_3))
  \end{equation}
  for all $a \in \FF_q$.

  Since finite extensions of a finite field is separable,
    we have $\FF_q/\FF_p$ is separable.
  Furthermore, for separable extensions $\FF_q/\FF_p$,
    the bilinear trace form $\tr_{\FF_q / \FF_p}(ax)$ is nondegenerate,
    meaning that if $\tr_{\FF_q / \FF_p}(ax) = 0$ for all $a \in \FF_q$,
    then $x = 0$.
  Thus, $f(\z_1,\z_2,\z_3) = f(\z_1 + \b_1, \z_2 + \b_2, \z_3 + \b_3)$
  which is exactly \eqref{eq:condition-without-trace}.
\end{remark}

This leads to the definition of a CCZ code.
\begin{definition}
  A CCZ code consists of
  three quantum codes $Q_1, Q_2, Q_3$
  and a trilinear map $f: \FF_q^{n_1} \times \FF_q^{n_2} \times \FF_q^{n_3} \to \FF_q$
  such that
  \begin{equation}\label{eq:condition-without-trace-main}
    f(\z_1,\z_2,\z_3) = f(\z_1 + \b_1, \z_2 + \b_2, \z_3 + \b_3)
  \end{equation}
  for all $\z_i \in Z_i$ and $\b_i \in B_i$.
\end{definition}
Generally, $n_1, n_2, n_3$ are of the same order.

\begin{example}[Triorthogonal code]
  Three copies of the same triorthogonal code \cite{bravyi2012magic} form a CCZ code.
  Let $Q$ be a triorthogonal code of length $n$.
  Let $\{\b^a\}$ be the vectors correspond to stabilzer generators
    and let $\{\z^a\}$ be vectors corresponds to logical generators.
  Recall that a triorthogonal code is defined over $\FF_2$
    and satisfies
  \begin{enumerate}
    \item $\sum_{i=1}^n \b^a_i = 0$,
    \item $\sum_{i=1}^n \z^a_i = 1$,
    \item $\sum_{i=1}^n \a^a_i \a^b_i = 0$ for any distinct $\a^a, \a^b \in \{\b^a\} \cup \{\z^a\}$,
    \item $\sum_{i=1}^n \a^a_i \a^b_i \a^c_i = 0$ for any distinct $\a^a, \a^b, \a^c \in \{\b^a\} \cup \{\z^a\}$.
  \end{enumerate}

  Set $f(\a_1, \a_2, \a_3) = \sum_{i=1}^n \a_{1,i} \a_{2,i} \a_{3,i}$,
    where $\a_{1,i}$ is the $i$-th entry of $\a_1$.
  It is straightforward to check that the conditions for triorthogonal codes
    imply the condition for CCZ codes, \Cref{eq:condition-without-trace-main}.
\end{example}

\begin{example}[3D surface code] \label{exam:3D-surface}
  Three 3D surface codes form a CCZ code.
  Since similar facts have been observed in \cite{vasmer2019three,zhu2023non},
    we will not construct $f$ explicitly.
  We will instead make some high level remarks.

  The key idea is that the X-logical operators of the 3D surface codes
    can be viewed as 1-forms.
  A natural way to form a trilinear operator among three 1-forms is through the cup product.
  Schematically, we can write
  $f(\alpha_1, \alpha_2, \alpha_3) = \int_M \alpha_1 \cup \alpha_2 \cup \alpha_3$.
  It is well-known in algebraic topology that cup product is well-defined on the cohomology classes.
  Therefore, the construction automatically satisfy the condition for CCZ codes.

  An alternative perspective, which may be more visually appealing,
    is to note that 1-forms are Poincaré dual to 2D surfaces.
  In this dual perspective, $f$ becomes the intersection number of the three 2D surfaces.
  This can be visualized in \Cref{fig:3D-surface}.
  In particular, the fact that there is ``one'' nontrivial triple intersection
    corresponds to having ``one'' logical CCZ gate.

  In this example, we see that the reason $f$ is defined on the cohomology classes
    is largely induced from the topological nature of the construction.
  In particular, the intersection number is a topological invariant.
  This is the main idea that allows us to construct transversal CCZ gates on qLDPC codes.
  Later, we will first provide a topological interpretation of the qLDPC codes,
    which naturally induces the construction of $f$ through the intersection number.

  \begin{figure}[H]
    \centering
    \includegraphics[width=0.3\linewidth]{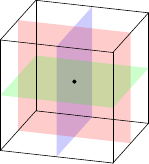}
    \caption{The X-logical operators of three 3D surface codes,
              each oriented along a different axis.
            These logical operators are represented by the corresponding surfaces.
            A crucial feature of these surfaces is that their triple intersection number is always $1$.}
    \label{fig:3D-surface}
  \end{figure}
\end{example}

\begin{remark}
  Codes with transversal T gates appear to be related to triple self-intersection.
  An example with nontrivial triple self-intersection
    is the 3D real projective space $\mathbb{RP}^3$
    which has a homology $H_2(\mathbb{RP}^3, \ZZ/2\ZZ) = \ZZ/2\ZZ$ with a nontrivial triple self-intersection.
\end{remark}

\subsection{Extra code parameters of CCZ codes: $n_{\CCZ}, k_{\CCZ}$}

Besides the traditional code parameters, such as $n, k, d$,
  a CCZ code has additional parameters $n_{\CCZ}$, $k_{\CCZ}$, which are dependent on $f$.
Intuitively, $n_{\CCZ}$ is the number of CCZ gates needed to implement $\CCZ^f$,
  and $k_{\CCZ}$ is the number of logical CCZ gates applied by acting $\CCZ^f$.
In the context of magic state distillation,
  this means, by consuming $n_{\CCZ}$ noisy CCZ states,
  the error correcting code can output $k_{\CCZ}$ cleaner CCZ states.
Therefore, the corresponding magic state distillation protocol
  has overhead $\log^\gamma(1/\epsilon)$ for $\gamma = \log(n_{\CCZ}/k_{\CCZ}) / \log(d)$, where $\epsilon$ is the target error rate.

We first discuss $n_{\CCZ}$.
Notice that the operator $\CCZ^f$ defined in \eqref{eq:def-CCZ-f}
  can be rewritten as
\begin{equation}
  \CCZ^f
  = \prod_{j_1 \in [n_1], j_2 \in [n_2], j_3 \in [n_3]} \CCZ^{f(e_{j_1},e_{j_2},e_{j_3})}_{j_1, j_2, j_3}
  = \prod_{j_1 \in [n_1], j_2 \in [n_2], j_3 \in [n_3], f(e_{j_1},e_{j_2},e_{j_3}) \ne 0} \CCZ^{f(e_{j_1},e_{j_2},e_{j_3})}_{j_1, j_2, j_3}
\end{equation}
where $\CCZ^{f(e_{j_1},e_{j_2},e_{j_3})}_{j_1, j_2, j_3}$
  is the $\CCZ^{f(e_{j_1},e_{j_2},e_{j_3})}$ gate acting on the $j_1, j_2, j_3$-th qudits in codes $Q_1, Q_2, Q_3$, respectively.
Recall that, $\CCZ^a$ gates for different $a \ne 0$
  are equivalent resources up to Clifford operations, as discussed in \Cref{sec:qudit}.

One may notice that this application of CCZ gates
  is not strictly transversal.
There are two perspective to address this concern.
First, for the purpose of constructing fault-tolerant circuits,
  the encoded unitary does not need to be strictly transversal.
In fact, the same fault-tolerant analysis holds as long as the circuit has constant depth,
  which is primarily the setting we consider.
Since ``transversal'' is a well-established term,
  we will slightly abuse its meaning.
Similarly, constant depth circuits are also acceptable
  for magic state distillation.

Alternatively, constant depth applications of CCZ gates
  can be transformed into strictly transversal operation
  by enlarging the quantum code, \cite[Lemma 2.45]{golowich2024quantum}.
Conceptually, this can be compared to the weight reduction techniques,
  which also implies a certain equivalence between different locality constants.

The above discussion leads to the following definition.
\begin{definition}
  Given a CCZ code, we define
  \begin{equation}\label{eq:def-n-CCZ}
    n_{\CCZ} \coloneqq |\{(j_1, j_2, j_3): f(e_{j_1},e_{j_2},e_{j_3}) \ne 0\}|
  \end{equation}
  where $e_j$ is the standard basis vector with all entries equal to $0$ except for the $j$-th entry which is $1$.
\end{definition}

It is clear that $n_{\CCZ} \le n_1 n_2 n_3$.
However, in the context of fault tolerance,
  we want $f$ to be sparse,
  meaning that each qubit is only involved in a bounded number of CCZ gates.
This sparsity condition is analogous to the LDPC condition.
Since each $j_1$ only appears a bounded number of times on the RHS of \eqref{eq:def-n-CCZ},
  in this context, we have $n_{\CCZ} = O(n_1)$.
In fact, we often have $n_{\CCZ} = \Theta(n_1) = \Theta(n_2) = \Theta(n_3)$.

\begin{remark}
  In magic state distillation with perfect Clifford gates,
    $n_{\CCZ}$ is a key code parameter
    as it quantifies the number of resource state consumed.
  However, in the context of transversal gates for building fault-tolerant circuits,
    it is crucial for the implementation of $\CCZ^f$ to have a low depth.
  That means it is important to bound the maximal number of CCZ gates that a qubit participates in,
    which leads to the definition of the code parameter $w_{\CCZ} = \max(w_1, w_2, w_3)$,
    where
  \begin{equation}
    w_1 = \max_{j_1 \in [n_1]} |\{(j_2, j_3) \in [n_2] \times [n_3]: f(e_{j_1},e_{j_2},e_{j_3}) \ne 0\}|,
  \end{equation}
  and $w_2$ and $w_3$ are defined similarly.
\end{remark}

We now discuss $k_{\CCZ}$, which is the number of logical CCZ gates applied by $\CCZ^f$.
Let $k_{\CCZ}$ be the largest integer
  such that there exist $k_{\CCZ}$ X logical operators in each of the codes $Q_1, Q_2, Q_3$,
  denoted as $\z_{1,j}, \z_{2,j}, \z_{3,j}$ for $j \in [k_{\CCZ}]$,
  with the property that
  \begin{equation}
    f(\z_{1,j_1}, \z_{2,j_2}, \z_{3,j_3}) = 1_{j_1 = j_2 = j_3}
  \end{equation}
  for $j_1, j_2, j_3 \in [k_{\CCZ}]$,
  where $1_{j_1 = j_2 = j_3}$ is equal to $0$ except when $j_1 = j_2 = j_3$ which is $1$.
Once we found those X logical operators,
  we treat them as the data qubits
  and set all other qubits to be in the $0$ state.
One can see that $\CCZ^f$ act on the data qubits as $k_{\CCZ}$ copies of CCZ gates.

More conceptually,
  $k_{\CCZ}$ can be defined as the subrank of the 3-tensor obtained by restricting $f$ to the logical operators.
We first describe the 3-tensor, then review the definition of subrank.
Let $\{\z_{1,j'_1}\}_{j'_1 = 1}^{k_1}$ be a set of basis for the X logical operators,
  i.e. the representatives of a basis for the cohomology group $H(Q_1)$.
Define $\{\z_{2,j'_2}\}_{j'_2 = 1}^{k_2}$ and $\{\z_{3,j'_3}\}_{j'_3 = 1}^{k_3}$ similarly.
Let $T$ be the 3-tensor in $\FF_q^{k_1} \otimes \FF_q^{k_2} \otimes \FF_q^{k_3}$,
  where
  \begin{equation}
    T_{j'_1, j'_2, j'_3} = f(\z_{1,j'_1}, \z_{2,j'_2}, \z_{3,j'_3}).
  \end{equation}
One can see that $T$ contains all the information about how $f$ acts on the codespace.

We now review the definition of subrank.
The subrank of a 3-tensor $T \in \FF_q^{k_1} \otimes \FF_q^{k_2} \otimes \FF_q^{k_3}$,
  denoted as $Q(T)$,
  is the largest integer $r$ such that there exist linear maps
  $M_i: \FF_q^{k_i} \to \FF_q^r$ ($r \le k_i$) with the property that
  \begin{equation}
    (M_1 \otimes M_2 \otimes M_3) T = \sum_{j=1}^r e_j \otimes e_j \otimes e_j.
  \end{equation}
It is straightforward to check that the earlier definition of $k_{\CCZ}$ is equal to $Q(T)$.
In particular, the choice of the X logical operators $\z_{1,j}, \z_{2,j}, \z_{3,j}$ for $j \in k_{\CCZ}$
  corresponds to the choice of the linear maps $M_1, M_2, M_3$.

The above discussion leads to the following definition.
\begin{definition}
  Given a CCZ code, we define
  \begin{equation}\label{eq:def-k-CCZ}
    k_{\CCZ} \coloneqq Q(T)
  \end{equation}
  where $T$ is the 3-tensor obtained by restricting $f$ to the logical operators
    and $Q(T)$ is the subrank of $T$.
\end{definition}

\begin{remark}
  In the definition of $k_{\CCZ}$,
    we treat CCZ gates as the desired resource and discard everything else.
  It is possible that the remnant contains nontrivial resource
    which may be recaptured and utilize for better efficiency.
  We leave this possibility for future exploration.
\end{remark}

To summarize, the discussion of CCZ codes generally contains 3 parts:
\begin{enumerate}
  \item Construct three codes $Q_1, Q_2, Q_3$ together with a trilinear map $f$.
  \item Show that the trilinear map $f$ is uniquely defined on the cohomology classes, i.e. $f$ satisfies \Cref{eq:condition-without-trace-main}.
  \item Bound $n_{\CCZ}$ and $k_{\CCZ}$.
\end{enumerate}
In this work, we focus on the first two aspects.
The third aspect will be discussed in the companion paper \cite{golowich2024quantum}.

\section{Interpret logical operators of qLDPC codes as strings and surfaces}
\label{sec:interpret-logical-operators}

It is well-known that in a 2D surface code,
  logical operators can be viewed as 1D strings.
Similarly, in a 3D surface code,
  the Z logical operator corresponds to a 1D string
  and the X logical operator corresponds to a 2D surface.
These geometric interpretations were crucial for understanding
  the intersection number that induces the CCZ gate on a 3D surface code,
  as discussed in \Cref{exam:3D-surface}.
In this section,
  we would like to obtain a similar geometric interpretation for qLDPC codes,
  which will allow us to construct $f$ through the intersection number.

This geometric interpretation for qLDPC codes is largely motivated by
  the recent works on subdivided codes \cite{lin2023geometrically,li2024transform}.
One key step in these works is to impose a 2D structure on qLDPC codes
  and attach 2D surface codes on these 2D regions.
  See \Cref{fig:subdivide}.
This process provides a family of codes with different levels of subdivision characterized by $L$,
  where the original code has $L = 1$.
Since the new code is mostly formed by 2D surface codes,
  the logical operators can be naturally interpreted as 1D strings
    that form a tree, which branches at the place where the surfaces meet.
  See \Cref{fig:string-operator}.

\begin{figure}[H]
  \centering
  \includegraphics[width=0.7\linewidth]{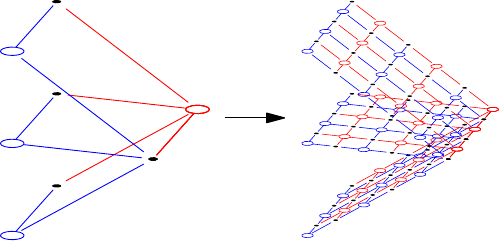}
  \caption{The figure illustrates the subdivision process for part of a quantum code.}
  \label{fig:subdivide}
\end{figure}

\begin{figure}[H]
  \centering
  \includegraphics[width=0.7\linewidth]{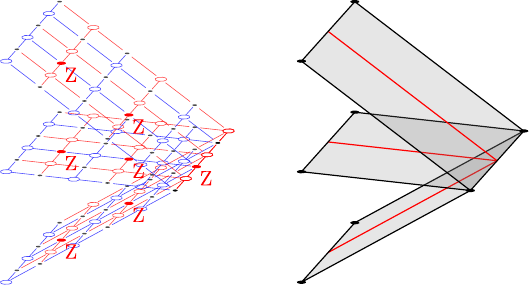}
  \caption{The left figure shows part of a Z logical operator,
            which has a natural string structure
            illustrated on the right.}
  \label{fig:string-operator}
\end{figure}

In a physics language,
  the boundary edge in the figure is condensed by $e_1 e_2 e_3, m_1 m_2, m_2 m_3$
  where $e_i$ and $m_i$ are the electric and magnetic particle on the $i$-th layer.
See \Cref{fig:condensation}.
We say a set of particles (e.g. $e_1 e_2 e_3$) condenses on a boundary
  if this set of particles can be created without extra energy near the boundary.
This is related to the fact that three logical Z strings ending on this boundary
  do not violate any stabilizers.

There are certain criteria on what sets can be condensed.
For general TQFTs, we need to make sure
  every condensing set is `bosonic'
  and different condensing sets mutually commute \cite{kitaev2012models,burnell2018anyon}.
For $\ZZ/2\ZZ$-gauge theories (i.e. toric codes),
  by using the braiding property of the $e$ and $m$ particles,
  the condition can be rephrased as the following.
If we map $e_i$ to $Z_i$ and $m_i$ to $X_i$,
  then as Pauli operators, they commutes.
Indeed $Z_1 Z_2 Z_3, X_1 X_2, X_2 X_3$ mutually commutes.
Because of this mapping,
  we will refer to these boundary condensing patterns as the boundary stabilizers.
If we further require that the boundary is `gapped',
  which is the case for stabilizer codes with large distances,
  every Pauli operator that commutes with the boundary stabilizers
  has to be a boundary stabilizer,
  i.e. the quantum code induced by the boundary stabilizers has $k = 0$.

\begin{figure}[H]
  \centering
  \includegraphics[width=0.4\linewidth]{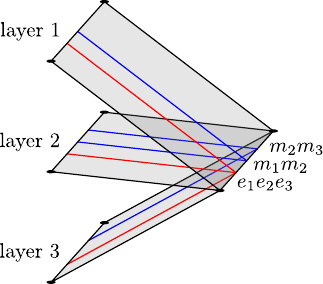}
  \caption{The figure shows three condensing patterns for the code example above.}
  \label{fig:condensation}
\end{figure}

If the code is a CSS code,
  then each condensing set is either all formed by $e$ or $m$.
That means the boundary stabilizers also form a CSS code
  and we can specify the condensing pattern
  using two classical linear codes
  $C_e, C_m \subseteq \FF_q^\Delta$,
  where $\Delta$ is the number of squares meeting at the boundary.
For example, the condensing pattern $e_1 e_2 e_3, m_1 m_2, m_2 m_3$
  has $C_e = \Span\{(1,1,1)\}$,
    $C_m = \Span\{(1,1,0), (0,1,1)\}$.
From the discussion in the last paragraph,
  we observe that when the boundary is `gapped',
  $C_e$ and $C_m$ are the dual to each other,
  $C_m = C_e^\perp$.

In fact, the two classical codes $C_e, C_m$ can be identified with the local codes used in the Tanner code constructions \cite{panteleev2021asymptotically,leverrier2022quantum,dinur2022good}.
In particular, for local codes $C_A, C_B$,
  the horizontal edges have condensing data $C_e = C_A, C_m = C_A^\perp$
  and the vertical edges have condensing data $C_e = C_B, C_m = C_B^\perp$.

As a fun fact, this observation can be used to prove that all three constructions \cite{panteleev2021asymptotically,leverrier2022quantum,dinur2022good}
  are (sparse) chain homotopy equivalent as defined in \Cref{sec:chain-complex}.
This implies that if one of the three codes has certain properties,
  such as large distance or having a decoder,
  then other codes also have such properties.

This observation also elucidate the appearance of dual codes when analyzing Z distance versus X distance of the qLDPC codes.
In the proof of the papers above,
  showing Z distance boils down to certain properties of the local codes $C_A, C_B$,
  while showing X distance boils down to certain properties of the local codes $C_A^\perp, C_B^\perp$.
We can now explain this phenomenon as follows.
The proof of distance is essentially about analyzing the structure of the string patterns.
Since the Z logical operators correspond to strings that ends on the boundaries according to $C_e$,
  proving Z distance only involves $C_e$, i.e. $C_A, C_B$.
Similarly, since the X logical operators correspond to strings that ends on the boundaries according $C_m$,
  proving X distance only involves $C_m$, i.e. $C_A^\perp, C_B^\perp$.

More generally, the discussion about boundary conditions
  applies to codes based on 2D simplicial complexes
    such as the quantum variant of \cite{dinur2023new},
  to codes based on higher dimensional cubical complexes \cite{dinur2024expansion},
  and to sheaf codes \cite{first2022good,panteleev2024maximally}.
The formal discussion for sheaf codes will appear in \Cref{sec:sheaf-code}.

In this work,
  the geometric interpretation of the logical operators
  is used to define $f$
  as the intersection number.
To achieve this,
  we will formalize this relation
  by providing an explicit mapping for quantum codes constructed from 2D and 3D cubical complexes.

\subsection{An explicit map from Pauli X operators to geometric objects for 2D and 3D cubical codes}
\label{sec:map-logical-to-geometry}

For the purpose of constructing CZ and CCZ gates,
  all we need is a mapping for the Pauli X operators.
This makes the task especially simple if we place the checks and qubits geometrically,
  which is what we did in \Cref{sec:cubical-code}.

We start with quantum codes based on square complexes
  $\cC(G, \{A_1, A_2\}, \{C_1, C_2\})$.
Given a vector on the qudits
  $\alpha \in \cC^1(G, \{A_1, A_2\}, \{C_1, C_2\})$
  that represents a Pauli X operator.
The local coefficient on each edge $e$,
  $\alpha(e) \in \cF_e$,
  corresponds to a function on the $\Delta$ adjacent faces
  $h_i^T \alpha(e) \in \cG_e \subseteq \FF_q^\Delta$.
These values can be attached to the segments
  between the midpoint of the edge $e$ and the center of the corresponding face $f$.
(This geometric object should be thought of as the chain in algebraic topology.)
See \Cref{fig:2D-geometric}.

\begin{figure}[H]
  \centering
  \includegraphics[width=0.45\linewidth]{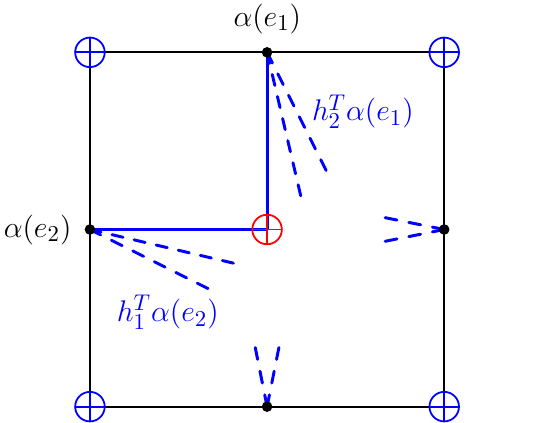}
  \caption{Each edge has a local coefficient $\alpha(e) \in V_e$.
            Such a local coefficient induces a vector $h_i^T \alpha(e) \in \FF_q^\Delta$.
            The vector $h_i^T \alpha(e)$ is then used to induce segments between
            the midpoints of the edges and the centers of the faces.
            Notice that because $\delta \alpha = 0$,
              the sum of the segment values on a face is $0$.}
  \label{fig:2D-geometric}
\end{figure}

We observe three properties of this mapping.
First, the string pattern near the boundary belongs to the local codes $C_1$ or $C_2$.
Second, the condition of a X logical operator $\delta \alpha = 0$,
  corresponds to the condition that the sum of the segment values on each face is $0$,
  because $\delta \alpha$ is obtained by applying the $\cores$ map on $\alpha(e)$,
    which is exactly applying $h_i^T$ on $\alpha(e)$.
Finally, the X stabilizer generators correspond to logical operators
  that are supported locally on the segments around a vertex.

A similar mapping exists for 3D cubical codes $\cC(G, \{A_1, A_2, A_3\}, \{C_1, C_2, C_3\}$).
Given a vector on the qudits
  $\alpha \in \cC^1(G, \{A_1, A_2, A_3\}, \{C_1, C_2, C_3\})$.
The local coefficient on each edge $e$,
  $\alpha(e) \in \cF_e$,
  corresponds to a function on the $\Delta^2$ adjacent cubes
  $(h_i^T \otimes h_j^T) \alpha(e) \in \cG_e \subseteq \FF_q^{\Delta^2}$.
These values can be attached to the faces
  spanned by the midpoint of the edge $e$,
  the two center of the corresponding neighboring faces,
  and the center of the corresponding cube.
See \Cref{fig:3D-geometric}.

\begin{figure}
  \centering
  \includegraphics[width=0.7\linewidth]{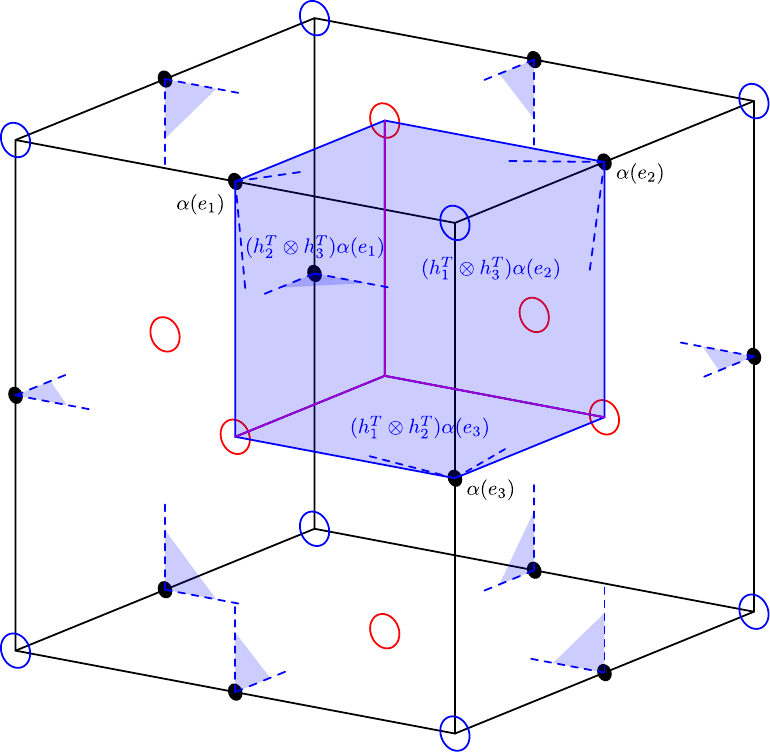}
  \caption{Each edge has a local coefficient $\alpha(e) \in V_e$.
            Such a local coefficient induces a vector $(h_i^T \otimes h_j^T) \alpha(e) \in \FF_q^{\Delta^2}$.
            The vector $(h_i^T \otimes h_j^T) \alpha(e)$ is then used to
              induce the blue surfaces.
            Notice that because $\delta \alpha = 0$,
              the sum of the surface values near a purple line is $0$.}
  \label{fig:3D-geometric}
\end{figure}

Similarly, we observe three properties of this mapping.
First, the surface pattern near the boundary belongs to the local codes $C_i$.
See \Cref{fig:3D-boundary-code}.
The $\Delta^2$ values induced from an edge belongs to $C_i \otimes C_j$.
When viewed from the 2D boundaries perpendicular to direction $i$,
  this splits into $\Delta$ vectors in $C_i$ on $\Delta$ boundaries.
Second, the condition of a X logical operator $\delta \alpha = 0$,
  corresponds to the condition that the sum of the surface values near
    an edge connecting the center of a face (Z check) and the center of the cube is $0$.
  See \Cref{fig:3D-geometric}.
Finally, the X stabilizer generators correspond to logical operators
  that are supported locally on the surfaces around a vertex.
This description leads to an explicit way to view X logical operators as surfaces,
  which will be used to define the trilinear function $f$ in the next section.

\begin{figure}
  \centering
  \includegraphics[width=0.5\linewidth]{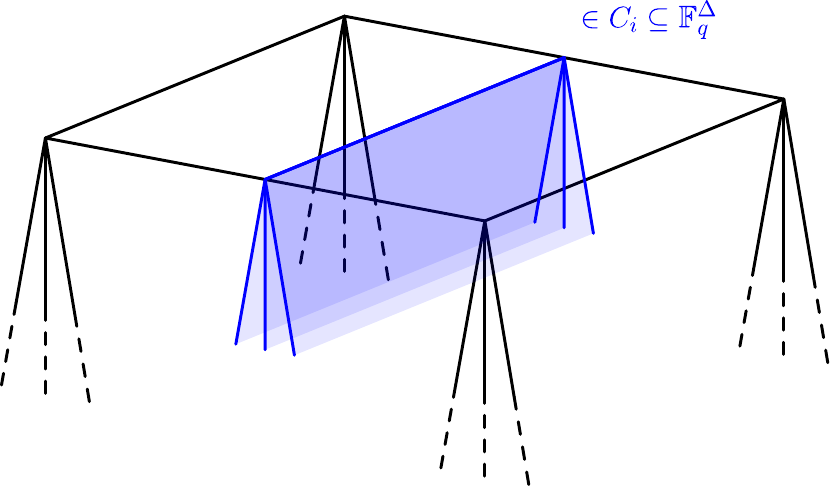}
  \caption{There are $\Delta$ cubes incident to a top boundary face.
            Each such cube carries a value.
            These values form a codeword in $C_i$.}
  \label{fig:3D-boundary-code}
\end{figure}

\section{Transversal CCZ operations on 3D cubical codes}
\label{sec:CCZ-cubical-code}

The goal of this section is to construct a trilinear function $f$ for the 3D cubical code,
  which as discussed in \Cref{sec:CCZ-formalism},
  leads to transversal CCZ operations.
This construction is based on the geometrical interpretation of logical operators described in the previous section.
We first present a simpler case for the 2D square code,
  where we construct a bilinear function
  that is defined on the cohomology classes.
We then generalize this idea to the 3D cubical code.
Since the discussion involves three codes,
  we will label the three codes using $1, 2, 3$,
  and label the directions using $u, v, w$
  in this section.

\subsection{Warm up: a bilinear map on a 2D square code}

Let $X$ be the square complex $X(G, \{A_u, A_v\})$.
Consider two quantum codes based on the square complex
\begin{equation}
  Q_1 = Q(X, \{C^1_u, C^1_v\}, \ell=1),\quad Q_2 = Q(X, \{C^2_u, C^2_v\}, \ell=1).
\end{equation}
The goal is to construct a bilinear map
  $f: \cC^1(X, \{C^1_u, C^1_v\}) \times \cC^1(X, \{C^2_u, C^2_v\}) \to \FF_q$
such that
\begin{equation}\label{eq:condition-bilinear-version}
  f(\z_1, \z_2) = f(\z_1 + \b_1, \z_2 + \b_2)
\end{equation}
for all $\z_i \in Z^1(X, \{C^i_u, C^i_v\})$
  and $\b_i \in B^1(X, \{C^i_u, C^i_v\})$.

To achieve this, we require that
  $C_u^1 \odot C_u^2$ and $C_v^1 \odot C_v^2$ are both contained
  within the parity check code,
  where $C^1 \odot C^2 = \Span\{c_1 \odot c_2: c_1 \in C^1, c_2 \in C^2\}$
  is the vector space generated by $c_1 \odot c_2$
  and $c_1 \odot c_2$ is the entrywise product of $c_1, c_2$.
That means, for every $c_1 \in C^1_u, c_2 \in C^2_u$,
  we have $\sum_{i=1}^\Delta (c_1)_i (c_2)_i = 0$
  where $(c_1)_i$ is the $i$-th entry of $c_1$.

We now construct the bilinear map $f: \cC^1(X, \{C^1_u, C^1_v\}) \times \cC^1(X, \{C^2_u, C^2_v\}) \to \FF_q$.
  The goal is to map the vectors
    $\alpha_1 \in \cC^1(X, \{C^1_u, C^1_v\})$
    and $\alpha_2 \in \cC^1(X, \{C^2_u, C^2_v\})$
    into geometric strings, as described in
    \Cref{sec:map-logical-to-geometry},
    and then define $f$ as the intersection number of these strings.
The appeared difficulty is that the strings overlap,
  making it unclear how to define their intersection number directly.
The solution is well-understood in mathematics,
  which is to perturb the strings into a ``generic position''.
In our case, we can achieve this by shifting the string as shown in \Cref{fig:2D-intersection}.
Note that these strings have associated values attached.
When computing the intersection,
  we multiply the attached values.
By summing up the contributions from all faces,
  we have the definition
\begin{align}
  f(\alpha_1, \alpha_2)
  \coloneqq \sum_{g \in G, a_u \in A_u, a_v \in A_v}
  \cores_{e_{*1},sq}(\alpha_1(e_{*1})) \; \cores_{e_{0*},sq}(\alpha_2(e_{0*})) \notag \\
  + \cores_{e_{1*},sq}(\alpha_1(e_{1*})) \; \cores_{e_{*0},sq}(\alpha_2(e_{*0}))
\end{align}
where $e_{*0} = (g; a_u, 0)$, $e_{0*} = (g; 0, a_v)$,
  $e_{*1} = (a_v g; a_u, 1)$, $e_{1*} = (a_u g; 1, a_v)$,
  $sq=(g; a_u, a_v)$.\footnote{
    For general finite fields $\FF_q$ with characteristic other than $2$,
    the definition has to be modified to incorporate certain negative signs.
  }

\begin{figure}[H]
  \centering
  \includegraphics[width=0.6\linewidth]{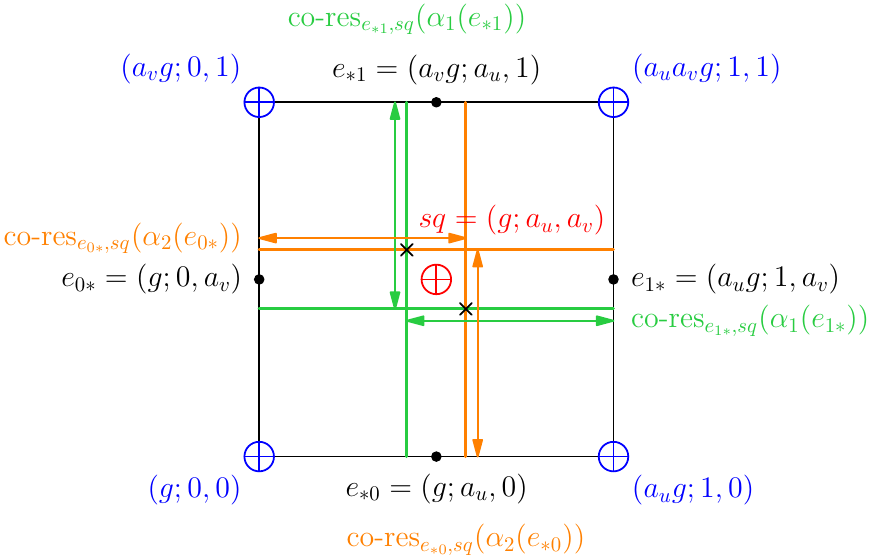}
  \caption{The figure shows two quantum codes constructed based on a square complex,
            $Q_1 = Q(X, \{C^1_u, C^1_v\}, \ell=1)$ and $Q_2 = Q(X, \{C^2_u, C^2_v\}, \ell=1)$.
            The associated Pauli X operators are colored in green and in orange.
            On each face, there are two contributions to the intersection number marked with a cross.
            These corresponds to the two terms
            $\cores_{e_{*1},sq}(\alpha_1(e_{*1})) \cores_{e_{0*},sq}(\alpha_2(e_{0*}))$
            and $\cores_{e_{1*},sq}(\alpha_1(e_{1*})) \cores_{e_{*0},sq}(\alpha_2(e_{*0}))$.}
  \label{fig:2D-intersection}
\end{figure}

What remains to be done is to verify that $f$ is well-defined on the cohomology classes,
  meaning that it satisfies \Cref{eq:condition-bilinear-version}.
This is where
  the assumption that both $C_a^1 \odot C_a^2$ and $C_b^1 \odot C_b^2$
  are contained within the parity check code comes into play.
The following discussion aims to provide intuition for this result,
  while the formal proof is deferred to \Cref{sec:sheaf-code}.

\begin{figure}
  \centering
  \includegraphics[width=0.6\linewidth]{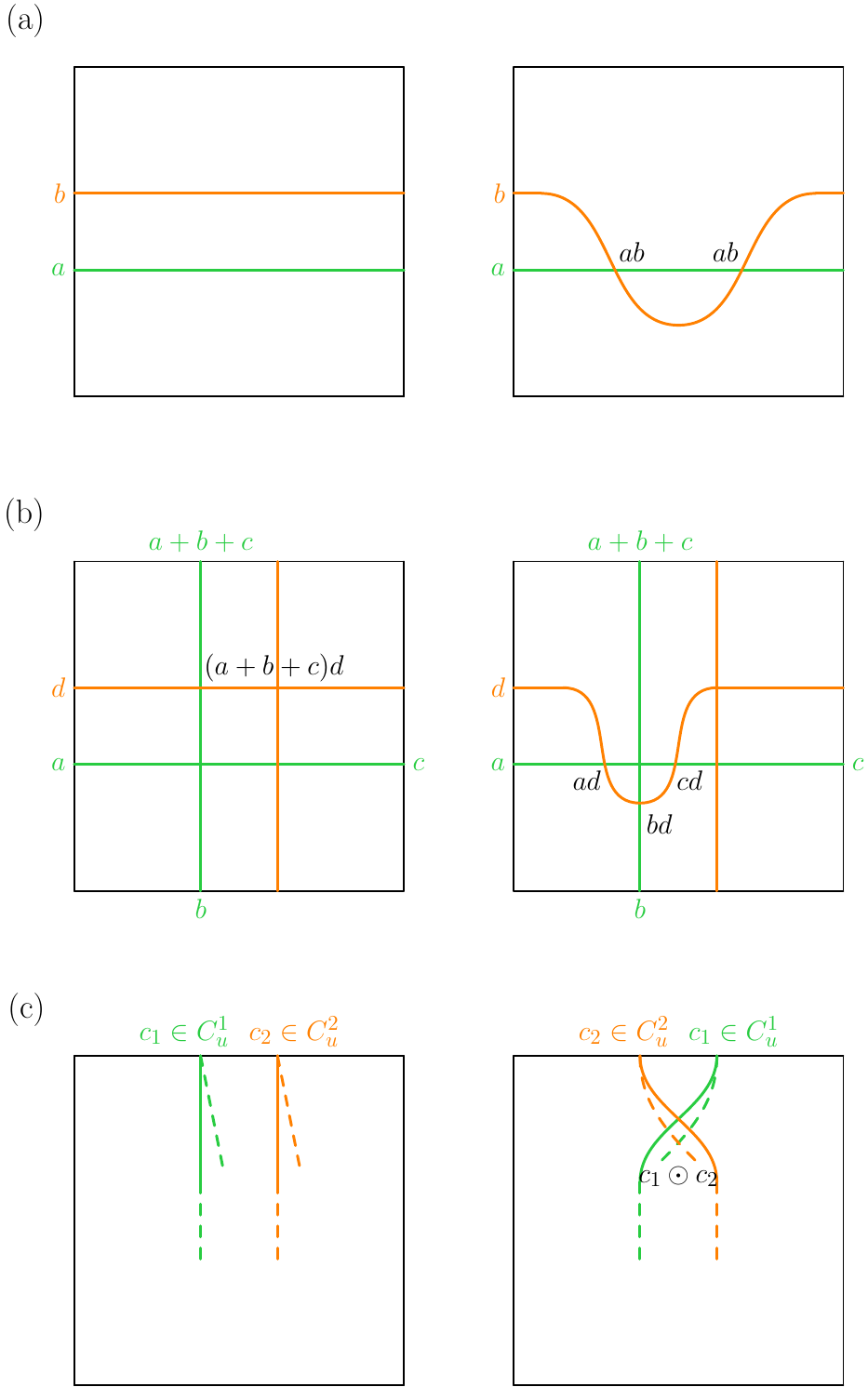}
  \vspace{2em}
  \caption{The three main types of topological deformations.
            We observe that the intersection number is invariant under those deformations.
            This explains why $f$ is well-defined on the cohomology classes.}
  \label{fig:2D-invariant}
\end{figure}

In the language of topology,
  the condition in \Cref{eq:condition-bilinear-version}
  is saying that the intersection number of two logical operators $\z_1, \z_2$ is invariant under topological deformations.
There are mainly three types of deformations, as illustrated in \Cref{fig:2D-invariant}.
The invariance of the first type holds
  for the simple reason that the intersections have the same value.
The invariance of the second type follows from
  the property that, for logical operators,
  the four string values on a face sums to $0$.
The invariance of the third type is guaranteed by
  the fact that the string patterns near the boundary
  are elements of the local codes,
  $c_1 \in C_u^1, c_2 \in C_u^2$.
The change in the intersection number
  is given by $\sum_{i=1}^\Delta (c_1)_i (c_2)_i$,
  which equals to $0$
  because $C_u^1 \odot C_u^2$ is contained within the parity check code.
We also require $C_v^1 \odot C_v^2$ to be contained within the parity check code
  to ensure invariance for similar topological deformations along the vertical boundaries.

\subsection{A trilinear map on a 3D cubical code}

The same idea of using intersections to define $f$ can be extended to 3D cubical codes.
Let $X$ be the cubical complex $X(G, \{A_u, A_v, A_w\})$.
Consider three quantum codes based on the cubical complex
\begin{equation}
  Q_1 = Q(X, \{C^1_u, C^1_v, C^1_w\}, \ell=1),\quad Q_2 = Q(X, \{C^2_u, C^2_v, C^2_w\}, \ell=1), \quad Q_3 = Q(X, \{C^3_u, C^3_v, C^3_w\}, \ell=1).
\end{equation}
The goal is to construct a trilinear map
  $f: \cC^1(X, \{C^1_u, C^1_v, C^1_w\}) \times \cC^1(X, \{C^2_u, C^2_v, C^2_w\}) \times \cC^1(X, \{C^3_u, C^3_v, C^3_w\}) \to \FF_q$
such that
\begin{equation}
  f(\z_1, \z_2, \z_3) = f(\z_1 + \b_1, \z_2 + \b_2, \z_3 + \b_3)
\end{equation}
for all $\z_i \in Z^1(X, \{C^i_u, C^i_v, C^i_w\})$
  and $\b_i \in B^1(X, \{C^i_u, C^i_v, C^i_w\})$.
As discussed in \Cref{sec:CCZ-formalism},
  these data lead to a transversal CCZ operation on the codes.

To achieve this, we require that
  $C_u^1 \odot C_u^2 \odot C_u^3$,
  $C_v^1 \odot C_v^2 \odot C_v^3$,
  $C_w^1 \odot C_w^2 \odot C_w^3$,
  are all contained
  within the parity check code.
That means, for every $c_1 \in C^1_u, c_2 \in C^2_u, c_3 \in C^3_u$,
  we have $\sum_{i=1}^\Delta (c_1)_i (c_2)_i (c_3)_i = 0$
  where $(c_1)_i$ is the $i$-th entry of $c_1$.
This can be satisfied, for example,
  by taking low-degree Reed-Solomon codes over $\FF_q$ with $\Delta = q$.

We can similarly construct the trilinear map $f$
  by mapping the vectors
  $\alpha_i \in \cC^i(X, \{C^i_u, C^i_v, C^i_w\})$
  into geometric surfaces, as described in
  \Cref{sec:map-logical-to-geometry},
  and then define $f$ as the triple intersection number of these surfaces.
We again shift the surfaces slightly
  to obtain the triple intersection number
  as illustrated in \Cref{fig:3D-intersection}.
Overall, we have
\begin{align*}
  f(\alpha_1, \alpha_2, \alpha_3)
  = &\sum_{g \in G, a_u \in A_u, a_v \in A_v, a_w \in A_w} \\
  \;& \cores_{e_{...},cu}(\a_1(a_va_wg; a_u, 1, 1)) \; \cores_{e_{...},cu}(\a_2(a_wg; 0, a_v, 1)) \; \cores_{e_{...},cu}(\a_3(g; 0, 0, a_w)) \\
  +\;& \cores_{e_{...},cu}(\a_1(a_va_wg; a_u, 1, 1)) \; \cores_{e_{...},cu}(\a_2(a_vg; 0, 1, a_w)) \; \cores_{e_{...},cu}(\a_3(g; 0, a_v, 0)) \\
  +\;& \cores_{e_{...},cu}(\a_1(a_ua_wg; 1, a_v, 1)) \; \cores_{e_{...},cu}(\a_2(a_ug; 1, 0, a_w)) \; \cores_{e_{...},cu}(\a_3(g; a_u, 0, 0)) \\
  +\;& \cores_{e_{...},cu}(\a_1(a_ua_wg; 1, a_v, 1)) \; \cores_{e_{...},cu}(\a_2(a_wg; a_u, 0, 1)) \; \cores_{e_{...},cu}(\a_3(g; 0, 0, a_w)) \\
  +\;& \cores_{e_{...},cu}(\a_1(a_ua_vg; 1, 1, a_w)) \; \cores_{e_{...},cu}(\a_2(a_vg; a_u, 1, 0)) \; \cores_{e_{...},cu}(\a_3(g; 0, a_v, 0)) \\
  +\;& \cores_{e_{...},cu}(\a_1(a_ua_vg; 1, 1, a_w)) \; \cores_{e_{...},cu}(\a_2(a_ug; 1, a_v, 0)) \; \cores_{e_{...},cu}(\a_3(g; a_u, 0, 0))
\end{align*}
where $e_{...}$ is the edge in the argument of $\alpha_i$,
  and $cu = (g; a_u, a_v, a_w)$.
For example, in the first term, $e_{...} = (a_va_wg; a_u, 1, 1)$.
Although these expressions are explicit,
  they are cumbersome to work with.
Therefore, we will reformulate the definition of $f$ more succinctly using the language of sheaves in \Cref{sec:sheaf-code}.

\begin{figure}
  \centering
  \includegraphics[width=0.7\linewidth]{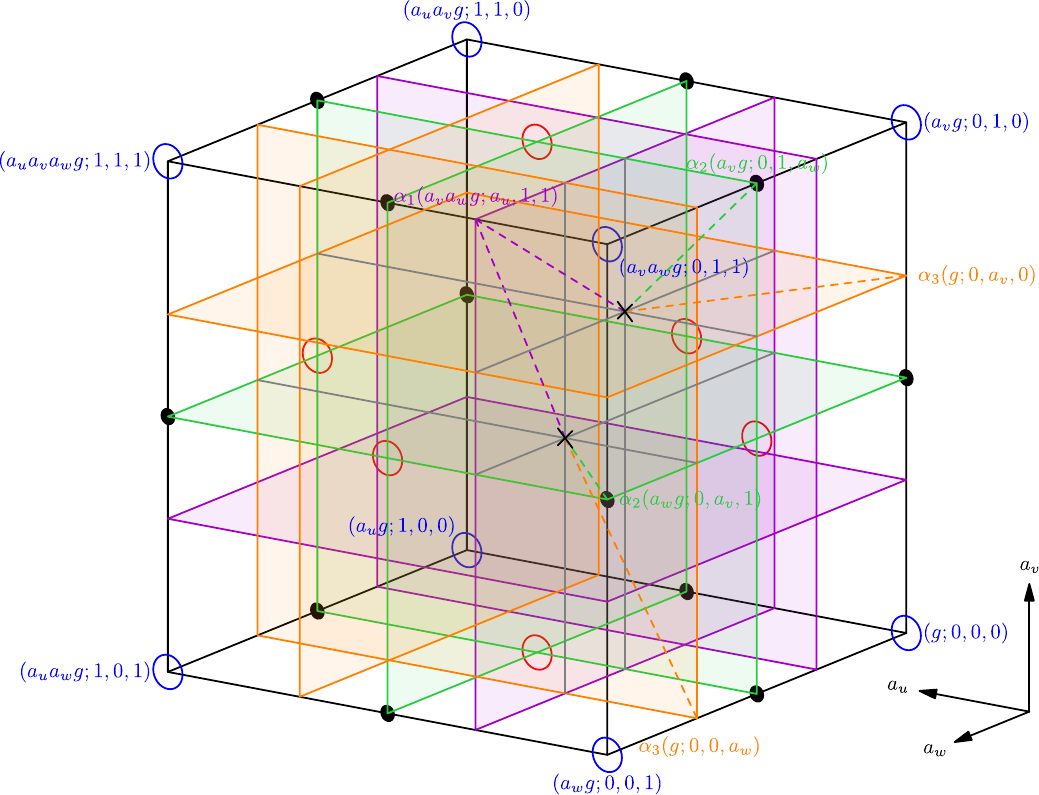}
  \caption{The figure shows three quantum codes constructed based on a cubical complex,
            $Q_1, Q_2, Q_3$.
            The associated Pauli X operators are colored in purple, green, and orange respectively.
            In each cube, there are a total of $6$ intersections
              which contribute to the $6$ terms in the expression of $f$.
            We draw the first two contributions in the figure
              marked by the crosses.}
  \label{fig:3D-intersection}
\end{figure}

Analogous to the discussion in the previous section,
  for $f$ to be well-defined on the cohomology classes,
  we want the intersection number to remain invariant under of the topological deformations.
One of the nontrivial deformations occurs near the boundary,
  where we deform the surface as shown in \Cref{fig:3D-invariant}.
With some imagination,
  we observe that each cube introduces an extra triple intersection
  with the value $(c_1)_i (c_2)_i (c_3)_i$,
  where $c_1 \in C^1_u, c_2 \in C^2_u, c_3 \in C^3_u$
  correspond to the patterns near the boundary.
This means that the intersection number is changed by
  $\sum_{i=1}^\Delta (c_1)_i (c_2)_i (c_3)_i$, which evaluates to $0$
  due to the assumption that $C^1_u \odot C^2_u \odot C^3_u$ is contained within the parity check code.

\begin{figure}
  \centering
  \includegraphics[width=0.6\linewidth]{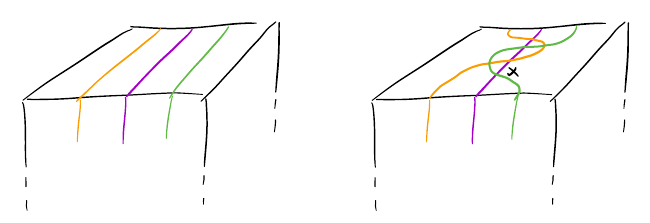}
  \caption{A topological deformation where the invariance of $f$
            relies on the assumption that $C^1_u \odot C^2_u \odot C^3_u$ is contained within the parity check code.}
  \label{fig:3D-invariant}
\end{figure}

\section{Transversal CCZ operations on sheaf codes} \label{sec:sheaf-code}

In this section, we prove the technical result that enables
  transversal CCZ operations on sheaf codes.
This includes the cubical code
  discussed in the previous section.
We begin by reviewing cellular complexes and sheaf complexes,
This is followed by a discussion on how to map Pauli X operators geometrically.
Finally, we formalize these intuitions by defining the cup product
  on all sheaf codes.

We remark that there have are several variants of sheaves that have been considered
  \cite{curry2014sheaves,first2022good,panteleev2024maximally}.
There is the vanilla sheaf code in \cite{curry2014sheaves,first2022good}
  and the Tanner sheaf code and maximally extendable sheaf code defined in \cite{panteleev2024maximally}.
In this work, we suggest a narrower definition of sheaf codes
  compared to the definitions in \cite{curry2014sheaves}.
In fact, our definition almost agrees with the definition of Tanner sheaf code in \cite{panteleev2024maximally},
  except a technical difference that will be explained later.

\subsection{Cellular complexes}

Let $X$ be a $t$-dimensional cellular complex,
  where $X(i)$ denotes the set of $i$-cells.
Given two cells $\sigma, \tau$,
  we write $\sigma \preceq \tau$ if $\sigma$ is a subcell of $\tau$,
  and write $\sigma \precdot \tau$ if $\sigma$ is an `immediate' subcell of $\tau$
    where their dimensions differ by $1$.

Mathematically speaking,
  it is easier to formalize the notion of a simplicial complexes,
  then obtain a cellular complex by merging the simplices.
However, we will skip those details.

In the later discussion we assume that
  for every pair of $i$-cell $\sigma$ and $(i+2)$-cell $\tau$,
  such that $\sigma \prec \tau$,
  there are exactly two $(i+1)$-cells $\pi$ such that $\sigma \precdot \pi \precdot \tau$.
This simplifies the discussion on the gluability of a sheaf.
This property generally holds with enough subdivision of a cellular complex
  and is true for the cubical complexes discussed earlier.
In fact, this property can be shown rigorously for simplicial complexes.
(Note that it is generally assumed for simplicial complexes that
  the vertices of each simplex are distinct.)
\begin{claim} \label{claim:two}
  Given a simplicial complex.
  For every pair of $i$-simplex $\sigma$ and $(i+2)$-simplex $\tau$,
  such that $\sigma \prec \tau$,
  there are exactly two $(i+1)$-simplices $\pi$ such that $\sigma \precdot \pi \precdot \tau$.
\end{claim}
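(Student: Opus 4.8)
The plan is to work directly with the combinatorics of simplices. A simplex is determined by its vertex set, and the face relation $\sigma \preceq \tau$ corresponds to the subset relation on vertex sets. So if $\sigma$ is an $i$-simplex with vertex set $S$ of size $i+1$, and $\tau$ is an $(i+2)$-simplex with vertex set $T$ of size $i+3$, then $\sigma \prec \tau$ exactly means $S \subsetneq T$, and since $|T| - |S| = 2$, the set $T \setminus S$ consists of exactly two vertices, say $T \setminus S = \{x, y\}$ with $x \neq y$.

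First I would observe that an $(i+1)$-simplex $\pi$ with $\sigma \precdot \pi \precdot \tau$ corresponds to a vertex set $P$ with $S \subsetneq P \subsetneq T$ and $|P| = i+2$, i.e. $|P \setminus S| = 1$. Since $P \subseteq T$ and $P \supseteq S$, we have $P \setminus S \subseteq T \setminus S = \{x,y\}$, so $P \setminus S$ is a one-element subset of $\{x,y\}$: either $\{x\}$ or $\{y\}$. This gives exactly the two candidates $P_1 = S \cup \{x\}$ and $P_2 = S \cup \{y\}$. The second step is to check that both candidates genuinely give simplices of the complex with the required immediate-face relations: since $\tau$ is a simplex of the complex and $P_1, P_2$ are subsets of its vertex set $T$, the simplicial-complex axiom (closure under taking faces) guarantees $P_1, P_2$ are simplices; the relations $\sigma \precdot \pi_j$ and $\pi_j \precdot \tau$ hold because the cardinalities differ by exactly one and the subset relations hold by construction. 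Finally, $P_1 \neq P_2$ because $x \neq y$, so there are exactly two such $(i+1)$-simplices.

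I do not expect a serious obstacle here — the statement is essentially the elementary fact that between a $k$-set and a $(k+2)$-set there are exactly two intermediate $(k+1)$-sets. The only point requiring a little care is the use of the hypothesis, implicit in the parenthetical remark in the text, that the vertices of each simplex are distinct, which is what ensures $|T \setminus S| = 2$ rather than something smaller; without it (e.g. for a $\Delta$-complex where a simplex may have repeated vertices) the count can collapse. I would state this assumption explicitly at the start of the argument and then the rest is the bijection between one-element subsets of $\{x,y\}$ and the intermediate simplices described above.
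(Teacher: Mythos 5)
Your proof is correct and takes essentially the same approach as the paper: identify simplices with their vertex sets, note $|T\setminus S|=2$, and conclude the intermediate simplices are precisely $S\cup\{x\}$ and $S\cup\{y\}$. You spell out a few small points (closure under faces, $P_1\neq P_2$) that the paper leaves implicit, but the argument is the same.
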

\begin{proof}
  Let $V$ be the vertices of $\sigma$ and $W$ be the vertices of $\tau$.
  Since the vertices of each cell are distinct,
    $W \backslash V$ has size $2$.
  Thus, the two $(i+1)$-cells $\pi$ are specified by the vertices $V \cup \{v\}$ for $v \in W \backslash V$.
\end{proof}

We introduce some notations.
$X_{\ge \sigma}$ denotes the sub-complex consists of all cells above $\sigma$
\begin{equation}
  X_{\ge \sigma} = \{\tau \in X: \tau \succeq \sigma\}.
\end{equation}
$X_{\le \sigma}$ denotes the sub-complex consists of all cells below $\sigma$
\begin{equation}
  X_{\le \sigma} = \{\tau \in X: \tau \preceq \sigma\}.
\end{equation}
We define $X_{\ge \sigma}(k) = X_{\ge \sigma} \cap X(k)$
  and $X_{\le \sigma}(k) = X_{\le \sigma} \cap X(k)$.

For the purpose of constructing sheaves in the next section,
  it will be useful to keep the following analogies in mind.
We think of each $t$-cell as a point,
  and each $i$-cell $\sigma$ as an ``open'' set that contains the points $X_{\ge \sigma}(t)$.
Therefore, the relation of set inclusion $X_{\ge \sigma}(t) \supseteq X_{\ge \tau}(t)$,
  corresponds to the relation $\sigma \preceq \tau$.
Notice that these ``open'' sets do not form a standard topology,
  as they do not satisfy the property that the union of every family of open sets is open.
Instead, they form the Alexandrov topology,
  which only requires that the intersection of every family of open sets is open.
This perspective of Alexandrov topology is discussed in \cite{curry2014sheaves} and \cite{panteleev2024maximally},
  although we do not use this perspective in the later discussions.

\subsection{Sheaves on cellular complexes}

Sheaf theory is a well-established topic in mathematics.
Recently, chain complexes induced from sheaves have emerged as a natural framework for constructing codes
  \cite{first2022good,dinur2023new,dinur2024expansion,panteleev2024maximally}.
For readers unfamiliar with sheaf theory,
  it can be helpful to consult chapter 2 in \cite{vakil2017rising}
  to understand the motivation behind sheaves,
  which are traditionally defined over topological spaces.
It is known that sheaves over topological spaces have several categorical properties
  and is the right object to study in various settings.

It is important to note, however, that the definition of sheaves over cellular complexes
  differs somewhat from the traditional definition of sheaves over topological spaces.
Therefore, one should be careful when drawing parallels between
  established results on sheaves over topological spaces and similar results for cellular complexes.
In fact, the definition of sheaves on cellular complex introduced in \cite{curry2014sheaves},
  seems more similar to the traditional definition of presheaves in some respects\footnote{
    The reason the author refers to it as a sheaf is because the definition satisfies certain categorical properties.
    As of my current understanding, the author may have implicitly applied certain sheafification to make the object more well-behaved.
  },
  and one might wonder whether it is the right object to study in the context of error correcting codes.

These concerns motivate us to introduce additional conditions on sheaves over cellular complexes.
The conditions we impose have certain analogs in the traditional definition of sheaves
  and could be useful for certain applications.


We begin by reviewing the definition of sheaves over cellular complexes as introduced in \cite{curry2014sheaves}.
To distinguish this from our definition of sheaves,
  we refer to this structure as a ``presheaf'' with quotation marks.
A ``presheaf'' $\cF$ on a cellular complex $X$ is the following data.
\begin{itemize}
  \item For each $i$-cell $\sigma \in X(i)$,
    we have a vector space over $\FF_q$, $\cF_\sigma$.
    Recall that each cell corresponds to a set $X_{\ge \sigma}(t)$,
      and we call the elements of $\cF_\sigma$ as sections of $\cF$ over $X_{\ge \sigma}(t)$.
  \item For each inclusion $\sigma \preceq \tau$ (i.e. $X_{\ge \sigma}(t) \supseteq X_{\ge \tau}(t)$),
    we have a restriction map $\res_{\sigma, \tau}: \cF_\sigma \to \cF_\tau$.
\end{itemize}
The data is required to satisfy the following two conditions.
\begin{itemize}
  \item The map $\res_{\sigma, \sigma}$ is the identity: $\res_{\sigma, \sigma} = \id_{\cF_\sigma}$.
  \item If $\sigma \preceq \pi \preceq \tau$ are inclusions,
    then the restriction maps commute,
    i.e. $\res_{\sigma, \tau} = \res_{\pi, \tau} \res_{\sigma, \pi}$.
\end{itemize}
When the context is clear,
  we write $|_\tau$ in place of $\res_{\sigma, \tau}$.

A useful way to think about an element in $\cF_\sigma$
  is to view it as a function defined on $X_{\ge \sigma}(t)$.
In particular, given $c \in \cF_\sigma$,
  we can define the value of $c$ at $\tau \in X_{\ge \sigma}(t)$
    as $c(\tau) = \res_{\sigma, \tau}(c)$.
This explains why we call the elements of $\cF_\sigma$ as sections of $\cF$ over $X_{\ge \sigma}(t)$.
However, there is a caveat to this interpretation.
In particular, the map from $\cF_\sigma$ to functions on $X_{\ge \sigma}(t)$ is not necessarily injective for ``presheaves''.
That means that some information in $\cF_\sigma$ may be lost
  when viewed as a function on $X_{\ge \sigma}(t)$.
This is one reason why additional conditions are required
  to guarantee desireable behaviors for ``presheaves''.
As we will see, the conditions we impose
  imply that the map is injective,
  so that the element in $\cF_\sigma$ one-one corresponds to a function on $X_{\ge \sigma}(t)$.

Even though we refer to the object as a ``presheaf'',
  it differs from traditional presheaves over topological spaces
  in several ways.
While there appears to be a dictionary between
  open sets in topological spaces and
  the regions $X_{\ge \sigma}(t)$ for some cell $\sigma$,
  this analogy has some subtleties.
For example, in topology, the union of open sets is an open set,
  but the union of regions $X_{\ge \sigma}(t)$ is generally not of the form $X_{\ge \tau}(t)$.
The presents a problem: a presheaf should assign a vector space to each open set,
  yet the current definition of a ``presheaf'' over a cellular complex
  does not assign a vector space to the union of such regions.
This indicates that the definition of ``presheaves'' over cellular complexes
  lacks some nonlocal data
  compared to presheaves over topological spaces.
This is why we put the quotation marks around ``presheaves''.

The definition of sheaves is designed to address these two issues.
The first issue is resolved by the \emph{identity axiom},
  which ensures that sections are uniquely determined by their local values.
The second issue is resolved by the \emph{gluability axiom},
  which states that local sections can be glued together to form sections over larger sets.
We first review these two conditions for sheaves over topological spaces.
\begin{itemize}
  \item Identity axiom: If $\{U_i\}_{i \in I}$ is an open cover of $U$,
        and $c_1, c_2 \in \cF_U$,
        and $\res_{U, U_i} c_1 = \res_{U, U_i} c_2$ for all $i$,
        then $c_1 = c_2$.
  \item Gluability axiom: If $\{U_i\}_{i \in I}$ is an open cover of $U$, then given $c_i \in \cF_{U_i}$ for all $i$,
        such that $\res_{U_i, U_i \cap U_j} c_i = \res_{U_j, U_i \cap U_j} c_j$ for all $i, j$,
        then there is some $c \in \cF_U$
        such that $\res_{U, U_i} c = c_i$ for all $i$.
\end{itemize}
These two conditions can be interpreted as the exactness of the following chain complex
\begin{equation}\label{eq:topological-sheaf}
  0 \to \cF_U \to \prod_{i \in I} \cF_{U_i} \to \prod_{i, j \in I, i \ne j} \cF_{U_i \cap U_j}.
\end{equation}
The second map is $c \mapsto \{\res_{U, U_i} (c)\}_{i \in I}$
  and the last map is $\{c_i\}_{i \in I} \mapsto \{c_i - c_j\}_{i, j \in I, i \ne j}$,
  where we impose an ordering on the elements in $I$.
Identity is exactness at $\cF_U$
  and gluability is exactness at $\prod_i \cF_{U_i}$.

In the context of topological spaces,
  these conditions must be satisfied by all possible open sets.
However, for cellular complexes,
  the analogous notion of open sets is a bit different.
So instead we impose the following condition:
for every $i$-cell $\sigma \in X$ with $i \le t-2$,
  the following chain complex is exact
\begin{equation}\label{eq:cellular-complex-sheaf}
  0
  \to \cF_\sigma
  \to \prod_{\sigma_{i+1} \in X_{\ge \sigma}(i+1)} \cF_{\sigma_{i+1}}
  \to \prod_{\sigma_{i+2} \in X_{\ge \sigma}(i+2)} \cF_{\sigma_{i+2}}
\end{equation}
and for $i$-cell $\sigma \in X$ with $i = t-1$,
  the following chain complex is exact
\begin{equation}
  0
  \to \cF_\sigma
  \to \prod_{\sigma_{i+1} \in X_{\ge \sigma}(i+1)} \cF_{\sigma_{i+1}}.
\end{equation}
We will refer to the exactness at $\cF_\sigma$ as the identity axiom on $\sigma$
  and the exactness at $\prod_{\sigma_{i+1} \in X_{\ge \sigma}(i+1)} \cF_{\sigma_{i+1}}$ as the gluability axiom on $\sigma$.

The conditions we impose follow closely with the definition of sheaves over topological spaces,
  with the difference being that we now restrict ourselves to a specific family of ``open covers''
  and a particular set of intersections between the ``open sets''.
It is worth noting that in the traditional setting
  each intersection $U_i \cap U_j$ corresponds to exactly two open sets $U_i$ and $U_j$,
  which ensures that the sections on $U_i$ and $U_j$ are consistent.
Similarly, in our context, for every $(i+2)$-cell $\sigma_{i+2} \in X_{\ge \sigma}(i+2)$,
  there are exactly two $(i+1)$-cells $\sigma_{i+1}$ such that $\sigma \precdot \sigma_{i+1} \precdot \sigma_{i+2}$ (discussed near \Cref{claim:two}),
  which guarantees that the section on the two $(i+1)$-cells are consistent.
In fact, when $X$ is a simplicial complex,
  and we take $\{\sigma_{i+1}\}_{\sigma_{i+1} \in X_{\ge \sigma}(i+1)}$ as an open cover of $\sigma$,
  the two expressions in \Cref{eq:cellular-complex-sheaf} and \Cref{eq:topological-sheaf}
  are exactly the same.

We now describe an explicit characterization of $\cF$,
  as a consequence of the sheaf conditions.
\begin{lemma}\label{lem:structure-theorem-1}
  For every cell $\sigma \in X$,
  the map $\cF_{\sigma} \xrightarrow{\iota_\sigma} \prod_{\tau \in X_{\ge \sigma}(t)} \cF_{\tau}$,
  which sends $c \mapsto \prod_{\tau \in X_{\ge \sigma}(t)} \res_{\sigma, \tau}(c)$,
  is injective.

  Furthermore, the restriction map $\res_{\sigma, \pi}$ corresponds to the restriction map on $X(t)$
  \begin{equation}
    \hspace{2em}
    \begin{tikzcd}
      \cF_\sigma \arrow{r}{\iota_\sigma} \arrow[swap]{d}{\res_{\sigma, \pi}} & \prod_{\tau \in X_{\ge \sigma}(t)} \cF_{\tau} \arrow{d}{\res_{X_{\ge \sigma}(t), X_{\ge \pi}(t)}} \\
      \cF_\pi \arrow{r}{\iota_\pi} & \prod_{\tau \in X_{\ge \pi}(t)} \cF_{\tau}
    \end{tikzcd}
  \end{equation}
  where $\res_{X_{\ge \sigma}(t), X_{\ge \pi}(t)}$ is the map that restricts the domain of the function from $X_{\ge \sigma}(t)$ to $X_{\ge \pi}(t)$.
\end{lemma}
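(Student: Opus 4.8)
The plan is to prove the two assertions separately: the commutativity of the square is essentially a restatement of the composition axiom, while the injectivity of $\iota_\sigma$ requires a downward induction on dimension that feeds off the identity axioms.

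First I would dispose of the commutativity claim, which uses only the ``presheaf'' axioms. Fix an inclusion $\sigma \preceq \pi$ and a top cell $\tau \in X_{\ge \pi}(t)$; then $\sigma \preceq \pi \preceq \tau$, so the composition axiom gives $\res_{\sigma,\tau} = \res_{\pi,\tau}\,\res_{\sigma,\pi}$. Reading off the $\tau$-coordinate, the ``across then down'' composite sends $c$ to $\res_{\sigma,\tau}(c)$, while the ``down then across'' composite sends $c$ to $\res_{\pi,\tau}(\res_{\sigma,\pi}(c))$; these agree for every such $\tau$, which is precisely the commutativity of the diagram. No use of identity or gluability is needed here.

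For injectivity of $\iota_\sigma$ I would induct downward on $i = \dim\sigma$, from $i = t$ to $i = 0$. In the base case $i = t$ we have $X_{\ge\sigma}(t) = \{\sigma\}$ and $\iota_\sigma = \res_{\sigma,\sigma} = \id_{\cF_\sigma}$, which is injective. For $i = t-1$ we have $X_{\ge\sigma}(i+1) = X_{\ge\sigma}(t)$, so the identity axiom on $\sigma$ (exactness of the two-term complex at $\cF_\sigma$) says exactly that $\iota_\sigma$ is injective. For $i \le t-2$, the identity axiom on $\sigma$ says that $c \mapsto \prod_{\sigma_{i+1} \in X_{\ge\sigma}(i+1)} \res_{\sigma,\sigma_{i+1}}(c)$ is injective; post-composing each factor with $\iota_{\sigma_{i+1}}$, which is injective by the inductive hypothesis, preserves injectivity. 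I then observe that this composite factors through $\iota_\sigma$: for $\sigma_{i+1} \in X_{\ge\sigma}(i+1)$ and $\tau \in X_{\ge\sigma_{i+1}}(t)$, the composition axiom gives $\iota_{\sigma_{i+1}}(\res_{\sigma,\sigma_{i+1}}(c))_\tau = \res_{\sigma_{i+1},\tau}(\res_{\sigma,\sigma_{i+1}}(c)) = \res_{\sigma,\tau}(c) = \iota_\sigma(c)_\tau$, so every coordinate of the composite is a coordinate of $\iota_\sigma(c)$; conversely every $\tau \in X_{\ge\sigma}(t)$ lies above some $(i+1)$-cell $\sigma_{i+1}$ with $\sigma \preceq \sigma_{i+1} \preceq \tau$, so every coordinate of $\iota_\sigma(c)$ occurs among those of the composite. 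Thus the composite equals a coordinate reindexing of $\iota_\sigma$, and an injective map that factors through $\iota_\sigma$ forces $\iota_\sigma$ to be injective, completing the induction.

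The one structural point worth flagging is the claim used at the end of the inductive step: that every top cell $\tau \succeq \sigma$ contains an intermediate $(i+1)$-cell $\sigma_{i+1}$ with $\sigma \preceq \sigma_{i+1} \preceq \tau$, i.e.\ that the interval $[\sigma,\tau]$ in the face poset is graded. This holds for simplicial complexes (by the same vertex-counting as in \Cref{claim:two}) and for the cubical complexes, hence for the cellular complexes obtained from them, which is the setting here; it should be stated explicitly since otherwise the identity axioms on the lower-dimensional cells would not ``see'' all of $X_{\ge\sigma}(t)$. Everything else is a routine diagram chase invoking only the composition axiom and the assumed exactness of the sheaf complexes.
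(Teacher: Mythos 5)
Your proof is correct, and the overall skeleton (the commutativity claim from the composition axiom alone, plus a downward induction on dimension for injectivity) matches the paper's. The inner workings of the inductive step differ in a worthwhile way: the paper establishes the isomorphism $g_\sigma : \cF_\sigma \cong S$ using \emph{both} the identity and gluability axioms, and then explicitly constructs the inverse of $\iota_\sigma$ on its image by pulling back through each $\iota_\pi^{-1}$ and applying $g_\sigma^{-1}$. You instead factor the injection $c \mapsto \prod_{\sigma_{i+1}} \iota_{\sigma_{i+1}}(\res_{\sigma,\sigma_{i+1}}(c))$ through $\iota_\sigma$ followed by a coordinate-duplication map, which yields injectivity of $\iota_\sigma$ directly and uses only the identity axiom (exactness at $\cF_\sigma$), never gluability. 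This is a genuine simplification: gluability is invoked in the paper's proof but is not actually load-bearing there either (the family $\{c_\pi\}$ the paper builds is automatically $g_\sigma(c)$ for the $c$ witnessing $h \in \im\iota_\sigma$, so surjectivity of $g_\sigma$ onto $S$ is never needed), and your phrasing makes this explicit. You also correctly flag the gradedness of the face poset (every $\tau \succeq \sigma$ contains an intermediate $\sigma_{i+1}$) as the structural hypothesis needed for the coordinate-covering step; the paper uses this implicitly without comment. Both points would be worth surfacing if this proof were substituted in.
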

\begin{proof}
  We first show the injectivity of $\iota_\sigma$ through induction.
  For $\sigma$ with dimension $i = t$, the result is straightforward.
  For $\sigma$ with dimension $i = t-1$,
    the injectivity is implied from the identity axiom on $\sigma$.

  Assuming the statement holds for cells with dimension greater than $i$,
    we want to show that the statement also holds for $\sigma$ with dimension $i$.
  By the identity and gluability axioms on $\sigma$,
    we have
  \begin{equation}
    \cF_\sigma \stackrel{g_\sigma}{\cong} \{\{c_{\pi}\}_{\pi \in X_{\ge \sigma}(i+1)}:
      \forall \pi, \pi' \in X_{\ge \sigma}(i+1), \rho \in X_{\ge \sigma}(i+2),\,\,
      c_{\pi} \in \cF_{\pi},
      \res_{\pi,\rho}(c_{\pi}) = \res_{\pi',\rho}(c_{\pi'})\}
  \end{equation}
  where $c \mapsto \{\res_{\sigma,\pi}(c)\}_{\pi \in X_{\ge \sigma}(i+1)}$.
  The condition $\res_{\pi,\rho}(c_{\pi}) = \res_{\pi',\rho}(c_{\pi'})$
    is imposed only when $\pi \precdot \rho$ and $\pi' \precdot \rho$.
  Denote the RHS as $S$.

  To show that $\iota_\sigma$ is injective,
  we construct the inverse map $\im(\iota_\sigma) \to \cF_\sigma$.
  Given $h \in \im(\iota_\sigma) \subseteq \prod_{\tau \in X_{\ge \sigma}(t)} \cF_\tau$.
  Because $h(\tau) = \res_{\pi,\tau}(\res_{\sigma,\pi}(c))$ for all $\tau \succeq \pi$,
    we have $\prod_{\tau \in X_{\ge \pi}(t)} h(\tau) \in \iota_\pi(\cF_\pi)$.
  By the induction hypothesis, $\iota_\pi$ is injective,
    which allows us to define
    $c_\pi = \iota_\pi^{-1}(\prod_{\tau \in X_{\ge \pi}(t)} h(\tau)) \in \cF_{\pi}$
    for every $\pi \in X_{\ge \sigma}(i+1)$.
  It is straightforward to check that $\{c_{\pi}\}_{\pi \in X_{\ge \sigma}(i+1)} \in S$.
  Therefore, we can define the candidate inverse map $\iota_\sigma^{-1}$
    as $\iota_\sigma^{-1}(h) = g_\sigma^{-1}(\{c_{\pi}\}_{\pi \in X_{\ge \sigma}(i+1)})$.
  It is straightforward to check that the composition $\iota_\sigma^{-1} \circ \iota_\sigma$
    is the identity map on $\cF_\sigma$.

  The second statement about the restriction map
    is a direct consequence of the presheaf condition,
    $\res_{\pi,\tau} \res_{\sigma,\pi} = \res_{\sigma, \tau}$
    for $\tau \in X(t)$.
\end{proof}

Therefore, from now on,
  we will ocacasionally abuse the notation
  by viewing $\cF_\sigma$ as a subset of $\prod_{\tau \in X_{\ge \sigma}(t)} \cF_{\tau}$
  and interpreting elements of $\cF_\sigma$ as functions on $X_{\ge \sigma}(t)$.

\begin{lemma}\label{lem:structure-theorem-2}
  For every cell $\sigma \in X$,
  \begin{equation}\label{eq:structure-theorem}
    \cF_\sigma = \{c \in \prod_{\tau \in X_{\ge \sigma}(t)} \cF_{\tau}:
    \forall \tau' \in X_{\ge \sigma}(t-1), \,\, \res_{\sigma,\tau'}(c) \in \cF_{\tau'}\}.
  \end{equation}
  In particular, the sheaf $\cF$ is completely determined by
    $\{\cF_{\tau'}\}_{\tau' \in X(t-1)}$.
\end{lemma}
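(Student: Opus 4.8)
The plan is to prove the characterization in \eqref{eq:structure-theorem} by downward induction on the dimension $i$ of the cell $\sigma$, building directly on \Cref{lem:structure-theorem-1}, which already identifies $\cF_\sigma$ with a subspace of $\prod_{\tau \in X_{\ge \sigma}(t)} \cF_\tau$ via the injective map $\iota_\sigma$, and which shows restriction maps correspond to genuine restriction of functions on $X(t)$. Under this identification, the inclusion ``$\subseteq$'' is essentially immediate: if $c \in \cF_\sigma$ and $\tau' \in X_{\ge\sigma}(t-1)$, then $\res_{\sigma,\tau'}(c) \in \cF_{\tau'}$ by definition of the restriction maps of the presheaf together with the fact (from \Cref{lem:structure-theorem-1}) that this restriction is just restriction of the underlying function on $X(t)$. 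So the content is the reverse inclusion: any $c$ in the right-hand set actually comes from $\cF_\sigma$.

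For the reverse inclusion I would argue by downward induction on $i$. The base cases $i = t$ and $i = t-1$ are trivial and immediate from \Cref{lem:structure-theorem-1}, respectively. For the inductive step, suppose $c \in \prod_{\tau \in X_{\ge\sigma}(t)} \cF_\tau$ satisfies $\res_{\sigma,\tau'}(c) \in \cF_{\tau'}$ for every $\tau' \in X_{\ge\sigma}(t-1)$. First I would promote this to the statement that $\res_{\sigma,\pi}(c) \in \cF_\pi$ for every $\pi \in X_{\ge\sigma}(i+1)$: indeed, for such $\pi$, every $(t-1)$-cell above $\pi$ is also above $\sigma$, so the hypothesis on $c$ restricted to $X_{\ge\pi}(t)$ says precisely that the restriction lies in the right-hand set of \eqref{eq:structure-theorem} for $\pi$ in place of $\sigma$; by the induction hypothesis this equals $\cF_\pi$. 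Write $c_\pi := \res_{\sigma,\pi}(c) \in \cF_\pi$. Now the family $\{c_\pi\}_{\pi \in X_{\ge\sigma}(i+1)}$ is automatically compatible on overlaps — for $\rho \in X_{\ge\sigma}(i+2)$ and $\pi, \pi' \precdot \rho$, both $\res_{\pi,\rho}(c_\pi)$ and $\res_{\pi',\rho}(c_{\pi'})$ equal $\res_{\sigma,\rho}(c)$ by functoriality of restriction on $X(t)$ — so the gluability axiom on $\sigma$ produces some $\tilde c \in \cF_\sigma$ with $\res_{\sigma,\pi}(\tilde c) = c_\pi$ for all $\pi$. Finally, the identity axiom on $\sigma$ (or just injectivity of $\iota_\sigma$) forces $\iota_\sigma(\tilde c) = c$, since $\tilde c$ and $c$ agree after restriction to every $\pi \in X_{\ge\sigma}(i+1)$ and hence on all of $X_{\ge\sigma}(t)$. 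Thus $c \in \cF_\sigma$, completing the induction. The final sentence of the lemma — that $\cF$ is determined by $\{\cF_{\tau'}\}_{\tau' \in X(t-1)}$ — is then immediate, since \eqref{eq:structure-theorem} reconstructs every $\cF_\sigma$, and the restriction maps are all restriction of functions on $X(t)$.

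The main subtlety to get right is the bookkeeping in the inductive step: specifically, checking that the hypothesis ``$\res_{\sigma,\tau'}(c) \in \cF_{\tau'}$ for all $(t-1)$-cells $\tau' \succeq \sigma$'' does indeed descend to the analogous hypothesis for each $\pi \in X_{\ge\sigma}(i+1)$, which relies on $X_{\ge\pi}(t-1) \subseteq X_{\ge\sigma}(t-1)$ and on \Cref{lem:structure-theorem-1} to interpret all restrictions uniformly as restriction of functions on top cells. I expect that to be the only place where a little care is needed; everything else is a direct invocation of the identity and gluability axioms on $\sigma$ together with \Cref{lem:structure-theorem-1}. No assumption beyond the two sheaf axioms and the ``exactly two intermediate cells'' property (\Cref{claim:two}, used implicitly in the gluability axiom) is required.
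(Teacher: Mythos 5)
Your proof is correct and follows essentially the same downward induction the paper uses: reduce to showing $T_\sigma \subseteq \cF_\sigma$, restrict $c$ to each $\pi \in X_{\ge\sigma}(i+1)$ and invoke the induction hypothesis to land in $\cF_\pi$, then use gluability (with compatibility automatic from functoriality of restriction on $X(t)$) together with \Cref{lem:structure-theorem-1} to identify the glued section with $c$. The only difference is that you spell out the final identification step and the descent of the hypothesis to $(i+1)$-cells more explicitly than the paper's terse ``It is clear that $c_\pi \in T_\pi$'' and ``which implies $h \in \cF_\sigma$.''
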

\begin{proof}
  We again proof by induction.
  For $\sigma$ with dimension $i = t-1$ or $t$ the statement is clearly true.

  Assuming the statement holds for cells with dimension greater than $i$,
    we want to show that the statement also holds for $\sigma$ with dimension $i$.
  Denote the RHS of \Cref{eq:structure-theorem} as $T_\sigma$.
  By the identity and gluability axioms on $\sigma$
  \begin{equation}
    \cF_\sigma \stackrel{g_\sigma}{\cong} \{\{c_{\pi}\}_{\pi \in X_{\ge \sigma}(i+1)}:
      \forall \pi, \pi' \in X_{\ge \sigma}(i+1), \rho \in X_{\ge \sigma}(i+2),\,\,
      c_{\pi} \in \cF_{\pi},
      \res_{\pi,\rho}(c_{\pi}) = \res_{\pi',\rho}(c_{\pi'})\}.
  \end{equation}
  Denote the RHS as $S$.

  It is clear that $\cF_\sigma \subseteq T_\sigma$,
    which is a direct consequence of the presheaf condition.
  What remains is to show that $T_\sigma \subseteq \cF_\sigma$.
  Given $h \in T_\sigma$,
    we define $c_\pi = \prod_{\tau \in X_{\ge \pi}(t)} h(\tau)$
    for every $\pi \in X_{\ge \sigma}(i+1)$.
  It is clear that $c_\pi \in T_\pi$.
  By the induction hypothesis, $T_\pi = \cF_\pi$.
  Therefore, $\{c_{\pi}\}_{\pi \in X_{\ge \sigma}(i+1)} \in S$,
    which implies $h \in \cF_\sigma$,
    as desired.
\end{proof}

In fact, the statements of \Cref{lem:structure-theorem-1,lem:structure-theorem-2}
  together is equivalent to the identity and gluability axioms.
The means that we could have defined sheaves over cellular complexes
  using the statements in the lemmas.
We denote the unique sheaf with $\iota_\sigma(\cF_{\sigma}) = C_{\sigma} \subseteq \FF_q^{X_{\ge \sigma}(t)}$
  as $\cF(X, \{C_{\sigma}\}_{{\sigma} \in X(t-1)})$,
We refer to $C_{\sigma}$ as the local codes,
  which serves a role similar to the local codes used in Tanner codes.

Conceptually, these $\cF_{\tau'}$ for $\tau' \in X(t-1)$
  should be regarded as analogs of stalks.
In the context of sheaf over topological spaces,
  it is known that
  ``many properties of the sheaf are determined by its stalks''.
Similarly, in our context of cellular complex,
  we can adopt the same principle:
  ``many properties of the global code are determined by its local codes''.

This characterization of sheaves allows us to define the concept of a dual sheaf $\cF^\perp$,
  where $\iota_\sigma((\cF^\perp)_\sigma) = (\iota_\sigma(\cF_\sigma))^\perp$
  for $\sigma \in X(t-1)$.
  (This notion was also proposed in \cite{panteleev2024maximally} using the notation $\cF^T$.)
As we will see,
  under suitable conditions,
  there exists a Poincaré duality between $\cF$ and $\cF^\perp$.
This will allow us to identify the X and Z logical operators with the cohomologies of $\cF$ and $\cF^\perp$,
  respectively.

Interestingly, the two statements in the lemmas
  were used in \cite{panteleev2024maximally} to define the Tanner sheaves.
The only technical difference
  is that we allow $\cF_\tau \ne \FF_q$ for $\tau \in X(t)$.
In \cite{panteleev2024maximally},
  the statement of \Cref{lem:structure-theorem-1} is assumed as part of the definition of sheaves,
  and the statement of \Cref{lem:structure-theorem-2} is treated as an additional requirement specific for Tanner sheaves.
In contrast,
  we argue that these two conditions probably should be included right from the beginning in the definition of sheaves over cellular complexes,
  as they are closely related to the identity and gluability axioms that are central to the definition of sheaves over topological spaces.

\begin{remark}
  Our definition of sheaf depends on the value of $t$.
  In particular, we apply different conditions
    for $i = t-1$ and $t \le t-2$.
  As a result, if we take the discrete union of
    a $t$-dimensional cellular complex $X$ with sheaf $\cF$
    and a $t'$-dimensional cellular complex $X'$ with sheaf $\cF'$,
    the resulting structure is generally not a sheaf when $t \ne t'$,
    even though the discrete union of $X$ and $X'$ reamins a cellular complex.
  This contrasts with the definition of sheaves in \cite{curry2014sheaves}

  We argue that this behavior is a feature, not a bug.
  In our context,
    even though we describe the objects through $t$-dimensional cellular complexes,
    the goal is to model $t$-dimensional manifolds.
  Similarly, the discrete union
    of two manifolds of different dimensions is no longer a manifold.
  In particular, this dependence on $t$ is manifest in the statement of Poincaré duality,
    which we will establish in \Cref{sec:poincare}.
\end{remark}

\subsection{Sheaf cohomology and chain complexes}

Given a sheaf $\cF$ on a cellular complex $X$, we can construct a chain complex $\cC(X, \cF)$
  analogous to the construction of the cellular cohomology.
This sheaf-induced chain complex has recently become
  a valuable tool for constructing classical and quantum codes
  \cite{first2022good,dinur2023new,dinur2024expansion,panteleev2024maximally}.

We first define the vector spaces.
For $0 \le i \le t$,
  let $\cC^i(X, \cF)$ be the space of cochains defined on $\sigma \in X(i)$
  with values in $\cF_\sigma$
\begin{equation}
  \cC^i(X, \cF) = \bigoplus_{\sigma \in X(i)} \cF_\sigma.
\end{equation}
Otherwise, $\cC^i(X, \cF) = 0$.
It is clear that $\cC^i(X, \cF)$ forms a vector space, since each $\cF_\sigma$ is a vector space.

We now define the coboundary maps $\delta^i: \cC^i(X, \cF) \to \cC^{i+1}(X, \cF)$.
Let $\a \in C^i$ be a cochain.
To define $\delta^i(\a) \in \cC^{i+1}(X, \cF)$,
  it is sufficient to specify $(\delta^i \a)(\tau) \in \cF_\tau$ for every $\tau \in X(i+1)$.
In particular, for $0 \le i \le t-1$ we define
\begin{equation}
  \delta^i \a (\tau) = \sum_{\sigma \precdot \tau} \res_{\sigma, \tau}(\a(\sigma)).
\end{equation}
Otherwise, $\delta^i$ is defined as the zero map.
(In general, the definition contains signs, $\sum_{\sigma \precdot \tau} (-)^{(...)} \res_{\sigma, \tau}(\a(\sigma))$.
  We evade this technical discussion by studying finite fields with characteristic $2$
  where $-1 = 1$.)

We finally check that $\delta \circ \delta = 0$. We have
\begin{equation}
  \delta^{i+1} \delta^i \a(\pi) = \sum_{\sigma \precdot \tau \precdot \pi} \res_{\sigma,\pi}(\a(\sigma)) = 0.
\end{equation}
The last equality holds
  because for each pair $\sigma \in X(i), \pi \in X(i+2)$,
  there exist two $\tau \in X(i+1)$ such that $\sigma \precdot \tau \precdot \pi$,
  as discussed near \Cref{claim:two}.

Note that the standard cellular complex of $X$
  can also be viewed as a chain complex induced from a sheaf,
  where the sheaf consists of constant sections,
  i.e. each $\cF_\sigma \cong \FF_q$ forms a repetition code.
So it makes sense to overload the notation
  and express the cellular complex as $\cC(X, \FF_q)$.

Using the chain complex $\cC^{\bullet}(X, \cF)$,
  we can define the dual chain complex $\cC_{\bullet}(X, \cF)$.
These two chain complexes
  further induce the cohomology and homology groups, $H^\bullet(X, \cF)$ and $H_\bullet(X, \cF)$.

We observe a relation between the homology and cohomology of the sheaf and its dual sheaf.
\begin{claim}\label{claim:poincare-0}
  Given a sheaf $\cF$ on a cellular complex $X$, we have
  \begin{equation}
    H_t(X, \cF) \cong H^0(X, \cF^\perp).
  \end{equation}
\end{claim}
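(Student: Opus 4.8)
The plan is to unwind both sides in terms of the local codes $\{C_{\tau'}\}_{\tau' \in X(t-1)}$ using the structure theorems (\Cref{lem:structure-theorem-1,lem:structure-theorem-2}), and exhibit an explicit isomorphism that matches up the relevant (co)chain groups. First I would compute $H_t(X, \cF)$: since $\cC_{t+1} = 0$, there are no $t$-boundaries, so $H_t(X,\cF) = Z_t(X,\cF) = \ker\big(\partial_t \colon \cC_t(X,\cF) \to \cC_{t-1}(X,\cF)\big)$. Here $\cC_t(X,\cF) = \bigoplus_{\tau \in X(t)} \cF_\tau$, and $\partial_t$ is the transpose of $\delta^{t-1} \colon \cC^{t-1}(X,\cF) \to \cC^t(X,\cF)$, whose $(\tau',\tau)$-block for $\tau' \precdot \tau$ is $\res_{\tau',\tau}$. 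Next I would compute $H^0(X,\cF^\perp)$: since $\cC^{-1} = 0$, there are no $0$-coboundaries, so $H^0(X,\cF^\perp) = Z^0(X,\cF^\perp) = \ker\big(\delta^0 \colon \cC^0(X,\cF^\perp) \to \cC^1(X,\cF^\perp)\big)$, where $\delta^0 \beta(\tau') = \sum_{v \precdot \tau'} \res_{v,\tau'}(\beta(v))$ summed over vertices $v$.

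The key step is to identify $\cC^0(X,\cF^\perp)$ with $\cC_t(X,\cF)$ and show the kernel conditions correspond. Using \Cref{lem:structure-theorem-1}, a $0$-cochain of $\cF^\perp$ is a tuple $\beta = (\beta_v)_{v \in X(0)}$ with $\beta_v \in (\cF^\perp)_v$, and via the injection $\iota_v$ each $\beta_v$ is a function on $X_{\ge v}(t)$ lying in a prescribed subspace. But I actually want to go the other way: I would argue that a $0$-cocycle of $\cF^\perp$ — a global section — is, by the gluability axiom applied iteratively down from dimension $t-1$, precisely a choice of value in $\cF^\perp_\tau$ for each $\tau \in X(t)$ that is "consistent" across every $(t-1)$-cell, i.e. restricts into $(\cF^\perp)_{\tau'} = C_{\tau'}^\perp$ for every $\tau' \in X(t-1)$. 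On the other side, a $t$-cycle of $\cF$ is an element $(c_\tau)_{\tau \in X(t)} \in \bigoplus_\tau \cF_\tau$ with $\partial_t c = 0$, meaning for every $(t-1)$-cell $\tau'$, $\sum_{\tau \succ \tau'} \res_{\tau',\tau}^T(c_\tau) = 0$ in $\cF_{\tau'}$. Since $\cF_{\tau'} \cong C_{\tau'}$ sits inside $\FF_q^{X_{\ge\tau'}(t)}$, the transpose condition says exactly that the vector $(c_\tau)_{\tau \succ \tau'}$, viewed in $\FF_q^{X_{\ge\tau'}(t)}$, is orthogonal to every element of $C_{\tau'}$, i.e. lies in $C_{\tau'}^\perp$. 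So both sides are described by the identical local constraint "restrict into $C_{\tau'}^\perp$ at each $(t-1)$-cell," and the map $(c_\tau)_\tau \mapsto (c_\tau)_\tau$ is the desired isomorphism.

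The main obstacle will be handling the bookkeeping around the assumption that $\cF_\tau$ for top cells $\tau \in X(t)$ need not equal $\FF_q$ (the technical difference from \cite{panteleev2024maximally} flagged in the text): the dual $\cF^\perp$ is defined only via $\iota_\sigma((\cF^\perp)_\sigma) = (\iota_\sigma(\cF_\sigma))^\perp$ for $\sigma \in X(t-1)$, so I need to be careful about what $(\cF^\perp)_\tau$ means for $\tau \in X(t)$ and about which ambient inner product on $\bigoplus_{\tau\in X(t)}\cF_\tau$ makes $\partial_t = (\delta^{t-1})^T$ the honest transpose. The cleanest route is to fix once and for all the identification $\cF_\tau \cong \FF_q$ (with its chosen basis) for every top cell, so that $\FF_q^{X_{\ge\tau'}(t)}$ carries the standard form and $C_{\tau'}^\perp$ is unambiguous; then the duality $\bigl(\ker \partial_t\bigr) = \bigl(\operatorname{im}\delta^{t-1}\bigr)^\perp$-style argument, combined with the structure theorem characterization of $\cF_{\tau'}$ and $(\cF^\perp)_{\tau'}$ as mutually orthogonal subspaces of $\FF_q^{X_{\ge\tau'}(t)}$, closes the loop. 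I expect the verification that the global-section description of $H^0(X,\cF^\perp)$ agrees with the purely-local "$C_{\tau'}^\perp$ at each $\tau'$" description to require one more invocation of \Cref{lem:structure-theorem-2}, applied to the dual sheaf, to rule out any extra nonlocal obstruction — that is the step I would write out most carefully.
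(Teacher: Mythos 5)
Your proposal matches the paper's proof in both strategy and execution: the paper also observes that $H_t(X,\cF)=\ker\partial_t$ and $H^0(X,\cF^\perp)=\ker\delta^0$, and exhibits both as the set $S=\{\a\in\prod_{\tau\in X(t)}\cF_\tau:\forall\sigma\in X(t-1),\ \{\a(\tau)\}_{\tau\in X_{\ge\sigma}(t)}\in C_\sigma^\perp\}$ --- for the homology side via the identity $\iota_\sigma^T(\{\a(\tau)\})=0\iff\{\a(\tau)\}\in C_\sigma^\perp$, and for the cohomology side by propagating the edge-consistency of a $0$-cocycle up to the top cells via a \Cref{lem:structure-theorem-2}-style argument, exactly as you outline. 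The one place you could tighten is your final worry: the paper resolves the "which inner product" issue not by forcing $\cF_\tau\cong\FF_q$ on top cells but by noting that the chosen bases on each $\cF_\sigma$ already give an unambiguous transpose $\res_{\sigma,\tau}^T$, which keeps the claim valid in the slightly more general setting the paper allows.
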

\begin{proof}
  Following from the characterization of sheaves,
    let $\cF = \cF(X, \{C_{\sigma}\}_{{\sigma} \in X(t-1)})$
    and $\cF^\perp = \cF(X, \{C_{\sigma}^\perp\}_{{\sigma} \in X(t-1)})$.
  By definition, $H_t(X, \cF)$ is induced from the following part of the chain complex
    $0 \xrightarrow{\partial_{t+1}} \cC_{t}(X, \cF) \xrightarrow{\partial_t} \cC_{t-1}(X, \cF)$
  and
  $H^0(X, \cF^\perp)$ is induced from the following part of the chain complex
    $0 \xrightarrow{\delta^{-1}} \cC^0(X, \cF^\perp) \xrightarrow{\delta^{0}} \cC^1(X, \cF^\perp)$.
  That means
  \begin{equation}
    H_t(X, \cF) = \ker \partial_t = \{\a \in \prod_{\tau \in X(t)} \cF_\tau:
      \forall \sigma \in X(t-1), \sum_{\tau \succdot \sigma} \res_{\sigma, \tau}^T(\a(\tau)) = 0\}
  \end{equation}
  and
  \begin{equation}
    H^0(X, \cF^\perp) = \ker \delta^0 = \{\beta \in \prod_{\pi \in X(0)} \cF_\pi:
      \forall \rho \in X(1), \sum_{\pi \precdot \rho} \res_{\pi, \rho}(\beta(\pi)) = 0\}.
  \end{equation}
  Recall that, in our context $\cF_\sigma$ has a choice of basis,
    which allows us to define the transpose $\res_{\sigma, \tau}^T: \cF_\tau \to \cF_\sigma$.
  We would like to show that both are isomorphic to the set of global sections of $\cF^\perp$
  \begin{equation}
    S = \{\a \in \prod_{\tau \in X(t)} \cF_\tau: \forall \sigma \in X(t-1), \{\a(\tau)\}_{\tau \in X_{\ge \sigma}(t)} \in C_\sigma^\perp\}.
  \end{equation}

  We begin by showing $H_t(X, \cF) \cong S$.
  The key is to understand the meaning of
    $\sum_{\tau \succdot \sigma} \res_{\sigma, \tau}^T(\a(\tau)) = 0$.
  Let $\iota_{\sigma}: \cF_\sigma \to \prod_{\tau \in X_{\ge \sigma}(t)} \cF_\tau$
    be the map such that $c \mapsto \prod_{\tau \in X_{\ge \sigma}(t)} \res_{\sigma, \tau}(c)$.
  Notice that
  \begin{equation}
    \sum_{\tau \succdot \sigma} \res_{\sigma, \tau}^T(\alpha(\tau)) = 0
    \iff
    \iota_{\sigma}^T(\{\alpha(\tau)\}_{\tau \in X_{\ge \sigma}(t)}) = 0
    \iff
    \{\alpha(\tau)\}_{\tau \in X_{\ge \sigma}(t)} \in C_\sigma^\perp.
  \end{equation}
  The last iff holds because $C_\sigma = \im \iota_\sigma$.

  We now show $H^0(X, \cF^\perp) \cong S$.
  For each $\rho \in X(1)$,
    there are exactly two $\pi_0, \pi_1 \in X(0)$ such that $\pi_0, \pi_1 \precdot \rho$.
  This implies that the restrictions on $\rho$ agree,
    $\res_{\pi_0, \rho}(\b(\pi_0)) = \res_{\pi_1, \rho}(\b(\pi_1))$.
  Similar to the proof in \Cref{lem:structure-theorem-2},
    one can show that for each $\tau \in X(t)$,
    the value $\res_{\pi, \tau}(\b(\pi))$ is independent of $\pi \in X(0)$.
  This allows us to map $\b$ to the function $\a \in \prod_{\tau \in X(t)} \cF_\tau$,
    where $\a(\tau) = \res_{\pi, \tau}(\b(\pi))$ for any choice of $\pi \in X(0)$.
  The image $\a$ satisfies
    $\{\a(\tau)\}_{\tau \in X_{\ge \sigma}(t)} \in C_\sigma^\perp$ for all $\sigma \in X(t-1)$
    because $\b(\pi) \in \cF^\perp_\pi$ and its restriction $\res_{\pi, \sigma} \in \cF^\perp_\sigma$.
  It is straightforward to check that this is an isomorphism.
  (This relation between $H^0(X, \cF^\perp)$ and the global sections holds generally for sheaves.)
\end{proof}

The equation $H_t(X, \cF) \cong H^0(X, \cF^\perp)$ is reminiscent of Poincaré duality.
In fact, as we will show in \Cref{sec:poincare},
  under suitable assumptions detailed in \Cref{sec:locally-acyclic-sheaves},
  this relation holds more generally,
  with $H_{t-i}(X, \cF) \cong H^i(X, \cF^\perp)$
  for $0 \le i \le t-1$.

\subsection{Quantum LDPC codes based on sheaves}

We are ready to describe a family of qLDPC codes
  based on Tanner sheaf complexes.
We will later show that they have a natural geometric interpretation
  and a natural cup product.

Recall that a quantum CSS code is equivalent to a three-term chain complex.
The previous section provides a chain complex
  $\cC(X, \{C_\sigma\}_{\sigma \in X(t-1)})$.
So the only additional data is an integer $\ell$
  which specifies which of the three terms to select.
This leads to the quantum code $Q(X, \{C_\sigma\}_{\sigma \in X(t-1)}, \ell)$
  defined by
\begin{equation}
  \cC^{\ell-1}(X, \{C_\sigma\}_{\sigma \in X(t-1)}) \to \cC^{\ell}(X, \{C_\sigma\}_{\sigma \in X(t-1)}) \to \cC^{\ell+1}(X, \{C_\sigma\}_{\sigma \in X(t-1)}).
\end{equation}
The X distance is related to the small set coboundary expansion \cite{hopkins2022explicit} of the above chain complex.

Following the discussions in \Cref{sec:interpret-logical-operators},
  and more rigorously in \Cref{sec:poincare}\footnote{
    The discussion in \Cref{sec:poincare} is not complete,
    as we have only show the correspondece at the level of homology.
    To establish the claimed result, we must also show a correspondence that includes a bound between the weight of the chain and cochain.
    Nevertheless, a similar setting for cubical complexes was addressed in \cite[Sec 7]{dinur2024expansion}
      and it is possible to generalize that proof to general cellular complexes.},
  the Z distance is related to the small set coboundary expansion of another chain complex
\begin{equation}
  \cC^{t-\ell-1}(X, \{C_\sigma^\perp\}_{\sigma \in X(t-1)}) \to \cC^{t-\ell}(X, \{C_\sigma^\perp\}_{\sigma \in X(t-1)}) \to \cC^{t-\ell+1}(X, \{C_\sigma^\perp\}_{\sigma \in X(t-1)}).
\end{equation}

We will not discuss how to prove small set coboundary expansion in this paper,
  but the tools in \cite[Sec 6]{dinur2024expansion} are well-suited for this task.

\subsection{A map from cochains to chains for sheaf complexes}

Similar to \Cref{sec:map-logical-to-geometry},
  we map Pauli X operators to geometric objects
  which can be used to define the triple intersection number.
More specifically, we will map cochains to chains.
Strictly speaking, this discussion is not necessary,
  since we will take a different route and define the cup product directly.
However,
  we still provide this discussion
  because these geometric objects are easier to visualized
  and are helpful to keep in mind.
For simplicity, in this and the following sections,
  we will primarily focus on the case where $\cF_{\tau} = \FF_q$
  for all $\tau \in X(t)$.
In \Cref{sec:abstract}, we will extend the discussion to the case where $\cF_{\tau}$ is not necessarily $\FF_q$.

For $0 \le i \le t$,
  we will map every cochain
  $\a \in \cC^{i}(X, \cF)$,
  to a chain on the ``interior'' of the nerve of $X$,
  denoted as $\tilde \a \in \cC_{t-i}(\cN_{int}(X), \FF_q)$.
The nerve of $X$, denoted as $\cN(X)$, is a simplicial complex
  where the vertices correspond to the cells in $X$,
  and the edges correspond to pairs of cells related by inclusion
  $\{(\sigma, \tau): \sigma \precneqq \tau\}$.
More generally,
  the $i$-simplices of $\cN(X)$ correspond to sequences of cell inclusions of length $i+1$,
  $\{(\sigma_0, ..., \sigma_i): \sigma_0 \precneqq ... \precneqq \sigma_i\}$.
The boundary of the nerve $\cN_{\partial}(X)$
  consists of simplices
  that lie on the $(t-1)$-skeleton of $X$.
The interior of the nerve $\cN_{int}(X)$
  consists of simplices in $\cN(X) \backslash \cN_{\partial}(X)$.
Formally,
  the chains in $\cC_{\bullet}(\cN_{int}(X), \FF_q)$
  can be thought of as the quotient $\cC_{\bullet}(\cN(X), \FF_q)/\cC_{\bullet}(\cN_{\partial}(X), \FF_q)$,
  as in the construction of relative homology.

We are ready to define the map from a cochain $\a \in \cC^{i}(X, \cF)$
  to a chain $\tilde \a \in \cC_{t-i}(\cN_{int}(X), \FF_q)$.
Given a pair of cells $\sigma \in X(i), \tau \in X(t)$
  with $\sigma \preceq \tau$.
Let $S(\sigma, \tau)$ be the collection of $(t-i)$-simplices,
  $\{(\sigma_0, ..., \sigma_{t-i}): \sigma_0 \precneqq ... \precneqq \sigma_{t-i}\}$
  with $\sigma_0 = \sigma$ and $\sigma_{t-i} = \tau$.
We assign the value $\res_{\sigma,\tau}(\a(\sigma)) \in \FF_q$ to every simplex in $S(\sigma, \tau)$.
Formally,
\begin{equation}
  \tilde \a = \sum_{\sigma \in X(i), \tau \in X(t)} \sum_{\pi \in S(\sigma, \tau)} \res_{\sigma,\tau}(\a(\sigma)) \pi.
\end{equation}
Notice that $\tilde \a$ do not involve any simplices that lie on the boundary $\cN_{\partial}(X)$.

This construction is consistent with the one used for cubical complexes discussed in \Cref{sec:map-logical-to-geometry}.
Similar to the case of a cubical complex,
  the corresponding $(t-i)$-dimensional surface $\tilde \a$
  ends on the boundaries in a specific way,
  where the values near a boundary $\tau' \in X(t-1)$ form a vector in $C_{\tau'} \subseteq \FF_q^{X_{\ge \tau'}(t)}$.

This geometric perspective helps clarify the upcoming discussion of the cup product.
The dual of the cup product is the intersection.
Consider a $(t-i)$-dimensional surface $\tilde \a_1$ associated with $\a_1 \in \cC^{i}(X, \{C_{1,\tau'}\}_{\tau' \in X(t-1)})$
  and a $(t-j)$-dimensional surface $\tilde \a_2$ associated with $\a_2 \in \cC^{j}(X, \{C_{2,\tau'}\}_{\tau' \in X(t-1)})$.
The intersection of $\tilde \a_1$ and $\tilde \a_2$
  is a $(t-i-j)$-dimensional surface $\tilde \a$.
Furthermore, the values near the boundary of $\tilde \a$
  are derived from the product of the boundary values of $\tilde \a_1$ and $\tilde \a_2$,
  which lie in $C_{1,\tau'} \odot C_{2,\tau'}$,
  where $C_1 \odot C_2 = \{c_1 \odot c_2: c_1 \in C_1, c_2 \in C_2\}$
  and $c_1 \odot c_2$ is the entrywise product of $c_1$ and $c_2$.
By taking the dual from the surface $\tilde c$ back to the cochain $c$,
  this suggests that the the cup product of $c_1$ and $c_2$
  should lie in $\cC^{i+j}(X, \{C_{1,\tau'} \odot C_{2,\tau'}\}_{\tau' \in X(t-1)})$.
This is exactly what we will show in the remaining sections.

\subsection{Cup product on a sheaf complex over a simplicial complex}

We first consider a simpler case where $X$ is a simplicial complex.
In the next section, we will triangulate general cellular complex,
  which allows us to reduce the problem to the case of simplicial complex.

The goal is to define a cup product for the cohomology classes
\begin{equation}
  H^i(X, \cF_1) \times H^j(X, \cF_2)
  \xrightarrow{\cup} H^{i+j}(X, \cF_1 \odot \cF_2)
\end{equation}
where $\cF_1 \odot \cF_2 = \cF(X, \{C_{1,\sigma} \odot C_{2,\sigma}\}_{\sigma \in X(t-1)})$
where
  $C_1 \odot C_2 = \Span\{c_1 \odot c_2: c_1 \in C_1, c_2 \in C_2\}$
  is the vector space spanned by $c_1 \odot c_2$
  and $c_1 \odot c_2$ is the entrywise product of $c_1, c_2$.
To do so, we will first define a cup product for the cochains
\begin{equation}
  \cC^i(X, \cF_1) \times \cC^j(X, \cF_2)
  \xrightarrow{\cup} \cC^{i+j}(X, \cF_1 \odot \cF_2)
\end{equation}
and show that such map interacts well with the coboundary map.

Notice that the ``presheaf'' constructed from
  $\{\cF_{1,\sigma} \odot \cF_{2,\sigma}\}_{\sigma \in X}$
  is generally different from the sheaf $\cF_1 \odot \cF_2$ defined earlier.
In particular, while $\cF_{1,\sigma} \odot \cF_{2,\sigma} \subseteq (\cF_1 \odot \cF_2)_\sigma$,
  they are not necessarily equal.
This parallels the situation the traditional sheaf theory,
  where one often needs to apply sheafification after performing elementary operations.

Recall that for two cochains on a simplicial complex
  $\a \in \cC^i(X, \FF_q), \b \in \cC^j(X, \FF_q)$,
  the cup product $\a \cup \b \in \cC^{i+j}(X, \FF_q)$ is defined by
  \begin{equation}
    (\a \cup \b)([v_0, ..., v_i, ..., v_{i+j}]) = \a([v_0, ..., v_i]) \b([v_i, ..., v_{i+j}])
  \end{equation}
  for every $(i+j)$-simplex $[v_0, ..., v_i, ..., v_{i+j}]$ in $X$,
  where $[v_0, ..., v_i]$ denotes the simplex with vertices $v_0, ..., v_i$.
Notice that the definition assumes an ordering of the vertices.

The idea behind defining the new cup product is simple.
  We combine the contributions from each simplex
  analogous to how we combine the values from each intersection.
Since the cup product on each simplex depends on the vertex ordering,
  we assign a global ordering to all vertices in $X$ from the start.
This ensures a consistent ordering among all simplices,
  which will be important later on.

Given $\a \in \cC^i(X, \cF)$,
  let $\a_{\le \tau}: X_{\le \tau}(i) \to \FF_q$
  be the map where $\sigma \mapsto \a(\sigma)|_\tau$.
(Recall that $c|_\tau$ is a shorthand for $\res_{\sigma,\tau} (c)$)
This local component $\a_{\le \tau}$ captures the $(t-i)$-dimensional surface structure
  in the $t$-simplex $\tau$.
It is clear that there is a bijection between $\a \in \cC^i(X, \cF^{full})$ and $\{\a_{\le \tau}\}_{\tau \in X(t)} \in \prod_{\tau \in X(t)} (X_{\le \tau}(i) \to \FF_q)$
  where $\cF^{full}_{\sigma} = (X_{\ge \sigma}(t) \to \FF_q)$ could be any local section.
That means given the values $\{\a_{\le \tau}\}_{\tau \in X(t)}$,
  we can combine them into an element in $\cC^i(X, \cF^{full})$.

Notice that $\a_{\le \tau}: X_{\le \tau}(i) \to \FF_q$
  can be interpreted as a $i$-cochain on the simplex $X_{\le \tau}$,
  since $(X_{\le \tau}(i) \to \FF_q) \cong \cC^i(X_{\le \tau}, \FF_q)$.
Thus, we define $\a \cup \b$ as
\begin{equation}
  (\a \cup \b)_{\le \tau} = (\a_{\le \tau}) \cup (\b_{\le \tau}),
\end{equation}
where $(\a_{\le \tau}) \cup (\b_{\le \tau})$ is the usual cup product
  $\cC^i(X_{\le \tau}, \FF_q) \times \cC^j(X_{\le \tau}, \FF_q)
  \xrightarrow{\cup} \cC^{i+j}(X_{\le \tau}, \FF_q)$.

At this moment, we only know that $\a \cup \b \in \cC^{i+j}(X, \cF^{full})$.
We now show that $\a \cup \b \in \cC^{i+j}(X, \cF_1 \odot \cF_2)$.
All we need to show is that $(\a \cup \b)|_\sigma \in \cF_{1,\sigma} \odot \cF_{2,\sigma}$, since $\cF_{1,\sigma} \odot \cF_{2,\sigma} \subseteq (\cF_1 \odot \cF_2)_\sigma$.
We have
\begin{multline}
  (\a \cup \b)(\sigma)
  = \{(\a \cup \b)(\sigma)|_\tau\}_{\tau \in X_{\ge \sigma}(t)}
  = \{(\a \cup \b)_{\le \tau}(\sigma)\}_{\tau \in X_{\ge \sigma}(t)}
  = \{((\a_{\le \tau}) \cup (\b_{\le \tau}))(\sigma)\}_{\tau \in X_{\ge \sigma}(t)} \\
  = \{\a_{\le \tau}(\sigma_1) \b_{\le \tau}(\sigma_2)\}_{\tau \in X_{\ge \sigma}(t)}
  = \{\a_{\le \tau}(\sigma_1)\}_{\tau \in X_{\ge \sigma}(t)} \odot \{\b_{\le \tau}(\sigma_2)\}_{\tau \in X_{\ge \sigma}(t)}
\end{multline}
where the equalities in the first line follow from definitions.
The first equality in the second line follows from the definition of the usual cup product.
For $\sigma = [v_0, ..., v_i, ..., v_{i+j}]$,
  we define $\sigma_1 = [v_0, ..., v_i], \sigma_2 = [v_i, ..., v_{i+j}]$.
Note that this step relies on having a global ordering of the vertices,
  which ensures the vertex ordering is independent of $\tau$.

Notice that $\{\a_{\le \tau}(\sigma_1)\}_{\tau \in X_{\ge \sigma}(t)}$
  is a restriction of $\{\a_{\le \tau}(\sigma_1)\}_{\tau \in X_{\ge \sigma_1}(t)} = \a(\sigma_1)$.
Because $\a(\sigma_1) \in \cF_{1, \sigma_1}$ and by the definition of the sheaf,
  the restriction to $X_{\ge \sigma}(t)$ is a vector in $\cF_{1, \sigma}$,
  i.e. $\{\a_{\le \tau}(\sigma_1)\}_{\tau \in X_{\ge \sigma}(t)} \in \cF_{1, \sigma}$.
Similarly, $\{\b_{\le \tau}(\sigma_2)\}_{\tau \in X_{\ge \sigma}(t)} \in \cF_{2, \sigma}$.
Therefore, $(\a \cup \b)(\sigma) \in \cF_{1, \sigma} \odot \cF_{2, \sigma}$.

We have established a cup product for the cochains
  $\cC^i(X, \cF_1) \times \cC^j(X, \cF_2)
  \xrightarrow{\cup} \cC^{i+j}(X, \cF_1 \odot \cF_2)$.
We now show that it interacts well with the coboundary operator
  and induces a cup product for the cohomology classes.

First, a simple lemma.
\begin{lemma} \label{lem:delta-restrict}
  $(\delta \a)_{\le \tau} = \delta (\a_{\le \tau})$
  for $\a \in \cC^i(X, \cF)$.
\end{lemma}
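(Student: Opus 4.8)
The plan is to check the identity cell by cell, evaluating both sides on an arbitrary $(i+1)$-simplex $\pi \in X_{\le\tau}(i+1)$ and unfolding the definitions. Recall that $\tau \in X(t)$, that $\a_{\le\tau}(\sigma) = \res_{\sigma,\tau}(\a(\sigma))$ for $\sigma \in X_{\le\tau}(i)$, and that under the identification $(X_{\le\tau}(k) \to \FF_q) \cong \cC^k(X_{\le\tau},\FF_q)$ the coboundary of the constant sheaf $\FF_q$ on the single simplex $X_{\le\tau}$ is $\delta(\gamma)(\pi) = \sum_{\sigma \precdot \pi} \gamma(\sigma)$, the restriction maps of the constant sheaf being trivial and all signs being $1$ in characteristic $2$.

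First I would expand the left-hand side. By definition of $(\cdot)_{\le\tau}$ and of the sheaf coboundary,
\[
  (\delta\a)_{\le\tau}(\pi) = \res_{\pi,\tau}\big((\delta\a)(\pi)\big) = \res_{\pi,\tau}\Big(\sum_{\sigma \precdot \pi} \res_{\sigma,\pi}(\a(\sigma))\Big),
\]
where the sum ranges over all immediate subcells $\sigma$ of $\pi$ in $X$. Using linearity of $\res_{\pi,\tau}$ and the presheaf composition axiom $\res_{\pi,\tau} \circ \res_{\sigma,\pi} = \res_{\sigma,\tau}$, this becomes $\sum_{\sigma \precdot \pi} \res_{\sigma,\tau}(\a(\sigma))$. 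Next I would expand the right-hand side: $\delta(\a_{\le\tau})(\pi) = \sum_{\sigma \precdot \pi,\, \sigma \preceq \tau} \a_{\le\tau}(\sigma) = \sum_{\sigma \precdot \pi,\, \sigma \preceq \tau} \res_{\sigma,\tau}(\a(\sigma))$.

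It then remains to observe that the two index sets coincide. Since $\pi \in X_{\le\tau}$ we have $\pi \preceq \tau$, so any $\sigma \precdot \pi$ satisfies $\sigma \prec \pi \preceq \tau$, hence $\sigma \preceq \tau$ by transitivity of $\preceq$; thus every immediate subcell of $\pi$ already lies in $X_{\le\tau}$, and the constraint $\sigma \preceq \tau$ is vacuous. Therefore the two sums are literally the same, so $(\delta\a)_{\le\tau}(\pi) = \delta(\a_{\le\tau})(\pi)$ for every $\pi \in X_{\le\tau}(i+1)$, which gives $(\delta\a)_{\le\tau} = \delta(\a_{\le\tau})$. There is no real obstacle here; the only points requiring care are the bookkeeping that the $\delta$ on the right is the cellular coboundary on the simplex $X_{\le\tau}$ with trivial restriction maps (so no stray coefficients appear) and the transitivity argument that makes the two summation ranges identical — everything else is a direct substitution of definitions.
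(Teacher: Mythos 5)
Your proof is correct and follows essentially the same chain of equalities as the paper's: evaluate both sides on an $(i+1)$-cell, unfold the definitions of $\delta$ and $(\cdot)_{\le\tau}$, and invoke the presheaf composition $\res_{\pi,\tau}\circ\res_{\sigma,\pi}=\res_{\sigma,\tau}$. You spell out two steps the paper compresses --- the explicit use of the composition axiom and the check that the two summation index sets coincide via transitivity of $\preceq$ --- but the argument is substantively identical.
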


\begin{proof}
  We simply apply the definitions
  \begin{equation}
    (\delta \a)_{\le \tau}(\sigma)
    = (\delta \a)(\sigma)|_\tau
    = \sum_{\sigma' \precdot \sigma} \a(\sigma')|_\tau
    = \sum_{\sigma' \precdot \sigma} \a_{\le \tau}(\sigma')
    = \delta (\a_{\le \tau})
  \end{equation}
  where the first and third equalities follow from the definition of $(*)_{\le \tau}$.
  The second equality follows from the definition of $\delta$ on $\cC^i(X, \cF)$
    and the forth equality follows from the definition of $\delta$ on $\cC^i(X_{\le \tau}, \FF_q)$.
\end{proof}

This is the key formula.
\begin{lemma}
  $\delta(\a \cup \b) = (\delta \a \cup \b) + (\a \cup \delta \b)$
  for $\a \in \cC^i(X, \cF_1)$ and $\b \in \cC^j(X, \cF_2)$.
\end{lemma}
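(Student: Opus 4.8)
The plan is to push everything down to a single top-dimensional simplex, where the statement reduces to the classical Leibniz rule for the ordinary simplicial cup product. First I would record that, since $\cF_1 \odot \cF_2 \subseteq \cF^{full}$ and a cochain in $\cC^{i+j+1}(X,\cF^{full})$ is determined by the collection of its local components $\{(\cdot)_{\le\tau}\}_{\tau\in X(t)}$, it suffices to verify the identity after applying $(\cdot)_{\le\tau}$ for every $\tau\in X(t)$. Here I use that both sides do lie in $\cC^{i+j+1}(X,\cF_1\odot\cF_2)$: indeed $\a\cup\b\in\cC^{i+j}(X,\cF_1\odot\cF_2)$ was already proved, while $\delta\a\in\cC^{i+1}(X,\cF_1)$ and $\delta\b\in\cC^{j+1}(X,\cF_2)$, so $\delta\a\cup\b$ and $\a\cup\delta\b$ both land in $\cC^{i+j+1}(X,\cF_1\odot\cF_2)$ as well.

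Fixing $\tau\in X(t)$, I would then compute
\[
  (\delta(\a\cup\b))_{\le\tau} = \delta\bigl((\a\cup\b)_{\le\tau}\bigr) = \delta\bigl((\a_{\le\tau})\cup(\b_{\le\tau})\bigr),
\]
where the first equality is \Cref{lem:delta-restrict} applied to the sheaf $\cF_1\odot\cF_2$ and the second is the definition of $\cup$. Now $\a_{\le\tau}\in\cC^i(X_{\le\tau},\FF_q)$ and $\b_{\le\tau}\in\cC^j(X_{\le\tau},\FF_q)$ are ordinary $\FF_q$-valued simplicial cochains on the single $t$-simplex $X_{\le\tau}$, whose vertices carry the fixed global ordering of $X$; for such cochains the classical Leibniz rule $\delta(a\cup b)=(\delta a)\cup b+(-1)^{|a|}a\cup(\delta b)$ holds, and over characteristic $2$ the sign drops out. (This rule is the standard computation on an $(i+j+1)$-simplex $[v_0,\dots,v_{i+j+1}]$: expand $\delta$ as the alternating sum over vertex deletions, regroup the terms according to whether the deleted vertex falls in the front $i+1$ vertices or the back $j+1$ vertices, and observe that the two terms straddling the seam at index $i+1$ cancel, leaving $\delta a\cup b$ and $a\cup\delta b$.) Applying this, and then \Cref{lem:delta-restrict} once more — this time for $\cF_1$ and for $\cF_2$, to rewrite $\delta(\a_{\le\tau})=(\delta\a)_{\le\tau}$ and $\delta(\b_{\le\tau})=(\delta\b)_{\le\tau}$ — turns the right-hand side into $(\delta\a)_{\le\tau}\cup(\b)_{\le\tau}+(\a)_{\le\tau}\cup(\delta\b)_{\le\tau}=(\delta\a\cup\b)_{\le\tau}+(\a\cup\delta\b)_{\le\tau}$, again by the definition of $\cup$. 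Since $\tau$ was arbitrary, the identity holds in $\cC^{i+j+1}(X,\cF_1\odot\cF_2)$.

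I do not expect a genuine obstacle: the entire argument is bookkeeping made clean by the locality of the construction. The one point warranting care is the legitimacy of the reduction to local components — that restriction $(\cdot)_{\le\tau}$ commutes with $\delta$ (precisely \Cref{lem:delta-restrict}) and intertwines the sheaf cup product with the simplicial cup product on $X_{\le\tau}$ (the definition of $\cup$), and that the global vertex ordering is exactly what makes the seam index $i+1$, hence the front/back faces appearing in the local cup product, coherent across all $\tau$ containing a given simplex. The only ingredient imported from outside the excerpt is the elementary classical simplicial Leibniz rule.
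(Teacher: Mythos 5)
Your proof is correct and follows the same route as the paper's: reduce to local components $(\cdot)_{\le\tau}$, apply \Cref{lem:delta-restrict} to commute $\delta$ past the restriction, invoke the classical Leibniz rule for the simplicial cup product on $X_{\le\tau}$, and reassemble. The paper cites Hatcher for the classical rule rather than sketching its proof, and takes the sufficiency of checking on local components for granted where you spell it out, but these are only presentational differences.
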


\begin{proof}
  All we need to show is that
  $(\delta(\a \cup \b))_{\le \tau} = (\delta \a \cup \b)_{\le \tau} + (\a \cup \delta \b)_{\le \tau}$ for all $\tau \in X(t)$.

  We have
  \begin{multline}
    (\delta(\a \cup \b))_{\le \tau}
    = \delta((\a \cup \b)_{\le \tau})
    = \delta(\a_{\le \tau} \cup \b_{\le \tau})
    = (\delta (\a_{\le \tau}) \cup \b_{\le \tau}) + (\a_{\le \tau} \cup \delta (\b_{\le \tau})) \\
    = ((\delta \a)_{\le \tau} \cup \b_{\le \tau}) + (\a_{\le \tau} \cup (\delta \b)_{\le \tau})
    = (\delta \a \cup \b)_{\le \tau} + (\a \cup \delta \b)_{\le \tau}
  \end{multline}
  where the first and fifth equalities follow from \Cref{lem:delta-restrict},
    and the second and forth equalities follow from the definition of the new cup product.
  The third equality is the nontrivial result that holds for the usual cup product (see, for example, \cite[Lemma 3.6]{hatcher2000algebraic}).
\end{proof}

It is clear that if $\a$ and $\b$ are cocycles, then $\a \cup \b$ is again a cocycle.
Also, if one of $\a, \b$ is a cocycle and the other is a coboundary,
  say $\delta \a = 0$ and $\b = \delta \c$,
  then $\a \cup \b = \a \cup \delta \c = \delta(\a \cup \c)$ is again a coboundary.
This implies that there is an induced cup product on the cohomology classes
$H^i(X, \cF_1) \times H^j(X, \cF_2)
  \xrightarrow{\cup} H^{i+j}(X, \cF_1 \odot \cF_2)$.

\subsection{Cup product on a sheaf complex}

For the purpose of constructing quantum code,
  we can actually stop here
  and use the result from the last section on simplicial complexes.
The reason is that for any cellular complex $X$ with sheaf $\cF$,
  we can always subdivide it into a simplicial complex $\tilde X$
  and extend the sheaf into $\tilde \cF$ (see below).
The codes defined on $(X, \cF)$ and $(\tilde X, \tilde \cF)$
  are essentially equivalent.
Therefore, the result from the last section is sufficient for all practical purpose.
Nevertheless, since our discussion in \Cref{sec:CCZ-cubical-code} was on cubical complex,
  we will use this opportunity to demonstrate the tools to convert between simplicial and cellular complexes.

Given a cellular complex $X$ and a sheaf $\cF$ on $X$.
Let $\tilde X$ be a simplicial complex that triangulates $X$.
The goal is to use the cup products defined on
  $\cC(\tilde X, \tilde \cF)$ and $H(\tilde X, \tilde \cF)$ in the previous section
  to induce cup products on $\cC(X, \cF)$ and $H(X, \cF)$.
The idea fairly simple:
  since $X$ and $\tilde X$ have the same topology,
  the chain complexes defined on $X$ and $\tilde X$
  will be chain homotopy equivalent.
However, explicitly constructing the maps between them is somewhat involved,
  and will form the bulk of this section.

\paragraph*{Define the sheaf $\tilde \cF$:}

We begin by defining $\tilde \cF$.
To achieve this,
  we need to specify a vector space $\tilde \cF_{\sigma} \subseteq \FF_q^{\tilde X_{\ge \sigma}(t)}$
  for every $\sigma \in \tilde X$.
The idea is that each $t$-cell in $X$ is decomposed into many $t$-simplices in $\tilde X$.
The vector space $\tilde \cF_\sigma$
  is induced from such a decomposition
  by duplicating the value in a $t$-cell to the $t$-simplices it contains.
Let $\tau_\sigma$ be the smallest cell in $X$ that contains $\sigma$.
Notice that every $t$-simplices above $\sigma$
  is contained in some $t$-cells above $\tau_\sigma$.
Let $I: \tilde X_{\ge \sigma}(t) \to X_{\ge \tau_\sigma}(t)$
  be the map that maps
  every $t$-simplex to the unique $t$-cell it is contained in.
This induces the map $I_*: (X_{\ge \tau_\sigma}(t) \to \FF_q) \to (\tilde X_{\ge \sigma}(t) \to \FF_q)$,
  which allows us to define $\tilde \cF_\sigma$ as the image $I_*(\cF_{\tau_\sigma})$.
It is not hard to check that $\cF_{\tau_\sigma} \cong \tilde \cF_\sigma$.
Essentially, the map duplicates the values in $X_{\ge \tau_\sigma}(t)$
  to the values $\tilde X_{\ge \sigma}(t)$.
It is also not hard to check that this duplication operation commutes with
  the entrywise product,
  $\widetilde{\cF_1 \odot \cF_2} = \tilde{\cF_1} \odot \tilde{\cF_2}$.

\paragraph*{Define the maps $g$ and $\tilde g$ between the topological spaces $X$ and $\tilde X$:}

Let $X_i$ and $\tilde X_i$ be the $i$-skeleton of $X$ and $\tilde X$.
We now view $X$ and $\tilde X$ as topological spaces
  and construct two maps $g: X \to \tilde X$ and $\tilde g: \tilde X \to X$,
  such that $\tilde g g$ and $g \tilde g$ are homotopic to the identity maps on $X$ and $\tilde X$, respectively.
Furthermore,
  we require that the $i$-skeleton is map into the $i$-skeleton,
  $g(X_i) \subseteq \tilde X_i, \tilde g(\tilde X_i) \subseteq X_i$.
This allows us to use $g$ and $\tilde g$ to naturally induce
  the maps between the chains (cochains) of $X$ and $\tilde X$.
Additionally,
  for each simplex $\sigma \in \tilde X$,
  we want $\tilde g(\sigma) \subseteq \tau_\sigma$
  where $\tau_\sigma \in X$ is the smallest cell in $X$ that contains $\sigma$.
This guarantees the correspondence between the local sections.

Since $X_i \subseteq \tilde X_i$, we can naturally define $g$ as such an embedding.
Defining $\tilde g$ is more complicated
  and we will construct it iteratively.
We begin by defining $\tilde g$ on the $0$-cells.
For each $\sigma \in \tilde X(0)$,
  we map $\sigma$ to a vertex in $\tau_\sigma$.
Now, assume that $\tilde g: \tilde X_i \to X_i$ is defined.
Our goal is to extend the definition of $\tilde g$ to every $(i+1)$-simplex $\sigma \in \tilde X(i+1)$.
Given the image on the boundary $\tilde g(\partial \sigma)$,
  we want to ensure that $\tilde g(\sigma)$ lies within $\tau_\sigma$.
By the induction hypothesis,
  it is known that $\tilde g(\partial \sigma)$ is supported on $\tau_\sigma$.
Since $\tau$ is contractable,
  $\tilde g(\partial \sigma)$ is contractable on the $(i+1)$-skeleton of $\tau_{\sigma}$, $(\tau_{\sigma})_{i+1}$.
Such a contraction induces a map from $\sigma$ into $(\tau_{\sigma})_{i+1}$,
  thereby defining $\tilde g$ on $\sigma$.
By continuing this process, we obtain $\tilde g: \tilde X \to X$.

We check that $\tilde g g$ and $g \tilde g$ are homotopic to the identity maps.
It is clear that $\tilde gg$ is the the identity map on $X$.
For $g \tilde g$, one can again build the homotopy map iteratively,
  which we will skip the details.

\paragraph*{Define the induced chain and cochain maps:}

With $g$ and $\tilde g$, we can define the corresponding chain maps
  $g_i: \cC_i(X, \ZZ) \to \cC_i(\tilde X, \ZZ)$
  and $\tilde g_i: \cC_i(\tilde X, \ZZ) \to \cC_i(X, \ZZ)$.
Notice that, by construction,
  the images of $g$ and $\tilde g$ for each cell is a combination of cells,
  which leads to the map on the chains.
In particular, $g_i$ can be explicitly expressed as,
  $g_i(\sigma) = \sum_{\tilde \sigma \in \tilde X(i), \tilde \sigma \subseteq \sigma} \tilde \sigma$,
  where $\tilde \sigma$ are the simplices of the same dimension that form the cell $\sigma$.
For $\tilde g_i(\tilde \sigma)$, it is harder to express it explicitly,
  but we have the property that $\tilde g_i(\tilde \sigma)$ is only supported on the chain within $\tau_{\tilde \sigma}$.
Since these chain maps are induced from maps on the topological spaces,
  they naturally commute with the boundary operators.

The chain maps $g_i, \tilde g_i$
  naturally induce the maps on the cochains
  $g^i: \cC^i(\tilde X, \tilde \cF) \to \cC^i(X, \cF)$,
  $\tilde g^i: \cC^i(X, \cF) \to \cC^i(\tilde X, \tilde \cF)$.
For $g^i$, we define
\begin{equation}
  g^i (\tilde \a)(\sigma) = \sum_{\tilde \sigma \in g_i(\sigma)} I_*^{-1}(\tilde \a(\tilde \sigma)).
\end{equation}
Recall $I_*: \cF_{\tau_{\tilde \sigma}} \to \cF_{\tilde \sigma}$ is an isomorphism.
Because $\tilde \sigma \subseteq \sigma$ and both have the same dimension,
  we have $\tau_{\tilde \sigma} = \sigma$.
Therefore, $I_*^{-1}(\tilde \a(\tilde \sigma)) \in \cF_\sigma$
  and the expression is well-defined.
For $\tilde g^i$, we define
\begin{equation}
  \qquad \tilde g^i (\a)(\tilde \sigma) = I_*\Big(\sum_{\sigma \in \tilde g_i(\tilde \sigma)} \a(\sigma)|_{\tau_{\tilde \sigma}}\Big).
\end{equation}
By the construction of $\tilde g$,
  $\sigma$ is a cell in $\tau_{\tilde \sigma}$.
This allows us to restrict $\alpha(\sigma) \in \cF_{\sigma}$
  to $\alpha(\sigma)|_{\tau_{\tilde \sigma}} \in \cF_{\tau_{\tilde \sigma}}$.
Note that the sum accounts for the multiplicities:
  if $\sigma$ is covered twice in $\tilde g_i(\tilde \sigma)$,
  then we consider two contributions from $\sigma$.
Since these cochain maps are induced from maps on the topological spaces,
  they naturally commute with the coboundary operators.

\paragraph*{Define the cup product:}

With the cochains maps $g^i, \tilde g^i$,
  we are ready to compose them together with the cup product on $\cC(\tilde X, \tilde F)$
  to form the cup product on $\cC(X, F)$,
\begin{equation}
  \cC^i(X, \cF_1) \times \cC^j(X, \cF_2)
  \xrightarrow{\tilde g^{1,i} \times \tilde g^{2,j}} \cC^i(\tilde X, \tilde \cF_1) \times \cC^j(\tilde X, \tilde \cF_2)
  \xrightarrow{\cup} \cC^{i+j}(\tilde X, \tilde \cF_1 \odot \tilde \cF_2)
  \xrightarrow{g^{i+j}} \cC^{i+j}(X, \cF_1 \odot \cF_2).
\end{equation}
We use the property that $\tilde{\cF_1} \odot \tilde{\cF_2} = \widetilde{\cF_1 \odot \cF_2}$ in the last map.
Since the cochain maps commute with the coboundary operators,
  this induces the cup product for the cohomology classes
  $H^i(X, \cF_1) \times H^j(X, \cF_2)
  \xrightarrow{\cup} H^{i+j}(X, \cF_1 \odot \cF_2)$.

\subsection{A local condition that allows transversal CCZ operations for sheaf codes}

We now use the cup product to prove our main theorem.

\begin{theorem}
  Let $Q(X, \{C_{1,\sigma}\}_{\sigma \in X(t-1)}, \ell_1)$,
    $Q(X, \{C_{2,\sigma}\}_{\sigma \in X(t-1)}, \ell_2)$,
    $Q(X, \{C_{3,\sigma}\}_{\sigma \in X(t-1)}, \ell_3)$
    be three quantum codes $Q_1, Q_2, Q_3$
    such that $\ell_1 + \ell_2 + \ell_3 = t$.
  Let $\cF_i$ be the sheaf $\cF(X, \{C_{i,\sigma}\}_{\sigma \in X(t-1)})$.
  We assume that the local sections satisfy $(\cF_i)_\tau = \FF_q$ for $\tau \in X(t)$.
  We define a trilinear map
    $f: \cC^{\ell_1}(X, \cF_1) \times \cC^{\ell_2}(X, \cF_2) \times \cC^{\ell_3}(X, \cF_3) \to \FF_q$ by
  \begin{equation}
    f(\a_1, \a_2, \a_3) = \sum_{\tau \in X(t)} ((\a_1 \cup \a_2) \cup \a_3)(\tau).
  \end{equation}

  If $C_{1,\sigma} \odot C_{2,\sigma} \odot C_{3,\sigma}$ is contained within the parity check code for every $\sigma \in X(t-1)$,
    then $(Q_1, Q_2, Q_3, f)$ forms a CCZ code.
\end{theorem}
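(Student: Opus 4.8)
The two things to verify are that $f$ is trilinear and that it satisfies the coboundary invariance \eqref{eq:condition-without-trace-main}, namely $f(\z_1,\z_2,\z_3)=f(\z_1+\b_1,\z_2+\b_2,\z_3+\b_3)$ for all $\z_i\in Z^{\ell_i}(X,\cF_i)$ and $\b_i\in B^{\ell_i}(X,\cF_i)$. Trilinearity is immediate: each of the two applications of $\cup$ is bilinear, and $\a\mapsto\sum_{\tau\in X(t)}\a(\tau)$ is a linear functional on $\cC^t(X,(\cF_1\odot\cF_2)\odot\cF_3)$ valued in $\FF_q$, where the hypothesis $(\cF_i)_\tau=\FF_q$ for $\tau\in X(t)$ is exactly what makes the summands, hence the sum, lie in $\FF_q$. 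For the invariance, expand $f(\z_1+\b_1,\z_2+\b_2,\z_3+\b_3)$ into $2^3$ terms by trilinearity; the term $f(\z_1,\z_2,\z_3)$ cancels, and since $B^{\ell_k}(X,\cF_k)\subseteq Z^{\ell_k}(X,\cF_k)$ each of the remaining seven terms has the form $f(\a_1,\a_2,\a_3)$ with all $\a_i$ cocycles and at least one $\a_k$ a coboundary. Thus it suffices to prove $f(\a_1,\a_2,\a_3)=0$ in that situation.

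For this I would show that $(\a_1\cup\a_2)\cup\a_3$ is a coboundary in $\cC^{t}(X,(\cF_1\odot\cF_2)\odot\cF_3)$ and then integrate. The tool is the Leibniz rule $\delta(\a\cup\b)=\delta\a\cup\b+\a\cup\delta\b$, established above for sheaf complexes on simplicial complexes and carried over to the triangulated cellular cup product because the transfer cochain maps $g^{\bullet}$ and $\tilde g^{\bullet}$ commute with $\delta$. Writing the distinguished coboundary as $\delta\c_k$: if $k=3$, then $\delta(\a_1\cup\a_2)=0$ since $\a_1,\a_2$ are cocycles, so $(\a_1\cup\a_2)\cup\a_3=(\a_1\cup\a_2)\cup\delta\c_3=\delta\big((\a_1\cup\a_2)\cup\c_3\big)$; if $k=1$, then $\a_1\cup\a_2=\delta\c_1\cup\a_2=\delta(\c_1\cup\a_2)$ using $\delta\a_2=0$, so $(\a_1\cup\a_2)\cup\a_3=\delta(\c_1\cup\a_2)\cup\a_3=\delta\big((\c_1\cup\a_2)\cup\a_3\big)$ using $\delta\a_3=0$; the case $k=2$ is analogous. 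In every case $f(\a_1,\a_2,\a_3)=\sum_{\tau\in X(t)}(\delta\eta)(\tau)$ for some $\eta\in\cC^{t-1}(X,(\cF_1\odot\cF_2)\odot\cF_3)$.

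It remains to prove the discrete Stokes identity $\sum_{\tau\in X(t)}(\delta\eta)(\tau)=0$, which is where the local hypothesis enters. Expanding $\delta$ and swapping the order of summation,
\begin{equation*}
  \sum_{\tau\in X(t)}(\delta\eta)(\tau)=\sum_{\tau\in X(t)}\sum_{\sigma\precdot\tau}\res_{\sigma,\tau}(\eta(\sigma))=\sum_{\sigma\in X(t-1)}\sum_{\tau\succdot\sigma}\res_{\sigma,\tau}(\eta(\sigma)).
\end{equation*}
For a fixed $\sigma\in X(t-1)$, the set $\{\tau\in X(t):\sigma\precdot\tau\}$ is precisely $X_{\ge\sigma}(t)$, and under the identification $\iota_\sigma$ of \Cref{lem:structure-theorem-1} the map $\res_{\sigma,\tau}$ is evaluation of $\iota_\sigma(\eta(\sigma))\in\FF_q^{X_{\ge\sigma}(t)}$ at the coordinate $\tau$; hence the inner sum equals the sum of all coordinates of $\iota_\sigma(\eta(\sigma))$. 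Since $\eta\in\cC^{t-1}(X,(\cF_1\odot\cF_2)\odot\cF_3)$, we have $\iota_\sigma(\eta(\sigma))\in(C_{1,\sigma}\odot C_{2,\sigma})\odot C_{3,\sigma}=C_{1,\sigma}\odot C_{2,\sigma}\odot C_{3,\sigma}$, which by hypothesis is contained in the parity-check code; so this coordinate sum is $0$. Summing over $\sigma$ gives $\sum_{\tau\in X(t)}(\delta\eta)(\tau)=0$, hence $f(\a_1,\a_2,\a_3)=0$, which establishes the invariance and the theorem.

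The step I expect to be most delicate is the very last one, specifically the bookkeeping: one must track that the cup product of cochains valued in $\cF_1,\cF_2,\cF_3$ genuinely lands in the sheaf $(\cF_1\odot\cF_2)\odot\cF_3$ (so that $\eta(\sigma)$ is constrained to the local code $C_{1,\sigma}\odot C_{2,\sigma}\odot C_{3,\sigma}$ and not merely to the ambient $\FF_q^{X_{\ge\sigma}(t)}$), that this triple entrywise product agrees with the one in the hypothesis, and that $\{\tau\in X(t):\sigma\precdot\tau\}=X_{\ge\sigma}(t)$ so the inner sum really is the coordinate sum that the hypothesis controls; everything else is a formal consequence of the Leibniz rule and linearity.
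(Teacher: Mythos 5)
Your proof is correct and takes essentially the same route as the paper's: use the Leibniz rule (equivalently, the fact that $\cup$ descends to cohomology) to reduce to showing that $\sum_{\tau\in X(t)}(\delta\eta)(\tau)=0$, then swap the order of summation and invoke the local parity-check hypothesis on $C_{1,\sigma}\odot C_{2,\sigma}\odot C_{3,\sigma}$. The only cosmetic difference is that you expand $f(\z_1+\b_1,\z_2+\b_2,\z_3+\b_3)$ into $2^3$ terms and handle each one, whereas the paper invokes the already-established fact that the cup product is well-defined on cohomology classes to conclude in one stroke that $(\z'_1\cup\z'_2)\cup\z'_3-(\z_1\cup\z_2)\cup\z_3$ is a coboundary; the content is identical, and your version just makes the Leibniz-rule bookkeeping explicit.
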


Note that $(\a_1 \cup \a_2) \cup \a_3$
  lies in $\cC^{\ell_1+\ell_2+\ell_3}(X, \cF_1 \odot \cF_2 \odot \cF_3)$,
  which can be viewed as a function in $X(t) \to \FF_q$.

\begin{proof}
  All we need to show is that $f$ is defined on the cohomology classes.
  Given $\z_1, \z_2, \z_3$ and $\b_1, \b_2, \b_3$ with
    $\z_i \in Z^{\ell_i}(X, \cF_i)$ and $\b_i \in B^{\ell_i}(X, \cF_i)$.
  Because the cup product is defined on the cohomology classes,
    we have
    $(\z'_1 \cup \z'_2) \cup \z'_3 - (\z_1 \cup \z_2) \cup \z_3 = \b$
    for some $\b \in B^{t}(X, \cF_1 \odot \cF_2 \odot \cF_3)$,
    where $\z'_i = \z_i + \b_i$.
  This implies
    \begin{equation}
      f(\z'_1, \z'_2, \z'_3) - f(\z_1, \z_2, \z_3) = \sum_{\tau \in X(t)} \b(\tau).
    \end{equation}
  Let $\beta = \delta \alpha$ with $\alpha \in \cC^{t-1}(X, \cF_1 \odot \cF_2 \odot \cF_3)$.
  We have
  \begin{equation}
    \sum_{\tau \in X(t)} \beta(\tau)
    = \sum_{\tau \in X(t)} \delta \alpha(\tau)
    = \sum_{\tau \in X(t)} \sum_{\sigma \precdot \tau} \alpha(\sigma)|_\tau
    = \sum_{\sigma \in X(t-1)} \sum_{\tau \succdot \sigma} \alpha(\sigma)|_\tau
    = 0.
  \end{equation}
  In the last equality,
    we use $\{\alpha(\sigma)|_\tau\}_{\tau \succdot \sigma} \in C_{1,\sigma} \odot C_{2,\sigma} \odot C_{3,\sigma}$
    and $\sum_{\tau \succdot \sigma} \alpha(\sigma)|_\tau = 0$
    which follows from the assumption that $C_{1,\sigma} \odot C_{2,\sigma} \odot C_{3,\sigma}$ is contained within the parity check code.
\end{proof}

More generally, we can obtain multiple CCZ operations,
  by taking four codes,
  $Q(X, \{C_{i,\sigma}\}_{\sigma \in X(t-1)}, \ell_i)$ for $i = 1, 2, 3, 4$,
  where $\ell_1 + \ell_2 + \ell_3 + \ell_4 = t$.
  (It is ok for the last code to be classical with $\ell_4 = 0$.)
We assume that for every $\sigma \in X(t-1)$,
  $C_{1,\sigma} \odot C_{2,\sigma} \odot C_{3,\sigma} \odot C_{4,\sigma}$ is contained within the parity check code.
We can then define trilinear maps $f_{\z_4}$ for every
  $\z_4 \in Z^{\ell_4}(X, \{C_{4,\sigma}\}_{\sigma \in X(t-1)})$ as
\begin{equation}
  f_{\z_4}(\a_1,\a_2,\a_3) = \sum_{\tau \in X(t)} (((\a_1 \cup \a_2) \cup \a_3) \cup \z_4)(\tau).
\end{equation}
Similarly, one can show that $f_{\z_4}$ is defined on the cohomology classes.
This is related to a known fact that
  a non-targeted CCCZ operation can induce targeted CCZ operations.

\subsection{Abstract discussions}\label{sec:abstract}

We make some abstract remarks
  and extend the cup product to the case where $\cF_\tau$ for $\tau \in X(t)$ is not necessarily $\FF_q$.

In mathematics,
  it is understood that
  given sheaves $\cF_1, \cF_2$ of vector spaces over $\FF_q$ on a topological space $X$,
  there is the cup product
\begin{equation}\label{eq:general-cup}
  H^i(X, \cF_1) \times H^j(X, \cF_2) \xrightarrow{\cup} H^{i+j}(X, \cF_1 \otimes \cF_2).
\end{equation}
In the context
  where the the value at each point is in $\FF_q$,
  this corresponds to the entrywise product we considered above,
  $(f, g) \mapsto fg$ where $fg(x) = f(x) g(x)$.
Thus, our result on cellular complexes
  can be seen as an analog of this well-known result from traditional sheaf theory.

This indicates that our result can be extended.
In particular, given two sheaves $\cF_1, \cF_2$ on a cellular complex $X$,
  we can construct the sheaf $\cF_1 \otimes \cF_2$
  defined by $(\cF_1 \otimes \cF_2)_\sigma(\tau) = \cF_{1,\sigma}(\tau) \otimes \cF_{2,\sigma}(\tau)$
  for every $\sigma \in X(t-1)$ and $\tau \in X_{\ge \sigma}(t)$.
By following a similar proof strategy,
  we can derive the cup product
  $H^i(X, \cF_1) \times H^j(X, \cF_2) \xrightarrow{\cup} H^{i+j}(X, \cF_1 \otimes \cF_2)$.

\section*{Acknowledgements}
TCL was supported in part by funds provided by the U.S. Department of Energy (D.O.E.) under the cooperative research agreement DE-SC0009919 and by the Simons Collaboration on Ultra-Quantum Matter, which is a grant from the Simons Foundation (652264 JM).

\bibliographystyle{unsrt}
\bibliography{references.bib}

\appendix

\section{Locally acyclic sheaves}\label{sec:locally-acyclic-sheaves}

For the purpose of studying sheaf cohomology,
  it is natural to demand that the cohomology vanishes on each cell.
This is essential to obtain Poincaré duality
  which will be discussed in the next section.
We denote $X_{\le \sigma}$ as the cellular complex formed by $\sigma$,
  and $\cF_{\le \sigma}$ as the restriction of the sheaf where
  $\cF_{\le \sigma}(\rho) = \cF_\rho$ for all $\rho \preceq \sigma$.
\begin{definition}
  Given a $t$ dimensional cellular complex $X$ and and a sheaf $\cF$ on $X$.
  We say $\cF$ is locally acyclic if for each $i$-cell $\sigma$,
  $H^j(X_{\le \sigma}, \cF_{\le \sigma}) = 0$ for $0 < j \le t-i$.
\end{definition}
For simplicity, we will denote $H^j(X_{\le \sigma}, \cF_{\le \sigma})$ as $H^j(\sigma, \cF)$.
More explicitly, $\cF$ is locally acyclic
  iff for each $i$-cell $\sigma$, the chain complex
\begin{equation}
  \prod_{\sigma_{i+1} \in X_{\ge \sigma}(i+1)} \cF_{\sigma_{i+1}}
  \to \prod_{\sigma_{i+2} \in X_{\ge \sigma}(i+2)} \cF_{\sigma_{i+2}}
  \to ...
  \to \prod_{\sigma_{t} \in X_{\ge \sigma}(t)} \cF_{\sigma_{t}}
\end{equation}
is exact.

The acyclic condition is often essential for estabilishing a meaningful cohomology theory,
  as it determine which structures are considered trivial.
For example, in the case of topological spaces,
  if $X$ is contractible, then $H^i(X, \FF_q) = 0$ for all $i > 0$.
Similarly, for schemes,
  if $X$ is an affine scheme and $\cF$ is a quasicoherent sheaf,
  then $H^i(X, \cF) = 0$ for all $i > 0$.

In the context of quantum codes,
  the acyclic condition naturally arises because it implies that
  every cocycle $\z$ supported on $X_{\le \sigma}$
  must be a coboundary $\z = \delta \a$ for some $\a$ supported on $X_{\le \sigma}$.
If this condition is not satisfied,
  it is almost saying that $\z$ is a nontrivial logical operator
  with a constant weight.
This is undesireable, since our goal is to construct LDPC codes with large code distances.

Another way to view the locally acyclic condition is that
  this is analogous to the condition of a Leray cover over topological spaces.
The Leray cover implies that the associate Čech cohomology
  is isomorphic to the sheaf cohomology.
This is not essential and we will not dwell on this point further.

Currently, all sheaves used for code constructions,
  such as those in \cite{panteleev2021asymptotically, dinur2021locally, leverrier2022quantum, dinur2022good, dinur2023new, dinur2024expansion},
  are locally acyclic.
Most of these constructions achieve local acyclicity by employing tensor codes.
The only exception is \cite{dinur2023new}, which uses Reed-Solomon and Reed-Muller codes.

\section{Poincaré duality for cellular complexes}\label{sec:poincare}

\begin{theorem}
  Given a $t$ dimensional cellular complex $X$ and and a sheaf $\cF$ on $X$.
  If $\cF$ is locally acyclic, for $0 \le i \le t-1$, we have
  \begin{equation}
    H_{t-i}(X, \cF) \cong H^i(X, \cF^\perp).
  \end{equation}
\end{theorem}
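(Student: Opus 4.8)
The plan is to deduce the isomorphism from a comparison of the two spectral sequences of a single double complex; the case $i=0$ is already \Cref{claim:poincare-0}, and the construction below should be read as the extension of that argument to all codimensions.

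Concretely, I would work with the double complex indexed by flags of cells,
\[
  D^{p,q} \;=\; \bigoplus_{\substack{\sigma \in X(p),\ \tau \in X(q)\\ \sigma \preceq \tau}} \cF_{\tau}, \qquad 0 \le p \le q \le t ,
\]
equipped with two (anti)commuting differentials. The horizontal differential $d'\colon D^{p,q}\to D^{p+1,q}$ fixes $\tau$ and applies the cellular coboundary of the closed cell $X_{\le\tau}$ with the constant coefficients $\cF_\tau$; the vertical differential $d''\colon D^{p,q}\to D^{p,q-1}$ fixes $\sigma$ and applies the transposed restriction maps $\res_{\tau',\tau}^{T}$, i.e.\ the boundary operator of the chain complex $\prod_{\tau\in X_{\ge\sigma}(t)}\cF_\tau\to\cdots\to\cF_\sigma$. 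The signs making $d'd''=-d''d'$ are suppressed since we work in characteristic $2$. Taking $d'$-cohomology first, each closed cell $X_{\le\tau}$ is contractible, so the $\tau$-block of a row contributes $\cF_\tau$ in horizontal degree $0$ and nothing else; the $E_1$-page is the single column $\bigl(\bigoplus_{\tau\in X(q)}\cF_\tau\bigr)_q = \cC_\bullet(X,\cF)$, the induced differential is the boundary $\partial$, and this spectral sequence degenerates, so the total homology (with the total complex graded by $q-p$) equals $H_\bullet(X,\cF)$.

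The substantive half is the spectral sequence that takes $d''$-homology first. For fixed $\sigma\in X(p)$ the $d''$-column is exactly the transpose of the complex
\[
  0 \to \cF_\sigma \to \prod_{\tau\in X_{\ge\sigma}(p+1)}\cF_\tau \to \cdots \to \prod_{\tau\in X_{\ge\sigma}(t)}\cF_\tau \to 0,
\]
which — by the identity axiom (injectivity at $\cF_\sigma$), the gluability axiom (exactness at the next term), and local acyclicity of $\cF$ (exactness at the middle terms) — is exact at every spot except possibly the last. Its transpose is therefore exact off the leading term, so the $d''$-column has homology concentrated in top degree $q=t$, equal to $\ker\bigl(\res^{T}\colon \prod_{\tau\in X_{\ge\sigma}(t)}\cF_\tau\to\prod_{\tau'\in X_{\ge\sigma}(t-1)}\cF_{\tau'}\bigr)$. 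Using the chosen bases to identify transposes with orthogonal complements, this kernel is $\{c\in\FF_q^{X_{\ge\sigma}(t)}:\ \res_{\sigma,\tau'}(c)\in C_{\tau'}^{\perp}\ \text{for all}\ \tau'\in X_{\ge\sigma}(t-1)\}$, which by \Cref{lem:structure-theorem-2} applied to $\cF^\perp$ is precisely $(\cF^\perp)_\sigma$. Hence the $E_1$-page of this spectral sequence is the single row $\bigl(\bigoplus_{\sigma\in X(p)}(\cF^\perp)_\sigma\bigr)_p$ sitting in $q=t$, and a short computation identifies the induced horizontal differential with the coboundary of $\cC^\bullet(X,\cF^\perp)$. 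Being concentrated in one row, the spectral sequence degenerates at $E_2$, giving $H^p(X,\cF^\perp)$ in total degree $t-p$. Equating the two computations of the total homology at total degree $t-i$ yields $H_{t-i}(X,\cF)\cong H^i(X,\cF^\perp)$.

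The step I expect to be the main obstacle is the identification of the top homology of the $d''$-column with $(\cF^\perp)_\sigma$: one must be careful about precisely which exactness is supplied by local acyclicity versus by the identity and gluability axioms (so that the transposed complex is genuinely exact off its leading term), and one must match, through the fixed bases identifying each $\cF_\tau$ with its dual, the naive orthogonal complement with the description of $\cF^\perp$ coming from \Cref{lem:structure-theorem-2}. A secondary check is that the differential induced on the surviving row really is the $\cF^\perp$-coboundary and not merely a chain-isomorphic variant; here the characteristic-$2$ convention again removes the sign ambiguities. It is worth noting that only local acyclicity of $\cF$ itself is used, with no hypothesis imposed on $\cF^\perp$. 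A more hands-off alternative would be to triangulate $X$ into a simplicial complex as in the cup-product section, verify $\widetilde{\cF^\perp}=(\widetilde{\cF})^\perp$ together with invariance of both sides of the duality, and then run the same double-complex argument on the nerve; but since the double complex above already works for arbitrary cellular complexes, I would present it directly.
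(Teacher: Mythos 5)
Your proposal is correct and uses essentially the same argument as the paper: the same flag-indexed double complex (yours graded by $\dim\tau=q$, the paper's by $t-\dim\tau=q$, a cosmetic flip), the same two spectral sequences of that double complex, the same use of the identity/gluability axioms plus local acyclicity to collapse the ``vertical-first'' pages to the row $\cC^{\bullet}(X,\cF^{\perp})$, and the same role for \Cref{claim:poincare-0} in identifying the top kernel with $(\cF^{\perp})_{\sigma}$. The one genuine (but minor) variation is in the first spectral sequence: you note each fixed-$\tau$ row is the cellular cochain complex of the contractible closed cell $X_{\le\tau}$ with constant coefficients $\cF_{\tau}$, so $E_1$ collapses to a single column, whereas the paper proves only exactness at $p=0,1$ and carries unknowns on the anti-diagonal $p+q=t$, which it then discards because they sit in total degree $t>i$; your contractibility observation is cleaner and actually fills in the intermediate vanishing that the paper's statements 1--2 leave implicit.
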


\begin{proof}
  Our proof utilizes a spectral sequence,
    similar to the proof strategy used in Thm 8.9 of Section 14 in \cite{bott2013differential}.
  This strategy often arises when working with Čech cohomology
    which is closely related to our setting.

  The idea is to construct a double complex
    $(K^{\bullet,\bullet}, \delta, d)$
    where $\delta: K^{p,q} \to K^{p+1,q}, d: K^{p,q} \to K^{p,q+1}$
    and $d^2 = 0, \delta^2 = 0, d \delta = \delta d$.
  Let $D = d + \delta$.
  Notice that $D^2 = 0$\footnote{
    We again avoid the discussion of the negative signs by working in characteristic $2$.
    },
    which defines a chain complex $(K^{\bullet}, D)$,
    where $K^k = \bigoplus_{p+q = k} K^{p,q}$.

  It is known that one can compute a sequence of complexes $\{(E_r, d_r)\}_{r=0}^\infty$,
    where each page $E_r$ is the cohomology of the previous page $E_{r-1}$.
  The key result of spectral sequences is that, if the sequence converges,
    the resulting complex corresponds to the cohomology group of $(K^{\bullet}, D)$.
  For a more detailed statement, we refer to Section 14 in \cite{bott2013differential}.

  In our use of the spectral sequence,
    we alter the roles of $\delta$ and $d$,
    which provides an alternative way to compute the cohomology group of $(K^{\bullet}, D)$.
  The desired result follows from the fact that both computations lead to the same cohomology group.

  We now describe the double complex.
  For $0 \le p \le r \le t$, let
  \begin{equation}
    K^{p,q} = \prod_{\sigma \in X(p)} \cC^r(\sigma, \cF)
  \end{equation}
  where $r = t-q$ and $\cC^r(\sigma, \cF) = \prod_{\tau \in X_{\ge \sigma}(r)} \cF_\tau$.
  Otherwise, $K^{p,q} = 0$.

  For $0 \le p < r \le t$, let $\delta: K^{p,q} \to K^{p+1,q}$ be the map
  \begin{equation}
    (\delta \a)(\tau) = \sum_{\sigma \precdot \tau} \a(\sigma)
  \end{equation}
  where $\tau \in X(p)$.
  Otherwise, $\delta$ is the zero map.

  For $0 \le p < r \le t$, let $d: K^{p,q} \to K^{p,q+1}$ be the map
  \begin{equation}
    (d \a)(\tau) = \partial (\a(\tau))
  \end{equation}
  where $\partial: \cC^r(\tau, \cF) \to \cC^{r-1}(\tau, \cF)$
    is the boundary map
    where $(\partial \b) (\pi) = \sum_{\rho \succdot \pi} \b (\rho)$.
  Otherwise, $d$ is the zero map.

  It is straightforward that $d^2 = 0, \delta^2 = 0$.
  For $0 \le p < r \le t$ with $r - p \ge 2$
    and $\alpha \in K^{p,q}$
    we have
    \begin{equation}
      (d \delta \alpha)(\tau)(\pi)
      = \partial (d \alpha (\tau)) (\pi)
      = \sum_{\rho \succdot \pi} d \alpha (\tau)(\rho)
      = \sum_{\rho \succdot \pi} \sum_{\sigma \precdot \tau} \alpha (\sigma)(\rho)
    \end{equation}
    and
    \begin{equation}
      (\delta d \alpha)(\tau)(\pi)
      = \sum_{\sigma \precdot \tau} (d \alpha)(\sigma)(\pi)
      = \sum_{\sigma \precdot \tau} \partial(\alpha(\sigma))(\pi)
      = \sum_{\sigma \precdot \tau} \sum_{\rho \succdot \pi} \alpha(\sigma)(\rho)
    \end{equation}
    which implies $d \delta \alpha = \delta d \alpha$.
  Otherwise $d \delta = \delta d$ as they are both zero maps.
  We have now established that $(K^{\bullet,\bullet}, \delta, d)$ forms a double complex.

  We are ready to discuss the spectral sequence.
  We will state the resulting pages in the spectral sequence,
    and then justify each step.
  The double complex forms the zero-th page
  \begin{equation}
    \centering
    E_0 = \begin{tikzpicture}[xscale={3},baseline={(X.base)}]
      \draw (0,4)node{$\prod_{\sigma \in X(0)} \cC^0(\sigma, \cF)$} (1,4)node{0} (2,4)node{0} (3,4)node{$\cdots$} (4,4)node{0};
      \draw (0,3)node{$\prod_{\sigma \in X(0)} \cC^1(\sigma, \cF)$} (1,3)node{$\prod_{\sigma \in X(1)} \cC^1(\sigma, \cF)$} (2,3)node{0} (3,3)node{$\cdots$} (4,3)node{0};
      \draw (0,2)node(X){$\prod_{\sigma \in X(0)} \cC^2(\sigma, \cF)$} (1,2)node{$\prod_{\sigma \in X(1)} \cC^2(\sigma, \cF)$} (2,2)node{$\prod_{\sigma \in X(2)} \cC^2(\sigma, \cF)$} (3,2)node{$\cdots$} (4,2)node{0};
      \draw (0,1)node{$\vdots$} (1,1)node{$\vdots$} (2,1)node{$\vdots$} (3,1)node{$\vdots$} (4,1)node{$\vdots$};
      \draw (0,0)node{$\prod_{\sigma \in X(0)} \cC^t(\sigma, \cF)$} (1,0)node{$\prod_{\sigma \in X(1)} \cC^t(\sigma, \cF)$} (2,0)node{$\prod_{\sigma \in X(2)} \cC^t(\sigma, \cF)$} (3,0)node{$\cdots$} (4,0)node{$\prod_{\sigma \in X(t)} \cC^t(\sigma, \cF)$};
      \draw[->] (-0.6,-0.5)--(4.5,-0.5)node[anchor=north]{$p$};
      \draw[->] (-0.6,-0.5)--(-0.6,4.5)node[anchor=east]{$q$};
    \end{tikzpicture}
  \end{equation}

  The first spectral sequence has
  \begin{equation}
    \centering
    E_1 = H_{\delta} E_0 = \begin{tikzpicture}[xscale={2},baseline={(X.base)}]
      \draw (0,4)node{$\cC_0(X, \cF)$} (1,4)node{0} (2,4)node{0} (3,4)node{$\cdots$} (4,4)node{0};
      \draw (0,3)node{$\cC_1(X, \cF)$} (1,3)node{$?$} (2,3)node{0} (3,3)node{$\cdots$} (4,3)node{0};
      \draw (0,2)node(X){$\cC_2(X, \cF)$} (1,2)node{0} (2,2)node{$?$} (3,2)node{$\cdots$} (4,2)node{0};
      \draw (0,1)node{$\vdots$} (1,1)node{$\vdots$} (2,1)node{$\vdots$} (3,1)node{$\vdots$} (4,1)node{$\vdots$};
      \draw (0,0)node{$\cC_t(X, \cF)$} (1,0)node{0} (2,0)node{0} (3,0)node{$\cdots$} (4,0)node{$?$};
      \draw[->] (-0.6,-0.5)--(4.5,-0.5)node[anchor=north]{$p$};
      \draw[->] (-0.6,-0.5)--(-0.6,4.5)node[anchor=east]{$q$};
    \end{tikzpicture}
  \end{equation}
  and
  \begin{equation}
    \centering
    E_\infty = E_2 = H_d H_{\delta} E_0 = \begin{tikzpicture}[xscale={2},baseline={(X.base)}]
      \draw (0,4)node{$H_0(X, \cF)$} (1,4)node{0} (2,4)node{0} (3,4)node{$\cdots$} (4,4)node{0};
      \draw (0,3)node{$H_1(X, \cF)$} (1,3)node{$?$} (2,3)node{0} (3,3)node{$\cdots$} (4,3)node{0};
      \draw (0,2)node(X){$H_2(X, \cF)$} (1,2)node{0} (2,2)node{$?$} (3,2)node{$\cdots$} (4,2)node{0};
      \draw (0,1)node{$\vdots$} (1,1)node{$\vdots$} (2,1)node{$\vdots$} (3,1)node{$\vdots$} (4,1)node{$\vdots$};
      \draw (0,0)node{$H_t(X, \cF)$} (1,0)node{0} (2,0)node{0} (3,0)node{$\cdots$} (4,0)node{$?$};
      \draw[->] (-0.6,-0.5)--(4.5,-0.5)node[anchor=north]{$p$};
      \draw[->] (-0.6,-0.5)--(-0.6,4.5)node[anchor=east]{$q$};
    \end{tikzpicture}
  \end{equation}
  Even though some of the values on the diagonals is unknown,
    we have $H^i(K, D) = H_{t-i}(X, \cF)$ for $0 \le i \le t-1$.

  The second spectral sequence has
  \begin{equation}
    \centering
    E_1 = H_d E_0 = \begin{tikzpicture}[xscale={2},baseline={(X.base)}]
      \draw (0,4)node{0} (1,4)node{0} (2,4)node{0} (3,4)node{$\cdots$} (4,4)node{0};
      \draw (0,3)node{0} (1,3)node{0} (2,3)node{0} (3,3)node{$\cdots$} (4,3)node{0};
      \draw (0,2)node(X){0} (1,2)node{0} (2,2)node{0} (3,2)node{$\cdots$} (4,2)node{0};
      \draw (0,1)node{$\vdots$} (1,1)node{$\vdots$} (2,1)node{$\vdots$} (3,1)node{$\vdots$} (4,1)node{$\vdots$};
      \draw (0,0)node{$\cC^0(X, \cF^\perp)$} (1,0)node{$\cC^1(X, \cF^\perp)$} (2,0)node{$\cC^2(X, \cF^\perp)$} (3,0)node{$\cdots$} (4,0)node{$\cC^t(X, \cF^\perp)$};
      \draw[->] (-0.6,-0.5)--(4.5,-0.5)node[anchor=north]{$p$};
      \draw[->] (-0.6,-0.5)--(-0.6,4.5)node[anchor=east]{$q$};
    \end{tikzpicture}
  \end{equation}
  and
  \begin{equation}
    \centering
    E_\infty = E_2 = H_\delta H_d E_0 = \begin{tikzpicture}[xscale={2},baseline={(X.base)}]
      \draw (0,4)node{0} (1,4)node{0} (2,4)node{0} (3,4)node{$\cdots$} (4,4)node{0};
      \draw (0,3)node{0} (1,3)node{0} (2,3)node{0} (3,3)node{$\cdots$} (4,3)node{0};
      \draw (0,2)node(X){0} (1,2)node{0} (2,2)node{0} (3,2)node{$\cdots$} (4,2)node{0};
      \draw (0,1)node{$\vdots$} (1,1)node{$\vdots$} (2,1)node{$\vdots$} (3,1)node{$\vdots$} (4,1)node{$\vdots$};
      \draw (0,0)node{$H^0(X, \cF^\perp)$} (1,0)node{$H^1(X, \cF^\perp)$} (2,0)node{$H^2(X, \cF^\perp)$} (3,0)node{$\cdots$} (4,0)node{$H^t(X, \cF^\perp)$};
      \draw[->] (-0.6,-0.5)--(4.5,-0.5)node[anchor=north]{$p$};
      \draw[->] (-0.6,-0.5)--(-0.6,4.5)node[anchor=east]{$q$};
    \end{tikzpicture}
  \end{equation}
  We have $H^i(K, D) = H^i(X, \cF^\perp)$ for $0 \le i \le t$.

  By comparing the results from the two spectral sequences,
    we have the desired result
    $H_{t-i}(X, \cF) \cong H^i(X, \cF^\perp)$
    for $0 \le i \le t-1$.

  It remains to show the following statements:
  \begin{enumerate}
    \item $0 \to \cC_0(X, \cF) \xrightarrow{f} \prod_{\sigma \in X(0)} \cC^0(\sigma, \cF) \to 0$ is exact.
    \item $0 \to \cC_i(X, \cF) \xrightarrow{g} \prod_{\sigma \in X(0)} \cC^i(\sigma, \cF) \xrightarrow{\delta} \prod_{\sigma \in X(1)} \cC^i(\sigma, \cF)$ is exact for $1 \le i \le t$.
    \item $\prod_{\sigma \in X(i)} \cC^{i+1}(\sigma, \cF) \xrightarrow{d} \prod_{\sigma \in X(i)} \cC^i(\sigma, \cF) \to 0$ is exact for $0 \le i \le t-1$.
    \item $0 \to \cC^t(X, \cF^\perp) \xrightarrow{} \prod_{\sigma \in X(t)} \cC^t(\sigma, \cF) \to 0$ is exact.
    \item $0 \to \cC^i(X, \cF^\perp) \xrightarrow{h} \prod_{\sigma \in X(i)} \cC^t(\sigma, \cF) \xrightarrow{d} \prod_{\sigma \in X(i)} \cC^{t-1}(\sigma, \cF)$ is exact for $0 \le i \le t-1$.
  \end{enumerate}
  We will specify the maps $f, g, h$.

  We show the first statement.
  Notice that $\cC^0(\sigma, \cF) = \cF_\sigma$.
  Let $f$ be the map $\a \mapsto \prod_{\sigma \in X(0)} \a(\sigma)$.
  It is straightforward to check that $f$ is an isomorphism.
  (Recall that in our context $\cC_0(X, \cF) = \cC^0(X, \cF)$.)

  We show the second statement.
  Let $g$ be the map $\a \mapsto \prod_{\sigma \in X(0)} \{\a(\tau)\}_{\tau \in X_{\ge \sigma}(i)}$.
  It is clear that $g$ is injective.
  It remains to show $\im g = \ker \delta$,
    specifically that $\im g \supseteq \ker \delta$.
  Recall that for $\b \in \prod_{\sigma \in X(0)} \cC^i(\sigma, \cF)$,
    $(\delta \b)(\rho) = \sum_{\sigma \precdot \rho} \b(\sigma)$.
  In this case, $\rho$ is a 1-cell,
    so there are exactly two 0-cells $\sigma, \sigma'$ with $\sigma, \sigma' \precdot \rho$.
  This implies that the functions $\b(\sigma)$ defined on $\tau \in X_{\ge \sigma}(i)$
    and $\b(\sigma')$ defined on $\tau \in X_{\ge \sigma'}(i)$
    agree on their overlap $\tau \in X_{\ge \rho}(i)$.
  Since the functions $\{\b(\sigma)\}_{\sigma \in X(0)}$ are consistent,
    we can glue them together to form a global section $\a$ defined on all of $X(i)$.
  It is clear that $\b = f(\a)$,
    and thus $\im g \supseteq \ker \delta$.

  We show the third statement.
  It suffices to show that
    $\cC^i(\sigma, \cF) \to \cC^{i+1}(\sigma, \cF)$,
    i.e. $\cF_\sigma \to \prod_{\sigma_{i+1} \in X_{\ge \sigma}(i+1)} \cF_{\sigma_{i+1}}$,
    is injective.
  This follows directly from the injectivity axiom on $\sigma$.

  We show the fourth statement.
  The statement is true as both $\cC^t(X, \cF^\perp)$ and $\prod_{\sigma \in X(t)} \cC^t(\sigma, \cF)$
  are the space of functions defined on $X(t)$,
    $\cC^t(X, \cF^\perp) = \prod_{\sigma \in X(t)} \cF^\perp_\sigma \cong \prod_{\sigma \in X(t)} \cF_\sigma = \prod_{\sigma \in X(t)} \cC^t(\sigma, \cF)$.

  We show the fifth statement.
  Notice that $\cC^i(X, \cF^\perp) = \prod_{\sigma \in X(i)} \cF^\perp_\sigma$.
  So all we need is to show is the chain complex
    $0 \to \cF^\perp_\sigma \xrightarrow{e_\sigma} C^t(\sigma, \cF) \xrightarrow{\partial} C^{t-1}(\sigma, \cF)$ is exact for each $\sigma \in X(i)$,
    where $e = \prod_{\sigma \in X(i)} e_\sigma$
      and $e_\sigma$ maps $c$ to $\{\res_{\sigma, \tau} c\}_{\tau \in X_{\ge \sigma}(t)}$.
  In fact, the exactness of such chain complex
    was discussed in \Cref{claim:poincare-0},
    where we take $\sigma$ as the cellular complex $X$.
  This concludes the proof for the case where $\cF$ is locally acyclic.

  For statements 3, 4, 5, it is helpful to keep in mind the following long exact sequence for each $i$-cell $\sigma$:
  \begin{equation}
    0
    \to \cF_\sigma
    \to \prod_{\sigma_{i+1} \in X_{\ge \sigma}(i+1)} \cF_{\sigma_{i+1}}
    \to \prod_{\sigma_{i+2} \in X_{\ge \sigma}(i+2)} \cF_{\sigma_{i+2}}
    \to ...
    \to \prod_{\sigma_{t} \in X_{\ge \sigma}(t)} \cF_{\sigma_{t}}
    \to \cF_\sigma^\perp
    \to 0
  \end{equation}
  which can also be written as
  \begin{equation}\label{eq:long-exact-sequence}
    0
    \to \cC^i(\sigma, \cF)
    \to \cC^{i+1}(\sigma, \cF)
    \to \cC^{i+2}(\sigma, \cF)
    \to ...
    \to \cC^{t}(\sigma, \cF)
    \to \cF_\sigma^\perp
    \to 0.
  \end{equation}
\end{proof}

\end{document}